\documentclass[a4paper,UKenglish]{lipics-v2016}
\usepackage{amsmath, amssymb, stmaryrd, mathpartir, commath, url, longtable,hyperref, paralist}
\usepackage{microtype}
\usepackage[dvipsnames]{xcolor}
\usepackage{csquotes}
\usepackage{cite}
\hypersetup{colorlinks=true,citecolor=blue}
\RequirePackage{subcaption}
\captionsetup{compatibility=false}
\usepackage{graphicx}
\usepackage{xspace}
\usepackage{dirtytalk}
\usepackage{mathtools}
\newcommand{\lch}{\lambda{}_{\text{ch}}\xspace}

\newcommand{\lact}{\lambda{}_{\text{act}}\xspace}
\newenvironment{syntax}
{\small\[\begin{array}{@{}lr@{\hspace{2px}}c@{\hspace{2px}}l}\ignorespaces}
{\end{array}\]\ignorespacesafterend}

\newenvironment{fake}[1]{\par\vspace{3pt}\noindent\textbf{#1}\itshape}{\normalfont\ignorespacesafterend\vspace{3pt}\par}

\usepackage{float}
\floatstyle{boxed}
\restylefloat{figure}
\usepackage{verbatim}

\newcommand{\jarg}[1]{\{ #1 \} \; \, }

\newcommand{\defeq}{\triangleq}
\newcommand\doubleplus{+\kern-1.3ex+\kern0.8ex}
\newcommand{\coalesceinner}[1]{\underline{#1}}
\newcommand{\coalesce}[2]{\abs{#1}^{#2}}
\newcommand{\cct}[1]{\mkwd{cct}(#1)}
\newcommand{\compat}{\asymp}
\newcommand{\app}{\,}

\newcommand{\termcol}{Blue}
\newcommand{\mkwd}[1]{\ensuremath{\mathbf{\mathsf{#1}}}}
\newcommand{\termconst}[1]{{\color{\termcol}{\mathsf{#1}}}}
\newcommand{\chan}[1]{\textsf{ChanRef}(#1)}
\newcommand{\one}{\mathbf{1}}

\newcommand{\letunit}[2]{#1 ; #2}
\newcommand{\const}[1]{\mkwd{#1}}
\newcommand{\rec}[3]{\termconst{rec}\app #1(#2) \app . \app #3}
\newcommand{\rectwo}[2]{\termconst{rec}\app #1(#2) \app . }
\newcommand{\inl}[1]{\termconst{inl} \app #1}
\newcommand{\inr}[1]{\termconst{inr} \app #1}
\newcommand{\caseof}[2]{\termconst{case}\:#1\:\{#2\}}
\newcommand{\caseofone}[1]{\termconst{case} \:#1 \: \{ }
\newcommand{\emptylist}{\texttt{[~]}}
\newcommand{\listty}[1]{\mkwd{List}(#1)}
\newcommand{\listterm}[1]{\textbf{[} #1 \textbf{]}}
\newcommand{\listcons}[2]{#1 :: #2}
\newcommand{\letin}[3]{\termconst{let} \app #1 = #2 \app \termconst{in} \app #3 }
\newcommand{\letintwo}[2]{\termconst{let} \app #1 = #2 \app \termconst{in} }
\newcommand{\listappend}[2]{#1 \doubleplus #2}
\newcommand{\variant}[1]{\langle #1 \rangle}

\newcommand{\epush}[1]{\termconst{Push}(#1)}
\newcommand{\epop}[1]{\termconst{Pop}(#1)}
\newcommand{\esome}[1]{\termconst{Some}(#1)}
\newcommand{\enone}{\termconst{None}}
\newcommand{\optty}[1]{\termconst{Option}(\textsf{#1})}
\newcommand{\operationty}[1]{\termconst{Operation}(\textsf{#1})}
\newcommand{\efflet}[3]{\termconst{let} \; #1 \Leftarrow #2 \; \termconst{in} \; #3}
\newcommand{\efflettwo}[2]{\termconst{let} \; #1 \Leftarrow #2 \; \termconst{in}}
\newcommand{\efflettwonoin}[2]{\termconst{let} \; #1 \Leftarrow #2}

\newcommand{\effreturn}[1]{\termconst{return} \app #1}

\newcommand{\lchcol}{violet}
\newcommand{\lchwd}[1]{{\color{\lchcol}{\mkwd{#1}}}}
\newcommand{\lchbuf}[2]{#1(#2)}
\newcommand{\lchfork}[1]{{\color{\lchcol}{\const{fork}}} \app #1}
\newcommand{\lchgive}[2]{{\color{\lchcol}{\const{give}}} \app #1 \app #2}
\newcommand{\lchtake}[1]{{\color{\lchcol}{\const{take}}} \app #1}
\newcommand{\lchnewch}{{\color{\lchcol}{\const{newCh}}}}
\newcommand{\lchchoose}[2]{\lchwd{choose} \app #1 \app #2}
\newcommand{\chanzero}{\mkwd{Chan}}

\newcommand{\lactcol}{RedOrange}
\newcommand{\lactwd}[1]{{\color{\lactcol}{\mkwd{#1}}}}
\newcommand{\lactto}[1]{\to^{#1}}
\newcommand{\lacttotwo}[2]{\to^{#1, #2}}
\newcommand{\lactsend}[2]{{\color{\lactcol}{\mkwd{send}}} \app #1 \app #2}
\newcommand{\lactrecv}{{\color{\lactcol}{\mkwd{receive}}}}
\newcommand{\lactspawn}[1]{{\color{\lactcol}{\mkwd{spawn}}} \app #1}
\newcommand{\lactself}{{\color{\lactcol}{\mkwd{self}}}}
\newcommand{\pid}[1]{\mkwd{ActorRef}(#1)}
\newcommand{\actor}[3]{\langle #1, #2, #3 \rangle}

\newcommand{\pidtwo}[2]{\mkwd{ActorRef}(#1, #2)}
\newcommand{\lactwait}[1]{\lactwd{wait} \app #1}

\newcommand{\config}[1]{\mathcal{#1}}
\newcommand{\Ex}{E}
\newcommand{\Cx}{G}

\newcommand{\teval}{\longrightarrow_{\textsf{M}}}
\newcommand{\ceval}{\longrightarrow}

\newcommand{\translateactty}[1]{\llbracket \app #1 \app \rrbracket}
\newcommand{\translateactval}[1]{\llbracket \app #1 \app \rrbracket}
\newcommand{\translateactterm}[2]{\llbracket \app #1 \app
\rrbracket \; #2 }
\newcommand{\translateacttermzero}[1]{\llbracket \app #1 \app
\rrbracket }
\newcommand{\translateactconfig}[1]{\llbracket \app #1 \app \rrbracket}

\newcommand{\translatechty}[1]{\llparenthesis \app #1 \app \rrparenthesis}
\newcommand{\translatechtysync}[1]{\llparenthesis \app #1 \app \rrparenthesis}
\newcommand{\translatechval}[1]{\llparenthesis \app #1 \app \rrparenthesis}
\newcommand{\translatechterm}[1]{\llparenthesis \app #1 \app \rrparenthesis }
\newcommand{\translatechconfig}[1]{\llparenthesis \app #1 \app \rrparenthesis}

\newcommand{\metadef}[1]{\textbf{\texttt{#1}}}

\newcommand{\seq}[1]{\overrightarrow{#1}}
\newcommand{\fv}{\mathsf{fv}}

\newenvironment{proofcase}[1]{\noindent\textbf{Case #1} \\
  \begin{longtable}[l]{lll}
  }
  {\end{longtable}}

\mprset {sep=1em}

\newenvironment{qedproof}{\begin{proof}}{\end{proof}}
\usepackage{tensor}

\newcommand{\selrecv}[1]{\lactwd{receive} \; \{ #1 \}}

\newcommand{\when}{\; \termconst{when} \; }

\newcommand{\matches}{\textsf{matches}}
\newcommand{\matchesany}{\textsf{matchesAny}}
\newcommand{\boolty}{\textsf{Bool}\xspace}
\newcommand{\ttrue}{\textsf{true}\xspace}
\newcommand{\ffalse}{\textsf{false}\xspace}

\DeclarePairedDelimiter\floor{\lfloor}{\rfloor}

\newcommand{\seltrans}[1]{\floor{#1}}
\newcommand{\seltransterm}[2]{\floor{#1} \app #2}

\newcommand{\find}[2]{\metadef{find}(#1, #2)}
\newcommand{\transloop}[2]{\metadef{loop}(#1, #2)}

\newcommand{\roll}[1]{\termconst{roll} \: #1}
\newcommand{\unroll}[1]{\termconst{unroll} \: #1}

\newcommand{\of}{\, {:} \,\xspace}
\newcommand{\nodups}[1]{\metadef{noDups}(#1)}
 
\usepackage{tikz}
\usetikzlibrary{matrix,fit}

\title{Mixing Metaphors: Actors as Channels and Channels as Actors (Extended Version)}
\titlerunning{Mixing Metaphors}
\author{Simon Fowler}
\author{Sam Lindley}
\author{Philip Wadler}
\affil{University of Edinburgh, Edinburgh, United Kingdom \\
\texttt{simon.fowler@ed.ac.uk, sam.lindley@ed.ac.uk, wadler@inf.ed.ac.uk}}
\authorrunning{Fowler, Lindley, and Wadler} 
\Copyright{S.\ Fowler, S.\ Lindley, and P.\ Wadler}
\subjclass{D.1.3 Concurrent Programming}
\keywords{Actors, Channels, Communication-centric Programming Languages}

\EventEditors{Peter M\"uller}
\EventNoEds{1}
\EventLongTitle{31st European Conference on Object-Oriented Programming
(ECOOP 2017)}
\EventShortTitle{ECOOP 2017}
\EventAcronym{ECOOP}
\EventYear{2017}
\EventDate{June 18--23, 2017}
\EventLocation{Barcelona, Spain}
\EventLogo{}
\SeriesVolume{74}
\ArticleNo{90}

\addtolength{\jot}{-3px}
\begin{document}
\maketitle              
\begin{abstract}
Channel- and actor-based programming languages are both used in
practice, but the two are often confused. Languages such as Go
provide anonymous processes which communicate using buffers or rendezvous
points---known as channels---while languages such as Erlang provide
addressable processes---known as actors---each with a single incoming message
queue.
The lack of a common representation makes it difficult to reason about
translations that exist in the folklore. We define a calculus
$\lch$ for typed asynchronous channels, and a calculus $\lact$ for
typed actors. We define translations from $\lact$ into $\lch$ and
$\lch$ into $\lact$ and prove that both are type- and
semantics-preserving.
We show that our approach accounts for synchronisation and selective
receive in actor systems and discuss future extensions to support guarded
choice and behavioural types.
\end{abstract}
\section{Introduction}
When comparing channels (as used by Go) and actors (as used by Erlang), one
runs into an immediate mixing of metaphors. The words themselves do not refer
to comparable entities!

In languages such as Go, anonymous processes pass messages via named channels,
whereas in languages such as Erlang, named processes accept messages from an
associated mailbox. A channel is either a named rendezvous point or buffer,
whereas an actor is a process. We should really be comparing named
processes (actors) with anonymous processes, and buffers tied to a
particular process (mailboxes) with buffers that can link any process to
any process (channels). Nonetheless, we will stick with the popular names,
even if it is as inapposite as comparing TV channels with TV actors.

\begin{figure}[b]
  \centering
  \begin{subfigure}[t]{0.4\textwidth}
    \centering
    \includegraphics[width=0.7\textwidth]{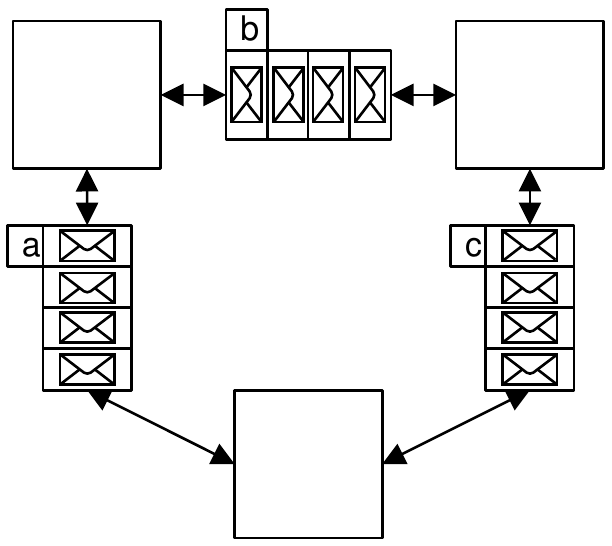}
    \caption{Asynchronous Channels}
    \label{fig:intro-chans}
  \end{subfigure}
  ~
  \begin{subfigure}[t]{0.4\textwidth}
    \centering
    \includegraphics[width=0.75\textwidth]{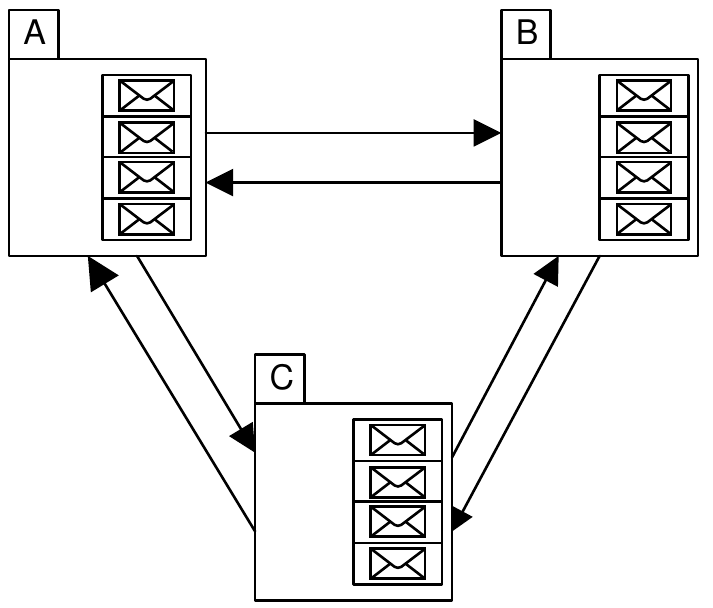}
    \caption{Actors}
    \label{fig:intro-actors}
  \end{subfigure}
  \qquad
\caption{Channels and Actors}\label{fig:chan-actor-diag}
\end{figure}

\noindent
Figure~\ref{fig:chan-actor-diag} compares asynchronous channels with
actors. On the left, three anonymous processes communicate via channels
named $a, b, c$. On the right, three processes named $A,
B, C$ send messages to each others' associated mailboxes.
Actors are necessarily asynchronous, allowing non-blocking sends and
buffering of received values, whereas channels can either be asynchronous
or synchronous (rendezvous-based). Indeed, Go provides both synchronous
\emph{and} asynchronous channels, and libraries such as
\texttt{core.async}~\cite{core-async} provide library support for asynchronous
channels.
However, this is not the only
difference: each actor has a single buffer which only it can read---its
\emph{mailbox}---whereas asynchronous channels are free-floating buffers that
can be read by any process with a reference to the channel.

Channel-based languages such as Go enjoy a firm basis in process calculi such
as CSP~\cite{hoare:csp} and the $\pi$-calculus~\cite{milner:picalc}. It is easy
to type channels, either with simple types (see~\cite{sangiorgi:picalc-book},
p.\ 231) or more
complex systems such as session
types~\cite{honda:dyadic, honda:primitives, gay:linearsessions}.
Actor-based languages such as Erlang are seen by many as the "gold standard"
for distributed computing due to their support for fault tolerance
through supervision hierarchies~\cite{armstrong:thesis, cesarini:scalable-erlang}.

Both models are popular with developers, with channel-based languages and
frameworks such as Go, Concurrent ML~\cite{reppy:cml}, and Hopac~\cite{hopac};
and actor-based languages and frameworks such as Erlang, Elixir, and Akka.

\subsection{Motivation}
This paper provides a formal account of actors and channels as implemented in
programming languages.
Our motivation for a formal account is threefold:
it helps clear up confusion; it clarifies results
that have been described informally by putting practice into theory; and it
provides a foundation for future research.

\subparagraph{Confusion.}
There is often confusion over the differences between channels and
actors. For example, the following questions appear on StackOverflow
and Quora respectively:
\begin{quote}
``If I wanted to port a Go library that uses Goroutines, would Scala be a
  good choice because its inbox/[A]kka framework is similar in nature to
	coroutines?''~\cite{scala-actors-similar-go}, and
\end{quote}
\begin{quote}
``I don't know anything about [the] actor pattern however I do know
	goroutines and channels in Go. How are [the] two related to each
	other?''~\cite{how-akka-different}
\end{quote}
In academic circles, the term \emph{actor} is often used
imprecisely.
For instance, Albert et al.~\cite{albert:clp-testing} refer to Go as
an actor language.
Similarly, Harvey~\cite{harvey:thesis} refers to his language Ensemble
as actor-based. Ensemble is a language specialised for writing
distributed applications running on heterogeneous platforms. It is
actor-based to the extent that it has lightweight, addressable,
single-threaded processes, and forbids co-ordination via shared
memory. However, Ensemble communicates using channels as opposed to
mailboxes so we would argue that it is channel-based (with actor-like
features) rather than actor-based.

\subparagraph{Putting practice into theory.}
The success of actor-based languages is largely due to their support for
\emph{supervision}.
A popular pattern for writing
actor-based applications is to arrange processes in \emph{supervision
hierarchies}~\cite{armstrong:thesis}, where \emph{supervisor} processes restart
child processes should they fail.
Projects such as Proto.Actor~\cite{proto-actor} emulate
actor-style programming in a channel-based language in an attempt to
gain some of the benefits, by associating queues with processes.
Hopac~\cite{hopac} is a channel-based library for F\#, based on
Concurrent ML~\cite{reppy:cml}. The documentation~\cite{hopac-actors} contains a
comparison with actors, including an implementation of a simple actor-based
communication model using Hopac-style channels, as well as an implementation of
Hopac-style channels using an actor-based communication model.
By comparing the two, this paper provides a formal model for the underlying
techniques, and studies properties arising from the translations.

\subparagraph{A foundation for future research.}
Traditionally, actor-based languages have had untyped mailboxes. More recent
advancements such as TAkka~\cite{he:typecasting-actors}, Akka
Typed~\cite{akka-typed}, and Typed Actors~\cite{typed-actors} have added types
to mailboxes in order to gain additional safety guarantees. Our formal model
provides a foundation for these innovations, characterises why na\"ively adding
types to mailboxes is problematic, and provides a core language for future
experimentation.

\subsection{Our approach}
We define two concurrent \emph{$\lambda$-calculi}, describing
\emph{asynchronous} channels and type-parameterised actors, define translations
between them, and then discuss
various extensions.

\subparagraph{Why the $\lambda$ calculus?}
Our common framework is that of a simply-typed concurrent $\lambda$-calculus: that
is, a $\lambda$-calculus equipping a term language with
primitives for communication and concurrency, as well as a language of
\emph{configurations} to model concurrent behaviour.
We work with the $\lambda$-calculus rather than a process calculus for two
reasons: firstly, the simply-typed $\lambda$-calculus has a well-behaved core
with a strong metatheory (for example, confluent reduction and strong
normalisation), as well as a direct propositions-as-types correspondence with
logic. We can therefore modularly extend the language, knowing which properties
remain; typed process calculi typically do not have such a well-behaved core.

Secondly, we are ultimately interested in functional programming
languages; the $\lambda$ calculus is the canonical choice for studying such
extensions.

\subparagraph{Why asynchronous channels?}
While actor-based languages must be asynchronous by design, channels may be
either synchronous (requiring a rendezvous between sender and receiver) or
asynchronous (where sending happens immediately).  In this paper, we
consider asynchronous channels since actors must be asynchronous,
and it is possible to emulate asynchronous channels using synchronous
channels~\cite{reppy:cml}.
We could adopt synchronous channels, use these to encode asynchronous
channels, and then do the translations. We elect not to since it complicates
the translations, and we argue that the distinction between synchronous and
asynchronous communication is not \emph{the} defining difference between the
two models.

\begin{figure}
\centering
\begin{subfigure}[t]{0.4\textwidth}
  \centering
  \begin{tikzpicture}
    \matrix (m) [matrix of math nodes,row sep=2em,column sep=7em,minimum width=1em]
    {
       P_1 & P_1 \\
       P_2 & P_2 \\
       P_3 & P_3 \\
    };
    \path
      (m-1-1) edge (m-1-2)
              edge (m-2-2)
              edge (m-3-2)
      (m-2-1) edge (m-1-2)
              edge (m-2-2)
              edge (m-3-2)
      (m-3-1) edge (m-1-2)
              edge (m-2-2)
              edge (m-3-2);

    \node (sender) [fit = (m-1-1) (m-3-1)] {};
    \node at (sender.north) [above] {\textsf{sender}};

    \node (receiver) [fit = (m-1-2) (m-3-2)] {};
    \node at (receiver.north) [above] {\textsf{receiver}};

  \end{tikzpicture}
  \caption{Channel}
  \label{fig:channel-endpoints}
\end{subfigure}
~
\begin{subfigure}[t]{0.4\textwidth}
  \centering
  \begin{tikzpicture}
    \matrix (m) [matrix of math nodes,row sep=2em,column sep=7em,minimum width=1em]
    {
       P_1 & P_1 \\
       P_2 & P_2 \\
       P_3 & P_3 \\
    };
    \path
      (m-1-1) edge (m-2-2)
      (m-2-1) edge (m-2-2)
      (m-3-1) edge (m-2-2);

    \node (sender) [fit = (m-1-1) (m-3-1)] {};
    \node at (sender.north) [above] {\textsf{sender}};

    \node (receiver) [fit = (m-1-2) (m-3-2)] {};
    \node at (receiver.north) [above] {\textsf{receiver}};

  \end{tikzpicture}
  \caption{Mailbox}
  \label{fig:mailbox-endpoints}
\end{subfigure}

\caption{Mailboxes as pinned channels}
\label{fig:mailboxes-as-pinned-channels}
\end{figure}

\subsection{Summary of results}

We identify four key differences between the models, which are exemplified by
the formalisms and the translations: process addressability, the restrictiveness
of communication patterns, the granularity of typing, and the ability to
control the order in which messages are processed.

\subparagraph{Process addressability.}
In channel-based systems, processes are \emph{anonymous}, whereas channels are
named. In contrast, in actor-based systems, processes are named.

\subparagraph{Restrictiveness of communication patterns.}
Communication over full-duplex channels is more liberal than communication via
mailboxes, as shown in Figure~\ref{fig:mailboxes-as-pinned-channels}.
Figure~\ref{fig:channel-endpoints} shows the communication patterns
allowed by a single channel: each process $P_i$ can use
the channel to communicate with every other process. Conversely,
Figure~\ref{fig:mailbox-endpoints} shows the communication patterns
allowed by a mailbox associated with process $P_2$: while any process
can send to the mailbox, only $P_2$ can read from it.
Viewed this way, it is apparent that the restrictions imposed on the
communication behaviour of actors are exactly those captured by Merro and
Sangiorgi's localised $\pi$-calculus~\cite{MerroS04}.

Readers familiar with actor-based programming may be wondering whether such a
characterisation is too crude, as it does not account for processing messages
out-of-order. Fear not---we show in \textsection{}\ref{sec:extensions} that our
minimal actor calculus can simulate this functionality.

Restrictiveness of communication patterns is not necessarily a bad thing; while
it is easy to distribute actors, \emph{delegation} of asynchronous channels is
more involved, requiring a distributed
algorithm~\cite{hu:sessionjava}. Associating mailboxes with addressable
processes also helps with structuring applications for
reliability~\cite{cesarini:scalable-erlang}.

\subparagraph{Granularity of typing.}
As a result of the fact that each process has a single incoming message queue,
mailbox types tend to be less precise; in particular, they are most commonly
variant types detailing all of the messages that can be received.
Na\"ively implemented, this gives rise to the \emph{type pollution problem},
which we describe further in \textsection{}\ref{sec:example:type-pollution}.

\subparagraph{Message ordering.}
Channels and mailboxes are ordered message queues, but there is no
inherent ordering between messages on two different channels. Channel-based
languages allow a user to specify from which channel a message should be received,
whereas processing messages out-of-order can be achieved in actor languages
using selective receive.

The remainder of the paper captures these differences both in the design of the
formalisms, and the techniques used in the encodings and extensions.

\subsection{Contributions and paper outline}

This paper makes five main contributions:

\begin{enumerate}
  \item A calculus $\lch$ with typed asynchronous channels
    (\textsection{}\ref{sec:lch}), and a calculus $\lact$ with
    type-parameterised actors (\textsection{}\ref{sec:lact}), based on the
    $\lambda$-calculus extended with communication primitives specialised to
    each model. We give a type system and operational semantics for each
    calculus, and precisely characterise the notion of progress that each
    calculus enjoys.
  \item A simple translation from $\lact$ into $\lch$
    (\textsection{}\ref{sec:lact-lch}), and a more involved
    translation from $\lch$ into $\lact$ (\textsection{}\ref{sec:lch-lact}),
    with proofs that both translations are type- and semantics-preserving.
    While the former translation is straightforward, it is
    \emph{global}, in the sense of Felleisen~\cite{felleisen:expressiveness}.
    While the latter is more involved, it is in fact \emph{local}. Our
    initial translation from $\lch$ to $\lact$ sidesteps type pollution by
    assigning the same type to each channel in the system.
  \item An extension of $\lact$ to support synchronous calls, showing how this
    can alleviate type pollution and simplify the translation from
    $\lch$ into $\lact$ (\textsection{}\ref{sec:extensions-sync}).
  \item An extension of $\lact$ to support Erlang-style selective receive, a
    translation from $\lact$ with selective receive into plain $\lact$, and
    proofs that the translation is type- and semantics-preserving
    (\textsection{}\ref{sec:extensions-selrecv}).
  \item An extension of $\lch$ with input-guarded choice
    (\textsection{}\ref{sec:extensions-choice}) and an outline of how
    $\lact$ might be extended with behavioural types
    (\textsection{}\ref{sec:extensions-behavioural-types}).
\end{enumerate}

\noindent
The rest of the paper is organised as follows:
\textsection{}\ref{sec:example} displays side-by-side two
implementations of a concurrent stack, one using channels and the
other using actors; \textsection{}\ref{sec:lch}--\ref{sec:extensions}
presents the main technical content;
\textsection{}\ref{sec:related-work} discusses related work; and
\textsection{}\ref{sec:conclusion} concludes.

 \section{Channels and actors side-by-side}
\label{sec:example}
\begin{figure}[t]
  \vspace{-1em}
  \begin{subfigure}[t]{0.47\textwidth}
    \small
    \begin{mathpar}
      \begin{array}{l}
        \metadef{chanStack}(\textit{ch}) \defeq \termconst{rec} \app \textit{loop}(\textit{st}) . \\
        \qquad \efflettwo{\textit{cmd}}{\lchtake{\textit{ch}}} \\
        \qquad \caseofone{\textit{cmd}} \\
        \qquad \quad \epush{v} \mapsto \textit{loop}(\listcons{v}{\textit{st}}) \\
        \qquad \quad \epop{\textit{resCh}} \mapsto \\
        \qquad \quad \quad \caseofone{\textit{st}} \\
        \qquad \quad \qquad \emptylist{} \mapsto
          \begin{aligned}[t]
            & \lchgive{(\enone{})}{\textit{resCh}}; \\
            & \textit{loop} \app \emptylist{}
          \end{aligned} \\
        \qquad \quad \qquad \listcons{x}{\textit{xs}} \mapsto
          \begin{aligned}[t]
            & \lchgive{(\esome{x})}{\textit{resCh}}; \\
            & \textit{loop} \app \textit{xs} \quad \}
          \end{aligned} \\
        \qquad  \} \\
           \\
        \metadef{chanClient}(\textit{stackCh}) \defeq \\
        \quad \lchgive{(\epush{5})}{\textit{stackCh}}; \\
        \quad \efflettwo{\textit{resCh}}{\lchnewch{}} \\
        \quad \lchgive{(\epop{\textit{resCh}})}{\textit{stackCh}}; \\
        \quad \lchtake{\textit{resCh}} \\
        \\
        \metadef{chanMain} \defeq \\
        \quad \efflettwo{\textit{stackCh}}{\lchnewch{}} \\
        \quad \lchfork{(\metadef{chanStack}(\textit{stackCh}) \app \emptylist{} ) }; \\
        \quad \metadef{chanClient}(\textit{stackCh})
      \end{array}
    \end{mathpar}
    \vspace{-1em}
    \caption{Channel-based stack}
    \label{fig:chan-stack-example}
  \end{subfigure}
  ~
  \begin{subfigure}[t]{0.5\textwidth}
    \small
    \[
      \begin{array}{l}
        \metadef{actorStack} \defeq \termconst{rec} \app \textit{loop}(st) . \\
        \qquad \efflettwo{\textit{cmd}}{\lactrecv} \\
        \qquad \caseofone{\textit{cmd}} \\
        \qquad \quad \epush{v} \mapsto \textit{loop}(\listcons{v}{st}) \\
        \qquad \quad \epop{\textit{\textit{resPid}}} \mapsto \\
        \qquad \quad \quad \caseofone{\textit{st}} \\
        \qquad \quad \qquad \emptylist{}  \mapsto
          \begin{aligned}[t]
            & \lactsend{(\enone{})}{\textit{resPid}}; \\
            & \textit{loop} \app \emptylist{}
          \end{aligned} \\
        \qquad \quad \qquad \listcons{x}{\textit{xs}} \mapsto
          \begin{aligned}[t]
            & \lactsend{(\esome{x})}{\textit{resPid}}; \\
          & \textit{loop} \app \textit{xs} \quad \}
        \end{aligned} \\
      \qquad \} \\
        \\
        \metadef{actorClient}(\textit{stackPid}) \defeq \\
        \quad \lactsend{(\epush{5})}{\textit{stackPid}}; \\
        \quad \efflettwo{\textit{selfPid}}{\lactself{}} \\
        \quad \lactsend{(\epop{\textit{selfPid}})}{\textit{stackPid}}; \\
        \quad \lactrecv{} \\
        \\
        \metadef{actorMain} \defeq \\
        \quad \efflettwo{stackPid}{\lactspawn{(\metadef{actorStack} \app \emptylist{}) } } \\
        \quad \metadef{actorClient}(\textit{stackPid})
      \end{array}
    \]
    \vspace{0.2em}
    \caption{Actor-based stack}
    \label{fig:actor-stack-example}
  \end{subfigure}
  \caption{Concurrent stacks using channels and actors}
  \label{fig:conc-stack-example}
\end{figure}

Let us consider the example of a concurrent stack. A concurrent stack carrying
values of type $A$ can receive a command to push a value onto the top of the
stack, or to pop a value and return it to the process making the
request.
Assuming a standard encoding of algebraic datatypes, we define
a type
$\operationty{A} = \termconst{Push}(A) \mid \termconst{Pop}(B)$ (where $B =
\chan{A}$ for channels, and $\pid{A}$ for actors) to describe
operations on the stack, and
$\optty{A} = \termconst{Some}(A) \mid \termconst{None}$ to handle
the possibility of popping from an empty stack.

Figure~\ref{fig:conc-stack-example} shows the stack implemented using
channels (Figure~\ref{fig:chan-stack-example}) and using actors
(Figure~\ref{fig:actor-stack-example}). Each implementation uses a
common core language based on the simply-typed $\lambda$-calculus
extended with recursion, lists, and sums.

At first glance, the two stack implementations seem remarkably similar. Each:
\begin{enumerate}
  \item Waits for a command
  \item Case splits on the command, and either:
    \begin{compactitem}
      \item Pushes a value onto the top of the stack, or;
      \item Takes the value from the head of the stack and returns it in a
        response message
      \end{compactitem}
    \item Loops with an updated state.
\end{enumerate}
The main difference is that \metadef{chanStack} is
parameterised over a channel \textit{ch}, and retrieves a value from the channel using
$\lchtake{\textit{ch}}$. Conversely, \metadef{actorStack} retrieves a
value from its mailbox using the nullary primitive $\lactrecv{}$.

Let us now consider functions which interact with the stacks.  The
\metadef{chanClient} function sends commands over the
\textit{stackCh} channel, and begins by pushing $5$ onto the stack.
Next, it creates a channel \textit{resCh} to be used to
receive the result and sends this in a request, before retrieving the
result from the result channel using $\lchwd{take}$.
In contrast, \metadef{actorClient} performs a similar set of steps,
but sends its process ID (retrieved using $\lactself{}$) in the request
instead of creating a new channel; the result is then retrieved from the
mailbox using $\lactrecv{}$.

\begin{figure}[t]
  \vspace{-1em}
  \begin{subfigure}[t]{0.48\textwidth}
    \small
    \[
    \begin{array}{l}
      \metadef{chanClient2}(\textit{intStackCh}, \textit{stringStackCh}) \defeq \\
      \quad \efflettwo{\textit{intResCh}}{\lchnewch{}} \\
      \quad \efflettwo{\textit{strResCh}}{\lchnewch{}} \\
      \quad \lchgive{(\epop{\textit{intResCh}})}{\textit{intStackCh}}; \\
      \quad \efflettwo{\textit{res1}}{\lchtake{\textit{intResCh}}} \\
      \quad \lchgive{(\epop{\textit{strResCh}})}{\textit{stringStackCh}}; \\
      \quad \efflettwo{\textit{res2}}{\lchtake{\textit{strResCh}}} \\
      \quad (\textit{res1}, \textit{res2})
    \end{array}
    \]
  \end{subfigure}
  ~
  \begin{subfigure}[t]{0.48\textwidth}
    \small
    \[
    \begin{array}{l}
      \metadef{actorClient2}(\textit{intStackPid}, \textit{stringStackPid}) \defeq \\
      \quad \efflettwo{\textit{selfPid}}{\lactself{}} \\
      \quad \lactsend{(\epop{\textit{selfPid}})}{\textit{intStackPid}}; \\
      \quad \efflettwo{\textit{res1}}{\lactrecv{}} \\
      \quad \lactsend{(\epop{\textit{selfPid}})}{\textit{stringStackPid}}; \\
      \quad \efflettwo{\textit{res2}}{\lactrecv{}} \\
      \quad (\textit{res1}, \textit{res2})
    \end{array}
    \]  \end{subfigure}
    \caption{Clients interacting with multiple stacks}
    \label{fig:client-example-2}
\end{figure}
\subparagraph{Type pollution.}\label{sec:example:type-pollution}
The differences become more prominent when considering clients which interact
with multiple stacks of different types, as shown in
Figure~\ref{fig:client-example-2}. Here, \metadef{chanClient2} creates new result channels
for integers and strings, sends requests for the results, and creates a pair of
type $(\optty{Int} \times \optty{String})$.
The \metadef{actorClient2} function attempts to do something similar, but
cannot create separate result channels. Consequently, the actor must be able to
handle messages either of type $\optty{Int}$ \emph{or} type $\optty{String}$,
meaning that the final pair has type
$(\optty{Int} + \optty{String}) \times (\optty{Int} + \optty{String})$.

Additionally, it is necessary to modify \metadef{actorStack} to use
the correct injection into the actor type when sending the result; for
example an integer stack would have to send a value
$\inl{(\esome{5})}$ instead of simply $\esome{5}$.  This \emph{type
  pollution} problem can be addressed through the use of
subtyping~\cite{he:typecasting-actors}, or synchronisation
abstractions such as futures~\cite{deboer:complete-future}.

 \section{$\lch$: A concurrent $\lambda$-calculus for channels}\label{sec:lch}

In this section we introduce $\lch$, a concurrent $\lambda$-calculus
extended with asynchronous channels. To concentrate on the core differences
between channel- and actor-style communication, we begin with minimal calculi;
note that these do not contain all features (such as lists, sums, and recursion)
needed to express the examples in \textsection{}\ref{sec:example}.

\subsection{Syntax and typing of terms}

\begin{figure}[t]
~Syntax
\begin{syntax}
  \text{Types} & A,B & ::= &
    \one \mid A \to B \mid \chan{A} \\
  \text{Variables and names} & \alpha & ::= & x \mid a \\
  \text{Values} & V, W & ::= & \alpha \mid \lambda x . M \mid () \\
  \text{Computations}  & L,M,N & ::=  & V\,W \\
                &       & \mid & \efflet{x}{M}{N} \mid \effreturn{V} \\
    & & \mid & \lchfork{M} \mid \lchgive{V}{W} \mid \lchtake{V} \mid \lchnewch{} \\
\end{syntax}
~Value typing rules
\hfill \framebox{$\Gamma \vdash V : A$} \\
{\footnotesize
\begin{mathpar}
\inferrule
[Var]
  { \alpha : A \in \Gamma}
  { \Gamma \vdash \alpha : A}

\inferrule
[Abs]
  {\Gamma,x:A \vdash M:B}
  {\Gamma \vdash \lambda x.M : A \to B}

\inferrule
[Unit]
  { }
  {\Gamma \vdash (): \one}
\end{mathpar}
}

~Computation typing rules
\hfill \framebox{$\Gamma \vdash M : A$} \\
{\footnotesize
\begin{mathpar}
\inferrule
[App]
  {\Gamma \vdash V : A \to B \\
   \Gamma \vdash W: A}
  {\Gamma \vdash V\,W: B}

\inferrule
[EffLet]
  { \Gamma \vdash M : A \\ \Gamma, x : A \vdash N : B }
  { \Gamma \vdash \efflet{x}{M}{N} : B }

\inferrule
[Return]
  { \Gamma \vdash V : A }
  { \Gamma \vdash \effreturn{V} : A }

\inferrule
[Give]
  { \Gamma \vdash V : A \\\\ \Gamma \vdash W : \chan{A} }
  { \Gamma \vdash \lchgive{V}{W} : \one }

\inferrule
[Take]
  {\Gamma \vdash V : \chan{A}}
  {\Gamma \vdash \lchtake{V} : A }

  \inferrule
[Fork]
  {\Gamma \vdash M : \one }
  {\Gamma \vdash \lchfork{M} : \one}

\inferrule
[NewCh]
  { }
  { \Gamma \vdash \lchnewch{} : \chan{A} }
\end{mathpar}
}
\caption{Syntax and typing rules for $\lch$ terms and values}
\label{fig:lch-syntax-typing}
\end{figure}

Figure~\ref{fig:lch-syntax-typing} gives the syntax and typing rules
of $\lch$, a $\lambda$-calculus based on fine-grain
call-by-value~\cite{levy:fine-grain-cbv}: terms are partitioned into
values and computations.
Key to this formulation are two constructs:
$\effreturn{V}$ represents a computation that has completed, whereas
$\efflet{x}{M}{N}$ evaluates $M$ to $\effreturn{V}$, substituting $V$ for $x$
in $M$.
Fine-grain call-by-value is convenient since it makes evaluation order
explicit and, unlike A-normal form~\cite{FlanaganSDF93}, is closed
under reduction.

Types consist of the unit type $\one$, function types $A \to B$, and
channel reference types $\chan{A}$ which can be used to communicate
along a channel of type $A$.
We let $\alpha$ range over variables $x$ and runtime names $a$.
We write $\letin{x}{V}{M}$ for $(\lambda x . M) \app V$ and
$\letunit{M}{N}$ for $\efflet{x}{M}{N}$, where $x$ is fresh.

\subparagraph{Communication and concurrency for channels.}
The $\lchgive{V}{W}$ operation sends value $V$ along channel $W$,
while $\lchtake{V}$ retrieves a value from a channel $V$. Assuming an
extension of the language with integers and arithmetic operators, we
can define a function $\metadef{neg}(c)$ which receives a number $n$
along channel $c$ and replies with the negation of $n$ as follows:
\[
\metadef{neg}(c) \defeq
     \efflet{n}{\lchtake{c}}{\efflet{\textit{negN}}{(-n)}{\lchgive{\textit{negN}}{\app c}}}
\]
The $\lchfork{M}$ operation spawns a new process to evaluate term
$M$. The operation returns the unit value, and therefore it is not possible to
interact with the process directly.
The $\lchnewch{}$ operation creates a new channel. Note that channel creation is
decoupled from process creation, meaning that a process can have access to
multiple channels.

\subsection{Operational semantics}
\subparagraph{Configurations.}
The concurrent behaviour of $\lch$ is given by a nondeterministic
reduction relation on \emph{configurations}~(Figure~\ref{fig:lch-eval-syntax}).
Configurations consist of parallel composition ($\config{C} \parallel
\config{D}$), restrictions ($(\nu a) \config{C}$), computations ($M$),
and buffers ($\lchbuf{a}{\seq{V}}$, where $\seq{V} = V_1 \cdot \ldots \cdot V_n$).

\subparagraph{Evaluation contexts.}
Reduction is defined in terms of evaluation contexts $E$, which are simplified due to
fine-grain call-by-value. We also define configuration contexts, allowing reduction
modulo parallel composition and name restriction.

\begin{figure}[t]
~Syntax of evaluation contexts and configurations
\begin{syntax}
\text{Evaluation contexts} & E & ::= & [~] \mid \efflet{x}{E}{M} \\
  \text{Configurations} & \config{C}, \config{D}, \config{E} & ::= & \config{C} \parallel \config{D}
  \mid (\nu a) \config{C} \mid \lchbuf{a}{\seq{V}} \mid M \\
\text{Configuration contexts} & G & ::= & [~] \mid G \parallel \config{C} \mid (\nu a) G \\
\end{syntax}
~Typing rules for configurations
\hfill \framebox{$\Gamma ; \Delta \vdash \config{C}$} \\
{\footnotesize
\begin{mathpar}
\inferrule
  [Par]
  { \Gamma ; \Delta_1 \vdash \config{C}_1 \\ \Gamma ; \Delta_2 \vdash \config{C}_2 }
  { \Gamma ; \Delta_1, \Delta_2 \vdash \config{C}_1 \parallel \config{C}_2 }

\inferrule
  [Chan]
  { \Gamma, a : \chan{A} ; \Delta, a {:} A \vdash \config{C} }
  { \Gamma ; \Delta \vdash (\nu a) \config{C} }

\inferrule
  [Buf]
  { (\Gamma \vdash V_i : A)_i }
  { \Gamma ; a : A \vdash \lchbuf{a}{\seq{V}} }

\inferrule
  [Term]
  { \Gamma \vdash M : \one }
  { \Gamma ; \cdot \vdash M }
\end{mathpar}
}
\caption{$\lch$ configurations and evaluation contexts}
\label{fig:lch-eval-syntax}
\end{figure}

\subparagraph{Reduction.}
Figure~\ref{fig:lch-semantics} shows the reduction rules for $\lch$.
Reduction is defined as a deterministic reduction on terms ($\teval$) and a
nondeterministic reduction relation on configurations ($\ceval$).
Reduction on configurations is defined modulo structural congruence
rules which capture scope extrusion and the commutativity and associativity of
parallel composition.
\subparagraph{Typing of configurations.}
To ensure that buffers are well-scoped and contain values of the correct type,
we define typing rules on configurations (Figure~\ref{fig:lch-eval-syntax}).
The judgement $\Gamma ; \Delta \vdash
\config{C}$ states that under environments $\Gamma$ and $\Delta$, $\config{C}$
is well-typed. $\Gamma$ is a typing environment for terms, whereas $\Delta$ is a
linear typing environment for configurations, mapping names $a$ to channel types
$A$. Linearity in $\Delta$ ensures that a configuration $\config{C}$ under a
name restriction $(\nu a) \config{C}$ contains exactly one buffer with name $a$.
Note that \textsc{Chan} extends both $\Gamma$ \emph{and} $\Delta$, adding an
(unrestricted) \emph{reference} into $\Gamma$ and the \emph{capability} to type a buffer into
$\Delta$.  \textsc{Par} states that $\config{C}_1 \parallel \config{C}_2$ is
typeable if $\config{C}_1$ and $\config{C}_2$ are typeable under disjoint linear
environments, and \textsc{Buf} states that under a term environment $\Gamma$ and
a singleton linear environment $a {:} A$, it is possible to type a buffer
$\lchbuf{a}{\seq{V}}$ if $\Gamma \vdash V_i{:}A$ for all $V_i \in \seq{V}$.  As
an example, $(\nu a) (\lchbuf{a}{\seq{V}})$ is well-typed, but $(\nu a)
(\lchbuf{a}{\seq{V}} \parallel \lchbuf{a}{\seq{W}})$ and $(\nu a)
(\effreturn{()})$ are not.
\subparagraph{Relation notation.}
Given a relation $R$, we write $R^+$ for its transitive closure, and $R^*$ for
its reflexive, transitive closure.
\subparagraph{Properties of the term language.}
Reduction on terms preserves typing, and pure terms enjoy progress.
We omit most proofs in the body of the paper which are mainly straightforward
inductions; selected full proofs can be found in the extended version~\cite{mm-extended}.
\begin{lemma}[Preservation ($\lch$ terms)]\label{lem:lch-term-pres}
If $\Gamma \vdash M : A$ and $M \teval M'$, then $\Gamma \vdash M' : A$.
\end{lemma}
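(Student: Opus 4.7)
The plan is to prove this by induction on the derivation of $M \teval M'$. Since the excerpt does not display the term reduction rules explicitly, I expect $\teval$ to consist of the two standard $\beta$-style rules for fine-grain call-by-value, namely $(\lambda x.M)\app V \teval M[V/x]$ and $\efflet{x}{\effreturn{V}}{N} \teval N[V/x]$, closed under the evaluation contexts $E ::= [~] \mid \efflet{x}{E}{M}$. The induction will therefore have two base cases (the two axioms) and one inductive case (reduction under a let-context).

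Before the main argument I would establish a substitution lemma: if $\Gamma, x : A \vdash M : B$ and $\Gamma \vdash V : A$, then $\Gamma \vdash M[V/x] : B$; and the corresponding statement for values. This is proved by a straightforward induction on the typing derivation of $M$, inspecting each rule in Figure~\ref{fig:lch-syntax-typing}. The only mildly interesting case is \textsc{Var}, where one case-splits on whether $\alpha$ is $x$ or another name; the rule \textsc{Abs} needs the usual Barendregt hygiene convention so that $x$ does not clash with the bound variable.

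For the base cases of the main induction: if $M = (\lambda x.N)\app V$ with $\Gamma \vdash M : B$, then inverting \textsc{App} and \textsc{Abs} gives $\Gamma, x : A \vdash N : B$ and $\Gamma \vdash V : A$, so the substitution lemma yields $\Gamma \vdash N[V/x] : B$, which is exactly the required typing for $M'$. The $\efflet{x}{\effreturn{V}}{N}$ case is analogous, inverting \textsc{EffLet} and \textsc{Return}. For the congruence case $\efflet{x}{N}{P} \teval \efflet{x}{N'}{P}$ arising from $N \teval N'$, I invert \textsc{EffLet} to get $\Gamma \vdash N : A$ and $\Gamma, x : A \vdash P : B$, apply the induction hypothesis to obtain $\Gamma \vdash N' : A$, and reassemble with \textsc{EffLet}.

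I do not anticipate any real obstacle: the calculus is in fine-grain call-by-value and the term-level reductions are entirely standard, so the proof reduces to the usual substitution-lemma plus case-analysis recipe. The only point requiring a little care is ensuring that the substitution lemma is stated and used at the right judgement (values versus computations), since $V$ in the $\beta$ rule is syntactically a value while the substitution is being performed inside a computation.
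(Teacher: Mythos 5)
Your proposal is correct and matches the paper's intent: the reduction rules you anticipated are exactly those in Figure~\ref{fig:lch-semantics} (the two $\beta$/let axioms plus closure under evaluation contexts), and the paper omits this proof precisely because it is the standard substitution-lemma-plus-induction argument you describe. No gaps.
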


\begin{lemma}[Progress ($\lch$ terms)]\label{lem:lch-term-progress}
  Assume $\Gamma$ is empty or only contains channel references $a_i {:}
\chan{A_i}$.
If $\Gamma \vdash M {:} A$, then either:
\begin{enumerate}
\item $M = \effreturn{V}$ for some value $V$, or
\item $M$ can be written $E[M']$, where $M'$ is a communication or concurrency
  primitive (i.e.,\ $\lchgive{V}{W}, \lchtake{V}, \lchfork{M}$, or
  $\lchnewch{}$), or
\item There exists some $M'$ such that $M \teval M'$.
\end{enumerate}
\end{lemma}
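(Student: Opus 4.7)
The plan is to proceed by induction on the typing derivation $\Gamma \vdash M : A$, using the restricted shape of $\Gamma$ to rule out the cases that would otherwise leave a term stuck.

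For the base cases on communication and concurrency primitives, namely \textsc{Give}, \textsc{Take}, \textsc{Fork}, and \textsc{NewCh}, I would take $E = [~]$, so that $M = E[M']$ with $M'$ itself a communication primitive, placing us immediately in case 2. The case \textsc{Return} yields $M = \effreturn{V}$ directly, which is case 1.

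The case \textsc{App} covers $M = V\,W$ with $\Gamma \vdash V : A \to B$. Here I would perform inversion on the value-typing rules for $V$. Because $\Gamma$ contains only channel references $a_i : \chan{A_i}$, $V$ cannot be a variable (no entry in $\Gamma$ has a function type, and channel names are ruled out similarly), and $V$ cannot be $()$ (which has type $\one$). Hence $V$ must be a $\lambda$-abstraction $\lambda x . N$, and $V\,W \teval N[W/x]$ by $\beta$-reduction, giving case 3. For \textsc{EffLet}, where $M = \efflet{x}{M_1}{M_2}$, I would apply the induction hypothesis to $M_1$ (still typed in $\Gamma$, so the hypothesis applies). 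There are three sub-cases: if $M_1 = \effreturn{V}$, then $M$ steps to $M_2[V/x]$ via the standard let-return rule; if $M_1 = E'[M']$ with $M'$ a communication primitive, then $M = (\efflet{x}{E'}{M_2})[M']$, which is of the required form $E[M']$ by extending the evaluation context; and if $M_1 \teval M_1'$, then $M \teval \efflet{x}{M_1'}{M_2}$ by congruence under the evaluation context $\efflet{x}{[~]}{M_2}$.

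The main subtlety is the \textsc{App} case, where ruling out a genuinely stuck term relies precisely on the restriction on $\Gamma$: without it, $V$ could be a free variable of function type and the term would be stuck in none of the three categories. The other potential pitfall is keeping the evaluation-context composition clean in the \textsc{EffLet} case, but fine-grain call-by-value pays off here by making the grammar of $E$ so narrow that composition is immediate.
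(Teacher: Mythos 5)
Your proposal is correct and follows essentially the same route as the paper's proof: induction on the typing derivation, with the communication primitives and \textsc{Return} handled immediately, \textsc{App} resolved by inversion to a $\beta$-redex, and \textsc{EffLet} split into the three subcases via the inductive hypothesis. You are in fact slightly more explicit than the paper in the \textsc{App} case, spelling out why the restriction on $\Gamma$ forces $V$ to be a $\lambda$-abstraction.
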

\begin{figure}[t]
~Reduction on terms
{\small
\[
\begin{array}{ccc}
  (\lambda x . M) \app V \teval M \{ V / x \} \qquad \quad &
  \efflet{x}{\effreturn{V}}{M} \teval M \{ V / x \} \qquad \quad &
  E[M_1] \teval E[M_2] \\
  & & (\text{if } M_1 \teval M_2)
\end{array}
\]}
\vspace{-1.5em}

~Structural congruence
{\small
\begin{mathpar}
\config{C} \parallel \config{D} \equiv \config{D} \parallel C

\config{C} \parallel (\config{D} \parallel \config{E}) \equiv (\config{C} \parallel D) \parallel \config{E}

\config{C} \parallel (\nu a)\config{D} \equiv (\nu a)(\config{C} \parallel \config{D}) \text{ if $a \not\in \fv(\config{C})$}

G[\config{C}] \equiv G[\config{D}] \text{ if $\config{C} \equiv \config{D}$}
\end{mathpar}}
~Reduction on configurations
{\small
\[
\begin{array}{lrcl}
\textsc{Give} & E[\lchgive{W}{a}] \parallel \lchbuf{a}{\seq{V}} & \ceval &
  E[\effreturn{()}] \parallel \lchbuf{a}{\seq{V} \cdot W} \\
\textsc{Take} & E[\lchtake{a}] \parallel \lchbuf{a}{W \cdot \seq{V}} & \ceval &
  E[\effreturn{W}] \parallel \lchbuf{a}{\seq{V}} \\
\textsc{Fork} & E[\lchfork{M}] & \ceval & E[\effreturn{()}] \parallel M \\
\textsc{NewCh} & E[\lchnewch{}] & \ceval & (\nu a) (E[\effreturn{a}] \parallel
\lchbuf{a}{\epsilon}) \qquad (a \text{ is a fresh name}) \\
\textsc{LiftM} & G[M_1] & \ceval & G[M_2] \qquad (\text{if } M_1 \teval M_2) \\
\textsc{Lift} & G[\config{C}_1] & \ceval & G[\config{C}_2] \qquad (\text{if }
\config{C}_1 \ceval \config{C}_2) \\
\end{array}
\]}
\caption{Reduction on $\lch$ terms and configurations} \label{fig:lch-semantics}
\end{figure}
\subparagraph{Reduction on configurations.}
Concurrency and communication is captured by reduction on configurations.
Reduction is defined modulo structural congruence rules, which capture
the associativity and commutativity of parallel composition, as well as the
usual scope extrusion rule.
The \textsc{Give} rule reduces $\lchgive{W}{a}$ in parallel with a
buffer $\lchbuf{a}{\seq{V}}$ by adding the value $W$ onto the end of
the buffer.
The \textsc{Take} rule reduces $\lchtake{a}$ in parallel with a
non-empty buffer by returning the first value in the buffer.
The \textsc{Fork} rule reduces $\lchfork{M}$ by spawning a new thread
$M$ in parallel with the parent process.
The \textsc{NewCh} rule reduces $\lchnewch{}$ by creating an empty
buffer and returning a fresh name for that buffer.

Structural congruence and reduction preserve the typeability of configurations.

\begin{lemma}\label{lem:lch-equiv-typeability}
If $\Gamma ; \Delta \vdash \config{C}$ and $\config{C} \equiv \config{D}$ for some configuration $\config{D}$, then $\Gamma ; \Delta \vdash \config{D}$.
\end{lemma}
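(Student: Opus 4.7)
The plan is to proceed by induction on the derivation of $\config{C} \equiv \config{D}$, handling each of the four generating rules in turn. Because structural congruence is symmetric (and the rules come in both directions), for each generating case I need to show that typeability is preserved in both directions; for the congruence-under-context rule I need a straightforward structural induction on $G$.

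For the commutativity case $\config{C}_1 \parallel \config{C}_2 \equiv \config{C}_2 \parallel \config{C}_1$, inversion on \textsc{Par} gives $\Gamma; \Delta_1 \vdash \config{C}_1$ and $\Gamma; \Delta_2 \vdash \config{C}_2$ with the linear context split as $\Delta_1, \Delta_2$; since linear environment composition is commutative I can reapply \textsc{Par} in the opposite order. Associativity is analogous: inversion twice, then reassemble with \textsc{Par} twice using the equivalent splitting $(\Delta_1, \Delta_2), \Delta_3 = \Delta_1, (\Delta_2, \Delta_3)$. The congruence case $G[\config{C}] \equiv G[\config{D}]$ is a short induction on $G$: for $G = [\,]$ it is immediate from the IH, for $G = G' \parallel \config{E}$ I invert \textsc{Par}, apply the IH to the sub-configuration in $G'$, and reassemble, and similarly for $G = (\nu a) G'$ using \textsc{Chan}.

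The delicate case is scope extrusion, $\config{C} \parallel (\nu a) \config{D} \equiv (\nu a)(\config{C} \parallel \config{D})$ with $a \notin \fv(\config{C})$. Going left-to-right: inverting \textsc{Par} gives $\Gamma; \Delta_1 \vdash \config{C}$ and $\Gamma; \Delta_2 \vdash (\nu a) \config{D}$, and inverting \textsc{Chan} on the latter gives $\Gamma, a{:}\chan{A}; \Delta_2, a{:}A \vdash \config{D}$. I then need to know that $\Gamma, a{:}\chan{A}; \Delta_1 \vdash \config{C}$, which requires a standard weakening lemma for the term environment $\Gamma$; the side condition $a \notin \fv(\config{C})$ is exactly what licenses adding $a$ to $\Gamma$ without affecting any subterm of $\config{C}$. (No weakening on $\Delta$ is needed because $a$ goes into $\Delta_2$, not $\Delta_1$.) Then \textsc{Par} gives $\Gamma, a{:}\chan{A}; \Delta_1, \Delta_2, a{:}A \vdash \config{C} \parallel \config{D}$, and \textsc{Chan} delivers the result. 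The right-to-left direction is symmetric, using strengthening of $\Gamma$ on the $\config{C}$ side, again justified by $a \notin \fv(\config{C})$.

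The main obstacle is the scope-extrusion case, and specifically the need for an auxiliary weakening/strengthening lemma for the term typing judgement $\Gamma \vdash V : A$ and $\Gamma \vdash M : A$ (with respect to channel-reference bindings). This is routine induction on the typing derivation but must be stated explicitly since otherwise the step passing from $\Gamma; \Delta_1 \vdash \config{C}$ to $\Gamma, a{:}\chan{A}; \Delta_1 \vdash \config{C}$ is not directly available from the configuration typing rules. Apart from this, the proof is entirely bookkeeping over the partitioning of the linear environment $\Delta$.
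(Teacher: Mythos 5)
Your proof is correct and follows essentially the same route as the paper's: a case analysis over the four congruence rules, with the scope-extrusion case resolved by inverting \textsc{Par} and \textsc{Chan} and appealing to weakening of the term environment $\Gamma$ (licensed by $a \notin \fv(\config{C})$), and the context rule handled by induction. You are in fact slightly more careful than the paper in flagging that symmetry of $\equiv$ forces the reverse (strengthening) direction of scope extrusion to be checked as well, which the paper's write-up leaves implicit.
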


\begin{theorem}[Preservation ($\lch$ configurations)]\label{thm:lch-preservation}
If $\Gamma ; \Delta \vdash \config{C}_1$ and $\config{C}_1 \ceval \config{C}_2$ then $\Gamma ; \Delta \vdash \config{C}_2$.
\end{theorem}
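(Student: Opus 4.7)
The plan is to proceed by induction on the derivation of $\config{C}_1 \ceval \config{C}_2$, casing on the last rule used. By Lemma~\ref{lem:lch-equiv-typeability}, structural congruence preserves typeability, so we may freely rearrange configurations to match the shape demanded by each reduction rule. In each case we invert the typing derivation of $\config{C}_1$ to extract the relevant sub-derivations, then reassemble a typing derivation for $\config{C}_2$ using the same environments $\Gamma$ and $\Delta$.

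For \textsc{Give}, inversion on \textsc{Par} splits $\Delta$ as $\Delta_1, a{:}A$; inversion on \textsc{Term} and on the evaluation context $E$ gives $\Gamma \vdash W : A$, and inversion on \textsc{Buf} gives $\Gamma \vdash V_i : A$ for each $V_i$ in the buffer. Reassembling, the extended buffer $\lchbuf{a}{\seq{V} \cdot W}$ is still typeable by \textsc{Buf}, and $E[\effreturn{()}]$ is typeable at $\one$ since $\lchgive{W}{a}$ had type $\one$ in the hole. The \textsc{Take} case is symmetric, using the fact that the typing of $\lchbuf{a}{\seq{V}}$ gives typing for each individual $V_i$. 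For \textsc{Fork}, inversion yields $\Gamma \vdash \lchfork{M} : \one$ with $\Gamma \vdash M : \one$; since $M$ carries no linear context, \textsc{Par} applied with $\Delta_2 = \cdot$ splits correctly, and $E[\effreturn{()}]$ is again well-typed. For \textsc{NewCh}, we type $(\nu a)(E[\effreturn{a}] \parallel \lchbuf{a}{\epsilon})$ using \textsc{Chan} to introduce $a{:}\chan{A}$ in $\Gamma$ and $a{:}A$ in $\Delta$, then \textsc{Par} to split $\Delta$ as $\Delta, a{:}A$ between the term (which now uses $a$ in $\Gamma$) and the empty buffer (which consumes $a{:}A$ linearly).

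The \textsc{LiftM} case reduces to Lemma~\ref{lem:lch-term-pres} (term preservation), applied to the single computation sitting in the configuration context $G$; this requires a straightforward auxiliary observation that configuration typing is compositional with respect to $G$, so that replacing a well-typed term by another of the same type yields a well-typed configuration. The \textsc{Lift} case similarly uses the induction hypothesis on the sub-configuration inside $G$, together with the same compositionality of configuration typing under $G$.

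The main obstacle is the bookkeeping around the linear environment $\Delta$ in the \textsc{NewCh} and \textsc{Give}/\textsc{Take} cases: one must be careful that the capability $a{:}A$ in $\Delta$ ends up exactly at the buffer and not at the term, while the unrestricted reference $a{:}\chan{A}$ is available in $\Gamma$ wherever $a$ appears in an evaluation context. Once that splitting is handled properly by inversion on \textsc{Par} and \textsc{Chan}, the remaining cases are routine. A small generalisation that makes the induction go through cleanly is to prove, as a preliminary lemma, that if $\Gamma;\Delta \vdash G[\config{C}]$ then $\Gamma';\Delta' \vdash \config{C}$ for some $\Gamma' \supseteq \Gamma$ and $\Delta'$ arising from the $\nu$-binders in $G$, and conversely that replacing $\config{C}$ with any $\config{C}'$ typeable under $\Gamma';\Delta'$ preserves typing of the whole; this handles \textsc{LiftM} and \textsc{Lift} uniformly.
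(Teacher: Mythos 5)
Your proposal is correct and follows essentially the same route as the paper's proof: induction on the derivation of $\config{C}_1 \ceval \config{C}_2$, with inversion on \textsc{Par}/\textsc{Term}/\textsc{Buf}/\textsc{Chan} to extract sub-derivations and recomposition for each communication rule, an evaluation-context replacement lemma for typing $E[\effreturn{()}]$ and friends, Lemma~\ref{lem:lch-term-pres} for \textsc{LiftM}, and the induction hypothesis for \textsc{Lift}. The auxiliary compositionality lemma for configuration contexts $G$ that you propose is exactly the (implicit) replacement-under-$G$ reasoning the paper relies on for the lifting cases.
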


\subsection{Progress and canonical forms}

While it is possible to prove deadlock-freedom in systems with more discerning
type systems based on linear logic~\cite{wadler:propositions-as-sessions,
lindley:semantics} or those using channel priorities~\cite{padovani:deadlock-free-programs},
more liberal calculi such as $\lch$ and $\lact$ allow deadlocked
configurations.
We thus define a form of progress which does not preclude deadlock;
to help with proving a progress result, it is useful to consider the
notion of a \emph{canonical form} in order to allow us to reason about the configuration
as a whole.

\begin{definition}[Canonical form ($\lch$)]
  A configuration $\config{C}$ is in \emph{canonical
  form} if it can be written
  $(\nu a_1) \ldots (\nu a_n)(M_1 \parallel
      \ldots \parallel M_m \parallel \lchbuf{a_1}{\seq{V_1}} \parallel \ldots
  \parallel \lchbuf{a_n}{\seq{V_n}})$.
\end{definition}
Well-typed open configurations can be written in a form similar to canonical
form, but without bindings for names already in the environment. An immediate
corollary is that well-typed closed configurations can always be written in a
canonical form.

\begin{lemma}\label{lem:lch-semi-canon-form}
  If $\Gamma ; \Delta \vdash \config{C}$ with $\Delta = a_1 : A_1, \ldots, a_k :
  A_k$, then there exists a $\config{C'} \equiv \config{C}$ such that
  $\config{C'} = (\nu a_{k+1}) \ldots (\nu a_n) (M_1 \parallel \ldots \parallel M_m
  \parallel \lchbuf{a_1}{\seq{V_1}} \parallel \ldots \parallel \lchbuf{a_n}{\seq{V_n}})$.
\end{lemma}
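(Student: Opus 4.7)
The plan is to proceed by induction on the derivation of the typing judgement $\Gamma ; \Delta \vdash \config{C}$, exploiting the structural congruence rules (scope extrusion, associativity and commutativity of parallel composition) to pull all $\nu$-bindings to the outside and to group buffers together, separately from the term threads $M_i$. We rely implicitly on $\alpha$-conversion so that every scope-extrusion step respects the side condition $a \notin \fv(\config{C})$.

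The base cases are immediate. For \textsc{Term}, the configuration is a single computation $M$, which already has the canonical shape (zero restrictions, one thread, zero buffers). For \textsc{Buf}, the configuration is a single buffer $\lchbuf{a}{\seq{V}}$ with $\Delta = a : A$, which again already has the canonical shape (zero restrictions, zero threads, one buffer whose name matches the required $a_1$).

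For \textsc{Chan}, we have $\config{C} = (\nu a)\config{C}_0$ with $\Gamma, a : \chan{A} ; \Delta, a : A \vdash \config{C}_0$. Applying the induction hypothesis to $\config{C}_0$ with the extended linear environment $\Delta, a : A$ produces a canonical form in which $a$ appears as one of the leading $a_1, \ldots, a_{k+1}$ and the corresponding buffer $\lchbuf{a}{\seq{V}}$ is present; prepending the outer $(\nu a)$ and reordering so that $a$ slides into position $k+1$ gives the required shape. For \textsc{Par}, we apply the induction hypothesis to each sub-derivation, use $\alpha$-conversion on the bound names of each canonical subconfiguration to ensure disjointness, and then repeatedly apply scope extrusion to pull all $\nu$-bindings to the outside. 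The remaining body is a parallel composition of threads and buffers, which by associativity and commutativity can be rearranged into the required $M_1 \parallel \ldots \parallel M_m \parallel \lchbuf{a_1}{\seq{V_1}} \parallel \ldots \parallel \lchbuf{a_n}{\seq{V_n}}$ order, with the leading $a_1, \ldots, a_k$ being exactly the names in $\Delta$.

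The main obstacle is the \textsc{Par} case, specifically the bookkeeping needed to verify that every bound name $a_j$ for $j > k$ is paired with \emph{exactly one} buffer $\lchbuf{a_j}{\seq{V_j}}$. This follows from the linearity of $\Delta$: when we combine $\Delta_1$ and $\Delta_2$ in \textsc{Par}, their linear capabilities are disjoint, and every name introduced by a surrounding \textsc{Chan} must be consumed by exactly one \textsc{Buf} within its scope. A short auxiliary observation, provable by a straightforward sub-induction, is that for any well-typed $\Gamma ; \Delta \vdash \config{C}$ the multiset of buffer names in $\config{C}$ equals $\dom(\Delta)$; once this is in place the canonical form follows directly from the structural congruence manipulations described above.
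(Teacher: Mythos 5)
The paper omits the proof of this lemma (it falls under the ``mainly straightforward inductions'' that are not spelled out), so there is no official proof to compare against step by step; your induction on the typing derivation, pulling $\nu$-binders outward by scope extrusion (with implicit $\alpha$-conversion for the side condition) and regrouping the leaves by commutativity and associativity of $\parallel$, is exactly the intended argument, and your treatment of the \textsc{Term}, \textsc{Buf}, \textsc{Chan}, and \textsc{Par} cases is sound. One repair is needed in the auxiliary invariant you invoke for the \textsc{Par} bookkeeping: the claim that ``the multiset of buffer names in $\config{C}$ equals the domain of $\Delta$'' is false as literally stated, because a buffer lying under a restriction of its own name contributes a name that does not appear in $\Delta$ --- for instance $\cdot \,;\, \cdot \vdash (\nu a)(\lchbuf{a}{\seq{V}})$ contains a buffer named $a$ while $\Delta$ is empty. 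The invariant your sub-induction actually establishes, and the one you need, is that the buffer names occurring \emph{free} (i.e.\ not in the scope of a restriction of that name) form precisely the domain of $\Delta$, each occurring exactly once; equivalently, once all $\nu$-binders have been extruded to the top, the body contains exactly one buffer for each name in $\Delta$ and exactly one for each bound name. With that restatement the linearity argument in the \textsc{Par} case goes through as you describe. A cosmetic point: in the \textsc{Chan} case no reordering is required, since prepending the outer $(\nu a)$ to the canonical form of the body already places $a$ as the outermost binder, i.e.\ in position $k+1$.
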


\begin{corollary}
  If $\cdot ; \cdot \vdash \config{C}$, then there exists some $\config{C'}
  \equiv \config{C}$ such that $\config{C'}$ is in canonical form.
\end{corollary}
 Armed with a canonical form, we can now state that
 the only situation in which a well-typed closed configuration
 cannot reduce further is if all threads are either blocked or fully evaluated.
 Let a \emph{leaf configuration} be a configuration without subconfigurations,
 i.e., a term or a buffer.

\begin{theorem}[Weak progress ($\lch$
  configurations)]\label{thm:lch-config-progress} \hfill \\
  Let $\cdot ; \cdot \vdash \config{C}$, $\config{C} \not\ceval$, and let
  $\config{C'} = (\nu a_1) \ldots (\nu a_n)(M_1
  \parallel \ldots \parallel M_m \parallel \lchbuf{a_1}{\seq{V_1}} \parallel
  \ldots \lchbuf{a_n}{\seq{V_n}})$ be a canonical form of $\config{C}$. Then every
  leaf of $\config{C}$ is either:

  \begin{enumerate}
    \item A buffer $\lchbuf{a_i}{\seq{V_i}}$;
    \item A fully-reduced term of the form $\effreturn{V}$, or;
    \item A term of the form $E[\lchtake{a_i}]$, where $\seq{V_i} = \epsilon$.
  \end{enumerate}
\end{theorem}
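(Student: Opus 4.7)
The plan is to reduce the theorem to the term-level progress lemma (Lemma~\ref{lem:lch-term-progress}) applied to each thread in the canonical form, and then rule out the remaining cases using the hypothesis $\config{C} \not\ceval$ together with the configuration reduction rules. By the corollary following Lemma~\ref{lem:lch-semi-canon-form} a canonical form $\config{C'} \equiv \config{C}$ exists; since structural congruence preserves typeability (Lemma~\ref{lem:lch-equiv-typeability}) and is built into $\ceval$, I may work directly with $\config{C'}$. The leaves of $\config{C'}$ are the buffers $\lchbuf{a_i}{\seq{V_i}}$ and the threads $M_j$; buffer leaves immediately satisfy clause~1, so it suffices to classify each $M_j$.

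For each $M_j$, inversion of \textsc{Par} and \textsc{Term} together with the outer name restrictions gives a typing $\Gamma \vdash M_j : \one$ where $\Gamma$ contains only channel references $a_1 : \chan{A_1}, \ldots, a_n : \chan{A_n}$. Lemma~\ref{lem:lch-term-progress} then tells us that $M_j$ is either (i) of the form $\effreturn{V}$, (ii) of the form $E[M']$ with $M'$ a communication or concurrency primitive, or (iii) admits a term-level reduction $M_j \teval M_j'$. Case~(iii) is excluded by \textsc{LiftM} together with $\config{C} \not\ceval$, and case~(i) is exactly clause~2 of the theorem.

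For case~(ii) I analyse the primitive $M'$. If $M' = \lchfork{N}$ or $M' = \lchnewch{}$, then \textsc{Fork} or \textsc{NewCh} fires unconditionally inside the configuration context formed by the surrounding restrictions and parallel compositions, contradicting $\config{C} \not\ceval$. If $M' = \lchgive{W}{\alpha}$ or $M' = \lchtake{\alpha}$, then by typing $\alpha$ must be one of the channel names $a_i$ (the term environment contains no other names and the term is closed). Linearity of the configuration environment $\Delta$, combined with \textsc{Chan} and \textsc{Buf}, guarantees that each $a_i$ is matched by exactly one buffer $\lchbuf{a_i}{\seq{V_i}}$ in the canonical form. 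Hence \textsc{Give} always fires for $\lchgive{W}{a_i}$, and \textsc{Take} fires for $\lchtake{a_i}$ whenever $\seq{V_i} \ne \epsilon$; the only situation surviving the contradiction with $\config{C} \not\ceval$ is $M_j = E[\lchtake{a_i}]$ with $\seq{V_i} = \epsilon$, which is clause~3.

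The main obstacle is the bookkeeping in case~(ii) showing that every channel name appearing in a blocking primitive corresponds to exactly one buffer in the canonical form: this is where the subtle split between the unrestricted environment $\Gamma$ (carrying the \emph{reference}) and the linear environment $\Delta$ (carrying the \emph{capability} to own a buffer) pays off, since without linearity one could neither guarantee uniqueness of the matching buffer nor rule out ill-formed configurations such as $(\nu a)(\effreturn{()})$. Everything else is a routine case analysis on the primitives together with an application of \textsc{Lift} to promote local reductions to reductions of $\config{C'}$, which is equivalent to $\config{C}$ and therefore reducible exactly when $\config{C}$ is.
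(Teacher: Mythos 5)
Your proposal is correct and follows essentially the same route as the paper's proof: apply the term-level progress lemma to each thread of the canonical form, discard the $\teval$ case via $\config{C}\not\ceval$, observe that $\lchwd{fork}$ and $\lchwd{newCh}$ always reduce, and resolve $\lchwd{give}$/$\lchwd{take}$ by noting that closedness forces the blocking name to be some $\nu$-bound $a_i$ with a matching buffer, leaving only the empty-buffer $\lchwd{take}$ case. The extra remarks on linearity of $\Delta$ guaranteeing a unique matching buffer are a sound elaboration of what the paper leaves implicit in the phrase ``as per the canonical form.''
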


\begin{proof}
  By Lemma~\ref{lem:lch-term-progress}, we know each $M_i$ is either of the
  form $\effreturn{V}$, or can be written $E[M']$ where $M'$ is a communication
  or concurrency primitive. It cannot be the case that $M' = \lchfork{N}$ or $M'
  = \lchnewch{}$, since both can reduce.
  Let us now consider $\lchwd{give}$ and $\lchwd{take}$, blocked on a variable
  $\alpha$. As we are considering closed configurations, a blocked term must be blocked on a
  $\nu$-bound name $a_i$, and as per the canonical form, we have that there exists some buffer
  $\lchbuf{a_i}{\seq{V_i}}$. Consequently, $\lchgive{V}{a_i}$ can always reduce
  via \textsc{Give}.
  A term $\lchtake{a_i}$ can reduce by \textsc{Take} if $\seq{V_i} = W \cdot
  \seq{V_i'}$; the only remaining case is where $\seq{V_i} =
  \epsilon$, satisfying (3).
\end{proof}

\begin{figure}[t]
~Syntax
\begin{syntax}
\text{Types} & A,B,C & ::=  & \one \mid A \lactto{C} B \mid \pid{A} \\ 
\text{Variables and names} & \alpha & ::= & x \mid a \\
\text{Values} & V, W & ::= & \alpha \mid \lambda x . M \mid () \\
\text{Computations}  & L,M,N & ::=  & V\,W \\
                &       & \mid & \efflet{x}{M}{N} \mid \effreturn{V} \\
  & & \mid & \lactspawn{M} \mid \lactsend{V}{W} \mid \lactrecv \mid \lactself{} \\
\end{syntax}
\vspace{-0.5em}
~Value typing rules \hfill \framebox{$\Gamma \vdash V : A$}
{ \footnotesize
\begin{mathpar}
\inferrule
[Var]
  { \alpha : A \in \Gamma }
  { \Gamma \vdash \alpha : A}

\inferrule
[Abs]
  {\Gamma,x:A \mid C \vdash M:B}
  {\Gamma \vdash \lambda x.M : A \lactto{C} B}

\inferrule
[Unit]
  { }
  {\Gamma \vdash (): \one}
\end{mathpar}
}

~Computation typing rules \hfill \framebox{$\Gamma \mid B \vdash M : A$}
{\footnotesize
\begin{mathpar}
\inferrule
[App]
  {\Gamma \vdash V : A \lactto{C} B \\\\
   \Gamma \vdash W: A}
  {\Gamma \mid C \vdash V \app W: B}

\inferrule
[EffLet]
  { \Gamma \mid C \vdash M : A \\\\ \Gamma, x : A \mid C \vdash N : B}
  { \Gamma \mid C \vdash \efflet{x}{M}{N} : B}

\inferrule
[EffReturn]
  { \Gamma \vdash V : A }
  { \Gamma \mid C \vdash \effreturn{V} : A}

\inferrule
[Send]
  {\Gamma \vdash V : A \\\\ \Gamma \vdash W : \pid{A} }
  {\Gamma \mid C \vdash \lactsend{V}{W} : \one }

\inferrule
[Recv]
  { }
  {\Gamma \mid A \vdash \lactrecv{} : A }

\inferrule
[Spawn]
  {\Gamma \mid A \vdash M : \one }
  {\Gamma \mid C \vdash \lactspawn{M} : \pid{A}}

\inferrule
[Self]
  { }
  { \Gamma \mid A \vdash \lactself{} : \pid{A}}
\end{mathpar}
}
\caption{Syntax and typing rules for $\lact$}\label{fig:lact-syntax-typing}
\end{figure}

\section{$\lact$: A concurrent $\lambda$-calculus for actors}\label{sec:lact}

In this section, we introduce $\lact$, a core language describing actor-based
concurrency.
There are many variations of actor-based languages (by the taxonomy
of De Koster et al\.,~\cite{dekoster:actor-taxonomy}, $\lact$ is \emph{process-based}), but each have named
processes associated with a mailbox.

Typed channels are well-established, whereas typed actors are less so, partly
due to the type pollution problem.
Nonetheless, Akka Typed~\cite{akka-typed} aims to replace untyped Akka actors, so
studying a typed actor calculus is of practical relevance.

Following Erlang, we provide an explicit
$\lactrecv$ operation to allow an actor to retrieve a
message from its mailbox: unlike $\lchwd{take}$ in $\lch$, $\lactrecv{}$
takes no arguments, so it is necessary to
use a simple \emph{type-and-effect system}~\cite{gifford:effects}.
We treat mailboxes as a FIFO queues to keep $\lact$ as minimal as possible, as
opposed to considering behaviours or selective receive. This
is orthogonal to the core model of communication, as we show in
\textsection{}\ref{sec:extensions-selrecv}.

\subsection{Syntax and typing of terms}

Figure~\ref{fig:lact-syntax-typing} shows the syntax and typing rules for
$\lact$.
As with $\lch$, $\alpha$ ranges over variables and names. $\pid{A}$ is an
\emph{actor reference} or process ID, and allows messages to be sent to an
actor.  As for communication and concurrency primitives, $\lactspawn{M}$ spawns
a new actor to evaluate a computation $M$; $\lactsend{V}{W}$ sends a value $V$
to an actor referred to by reference $W$; $\lactrecv{}$ receives a value from
the actor's mailbox; and $\lactself{}$ returns an actor's own process ID.

Function arrows $A \lactto{C} B$
are annotated with a type $C$ which denotes the type of the mailbox of the actor
evaluating the term. As an example, consider a function which receives an
integer and converts it to a string (assuming a function
\metadef{intToString}):
\[
  \metadef{recvAndShow} \defeq \lambda () . \efflet{x}{\lactrecv{}}{\metadef{intToString}(x)}
\]
Such a function would have type $\one \lactto{\textit{Int}}
\textit{String}$, and as an example would not be typeable for an actor that could
only receive booleans.
Again, we work in the setting of fine-grain call-by-value; the distinction
between values and computations is helpful when reasoning about the metatheory.
We have two typing judgements: the standard judgement on values $\Gamma \vdash
V : A$, and a judgement $\Gamma \mid B \vdash M : A$ which states that a term
$M$ has type $A$ under typing context $\Gamma$, and can receive values of type
$B$. The typing of $\lactrecv{}$ and $\lactself{}$ depends on the
type of the actor's mailbox.

\subsection{Operational semantics}

\begin{figure}[t]
~Syntax of evaluation contexts and configurations
\begin{syntax}
  \text{Evaluation contexts} & E & ::= &  [~] \mid \efflet{x}{E}{M} \\
  \text{Configurations} & \config{C}, \config{D}, \config{E} & ::= &  \config{C} \parallel \config{D} \mid (\nu a) \config{C} \mid \actor{a}{M}{\seq{V}} \\
  \text{Configuration contexts} & G & ::= & [~] \mid G \parallel \config{C} \mid (\nu a) G \\
\end{syntax}

~Typing rules for configurations
\hfill \framebox{$\Gamma ; \Delta \vdash \config{C}$} \\
\footnotesize
\begin{mathpar}
\inferrule
  [Par]
  { \Gamma ; \Delta_1 \vdash \config{C}_1 \\ \Gamma ; \Delta_2 \vdash \config{C}_2}
  { \Gamma ; \Delta_1, \Delta_2  \vdash \config{C}_1 \parallel \config{C}_2 }

\inferrule
  [Pid]
  { \Gamma, a : \pid{A} ; \Delta, a : A \vdash \config{C} }
  { \Gamma ; \Delta \vdash (\nu a) \config{C} }

\inferrule
  [Actor]
  { \Gamma, a : \pid{A} \mid A \vdash M : \one \\\\ (\Gamma, a : \pid{A} \vdash V_i : A)_i }
  { \Gamma, a : \pid{A} ; a : A \vdash \actor{a}{M}{\seq{V}} }
\end{mathpar}

\caption{$\lact$ evaluation contexts and configurations}\label{fig:lact-eval-syntax}
\end{figure}

Figure~\ref{fig:lact-eval-syntax} shows the syntax of $\lact$
evaluation contexts, as well as the syntax and typing rules of $\lact$
configurations. Evaluation contexts for terms and configurations are similar to
$\lch$. The primary difference from $\lch$
is the actor configuration $\actor{a}{M}{\seq{V}}$, which can be read as ``an
actor with name $a$ evaluating term $M$, with a mailbox consisting of values
$\seq{V}$''. Whereas a term $M$ is itself a configuration in $\lch$, a term in
$\lact$ must be evaluated as part of an actor configuration in order
to support context-sensitive operations such as receiving from the mailbox.
We again stratify the reduction rules into functional reduction on terms, and
reduction on configurations.
The typing rules for $\lact$ configurations ensure that all
values contained in an actor mailbox are well-typed with respect to the
mailbox type, and that a configuration $\config{C}$ under a name restriction
$(\nu a) \config{C}$ contains an actor with name $a$.
Figure~\ref{fig:lact-reduction} shows the reduction rules for $\lact$.
Again, reduction on terms preserves
typing, and the functional fragment of $\lact$ enjoys progress.

\begin{lemma}[Preservation ($\lact$ terms)]\label{lem:lact-term-pres}
If $\Gamma \vdash M : A$ and $M \teval M'$, then $\Gamma \vdash M' : A$.
\end{lemma}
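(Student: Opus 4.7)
The plan is to proceed by induction on the derivation of $M \teval M'$, using a substitution lemma as the main ingredient. First I note a small subtlety: since $M$ is a computation, the judgement in the statement should be read as $\Gamma \mid C \vdash M : A$ for some mailbox type $C$, and what we really prove is that $\Gamma \mid C \vdash M : A$ and $M \teval M'$ together imply $\Gamma \mid C \vdash M' : A$, for the same $C$. The values in the reduction rules never mention the mailbox type, so $C$ is carried through unchanged.

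The first step is to establish a substitution lemma for values, by mutual induction on the value/computation typing derivations:
\begin{itemize}
\item if $\Gamma, x : A \vdash W : B$ and $\Gamma \vdash V : A$, then $\Gamma \vdash W\{V/x\} : B$;
\item if $\Gamma, x : A \mid C \vdash M : B$ and $\Gamma \vdash V : A$, then $\Gamma \mid C \vdash M\{V/x\} : B$.
\end{itemize}
The \textsc{Var} case either replaces $x$ by $V$ (using the assumption) or leaves the variable untouched; \textsc{Abs} invokes the induction hypothesis on the body after weakening $\Gamma$; \textsc{App}, \textsc{EffLet}, \textsc{Send}, \textsc{Spawn} and \textsc{EffReturn} recurse into their subterms; and \textsc{Recv} and \textsc{Self} hold trivially since their typing depends only on $C$ and is insensitive to $\Gamma$.

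For the main induction on $M \teval M'$, the cases are:
\begin{itemize}
\item $(\lambda x . M)\,V \teval M\{V/x\}$: invert \textsc{App} on the hypothesis to obtain $\Gamma \vdash \lambda x . M : A \lactto{C} B$ and $\Gamma \vdash V : A$; invert \textsc{Abs} to obtain $\Gamma, x : A \mid C \vdash M : B$; conclude by the substitution lemma.
\item $\efflet{x}{\effreturn{V}}{M} \teval M\{V/x\}$: invert \textsc{EffLet} to obtain $\Gamma \mid C \vdash \effreturn{V} : A$ and $\Gamma, x : A \mid C \vdash M : B$; invert \textsc{EffReturn} to obtain $\Gamma \vdash V : A$; conclude by substitution.
\item $E[M_1] \teval E[M_2]$ from $M_1 \teval M_2$: a short auxiliary induction on $E$ shows that $\Gamma \mid C \vdash E[M_1] : B$ implies $\Gamma \mid C \vdash M_1 : A$ for some $A$, together with a context-typing $E : A \Rightarrow B$; apply the induction hypothesis to obtain $\Gamma \mid C \vdash M_2 : A$ and then reassemble to obtain $\Gamma \mid C \vdash E[M_2] : B$.
\end{itemize}

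The hardest part is purely bookkeeping: making sure the mailbox annotation $C$ is threaded correctly through every rule, since \textsc{App} for instance types its operands in the effect-free value judgement while the conclusion is tagged by $C$. Once the substitution lemma is phrased with $V$ typed effect-free in $\Gamma$ (so that $V$ can be substituted under any mailbox type without constraint), all the reduction cases go through mechanically, and no issues arise with the effectful primitives because term reduction never unfolds $\lactrecv$ or $\lactself$.
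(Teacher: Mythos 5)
Your proof is correct and is exactly the standard argument the paper has in mind (the paper omits this proof, describing such lemmas as straightforward inductions, and uses the same replacement-style treatment of evaluation contexts in its configuration-level preservation proofs): induction on the reduction derivation, discharged by a mutually-inductive substitution lemma for values and computations. Your observation that the statement should properly be read as $\Gamma \mid C \vdash M : A$ with the mailbox annotation $C$ preserved is also accurate, since the computation typing judgement in $\lact$ carries the effect annotation that the lemma statement elides.
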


\begin{lemma}[Progress ($\lact$ terms)]\label{lem:lact-term-progress}
Assume $\Gamma$ is either empty or only contains entries of the form $a_i :
\pid{A_i}$.
If $\Gamma \mid B \vdash M : A$, then either:
\begin{enumerate}
\item $M = \effreturn{V}$ for some value $V$, or
\item $M$ can be written as $E[M']$, where $M'$ is a communication or
  concurrency primitive (i.e.\ $\lactspawn{N}$, $\lactsend{V}{W}$,
  $\lactrecv{}$, or $\lactself{}$), or
\item There exists some $M'$ such that $M \teval M'$.
\end{enumerate}
\end{lemma}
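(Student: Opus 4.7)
The plan is to prove the statement by induction on the typing derivation $\Gamma \mid B \vdash M : A$, closely mirroring the analogous result for $\lch$ terms (Lemma~\ref{lem:lch-term-progress}). Because $\Gamma$ is restricted to (possibly) mapping names to actor reference types $\pid{A_i}$, no variable in $\Gamma$ can be assigned a function type or a unit type. This restriction is the purpose of the hypothesis on $\Gamma$: it gives us a \emph{canonical forms} sublemma stating that if $\Gamma \vdash V : A \lactto{C} B$ then $V = \lambda x.N$ for some $N$, and if $\Gamma \vdash V : \one$ then $V = ()$. The canonical forms sublemma follows by inversion on the three value typing rules (\textsc{Var}, \textsc{Abs}, \textsc{Unit}), using that any variable lookup in $\Gamma$ returns a $\pid{\cdot}$ type.

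I would then proceed by case analysis on the last rule used in the derivation. The leaf cases \textsc{Send}, \textsc{Recv}, \textsc{Spawn}, and \textsc{Self} immediately fit clause (2) with $E = [~]$, since the term itself is the relevant communication or concurrency primitive. The case \textsc{EffReturn} puts $M$ in clause (1). For \textsc{App}, the derivation gives $\Gamma \vdash V : A \lactto{C} B$; by the canonical forms sublemma, $V = \lambda x.N$, so $V \app W \teval N\{W/x\}$ by the $\beta$-reduction rule on terms, satisfying clause (3).

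The only mildly non-trivial case is \textsc{EffLet}, i.e.\ $M = \efflet{x}{M_1}{M_2}$ with $\Gamma \mid C \vdash M_1 : A$ and $\Gamma, x{:}A \mid C \vdash M_2 : B$. Applying the induction hypothesis to $M_1$ yields three subcases. If $M_1 = \effreturn{V}$, then $\efflet{x}{\effreturn{V}}{M_2} \teval M_2\{V/x\}$, giving clause (3). If there exists $M_1'$ with $M_1 \teval M_1'$, then by the evaluation-context closure rule for $\teval$, we have $\efflet{x}{M_1}{M_2} \teval \efflet{x}{M_1'}{M_2}$, again clause (3). If $M_1 = E[M']$ with $M'$ a primitive, then $M = E'[M']$ where $E' = \efflet{x}{E}{M_2}$, which is still a valid evaluation context by the grammar in Figure~\ref{fig:lact-eval-syntax}, so clause (2) holds.

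There is no real obstacle beyond the bookkeeping: the main point to get right is the canonical forms sublemma (so that \textsc{App} always reduces) and the observation that evaluation contexts for $\lact$ are closed under the $\efflet{x}{[~]}{N}$ frame, which is needed to propagate clause (2) through the \textsc{EffLet} case. Note that we do not require any reasoning about the mailbox annotation $B$: the progress property for terms is purely syntactic, and the effect annotation only matters when the enclosing actor configuration actually executes $\lactrecv{}$ or $\lactself{}$, which is handled at the configuration level rather than here.
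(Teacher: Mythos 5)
Your proposal is correct and follows essentially the same route as the paper's proof (which is given explicitly for the $\lch$ analogue and stated to carry over to $\lact$ with the same structure): induction on the typing derivation, immediate leaf cases, inversion/canonical forms for \textsc{App}, and the three-way case split on the induction hypothesis for \textsc{EffLet}. Your explicit statement of the canonical-forms sublemma and the remark about the irrelevance of the mailbox annotation are slightly more detailed than the paper's presentation but do not change the argument.
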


\begin{figure}[t]
\vspace{1ex}
~Reduction on terms
{\small
\[
\begin{array}{ccc}
  (\lambda x.M)\app V \teval M \{ V / x \} \quad \qquad &
  \efflet{x}{\effreturn{V}}{M} \teval M \{ V / x \} \quad \qquad &
  E[M] \teval E[M'] \\
  & & (\text{if } M \teval M')
\end{array}
\]}
\vspace{-1.5em}

~Structural congruence
{\small
\begin{mathpar}
\config{C} \parallel \config{D} \equiv \config{D} \parallel \config{C}

\config{C} \parallel (\config{D} \parallel \config{E}) \equiv (\config{C} \parallel \config{D}) \parallel \config{E}

\config{C} \parallel (\nu a)\config{D} \equiv (\nu a)(\config{C} \parallel \config{D}) \text{ if $a \not\in \fv(\config{C})$}

G[\config{C}] \equiv G[\config{D}] \text{ if $\config{C} \equiv \config{D}$}

\end{mathpar}}
~Reduction on configurations
{\small
\[
\begin{array}{rrcl}
  \textsc{Spawn} & \actor{a}{\Ex[\lactspawn{M}]}{\seq V} &\ceval&
  (\nu b)(\actor{a}{E[\effreturn{b}]}{\seq V} \parallel \actor{b}{M}{\epsilon}) \\
  & & & \quad (b \text{ is fresh}) \\
  \textsc{Send} & \actor{a}{\Ex[\lactsend{V'}{b}]}{\seq V} \parallel
        \actor{b}{M}{\seq W} &\ceval&
  \actor{a}{\Ex[\effreturn{()}]}{\seq V} \parallel
  \actor{b}{M}{\seq{W} \cdot V'} \\
  \textsc{SendSelf} & \actor{a}{\Ex[\lactsend{V'}{a}]}{\seq V} &\ceval&
    \actor{a}{\Ex[\effreturn{()}]}{\seq V \cdot V'} \\
  \textsc{Self} & \actor{a}{E[\lactself{}]}{\seq{V}} & \ceval & \actor{a}{E[\effreturn{a}]}{\seq{V}}\\
  \textsc{Receive} & \actor{a}{\Ex[\lactrecv{}]}{W \cdot \seq{V}} &\ceval&
    \actor{a}{\Ex[\effreturn{W}]}{\seq V} \\
  \textsc{Lift} & \Cx[\config{C}_1] & \ceval & \Cx[\config{C}_2] \qquad
    (\text{if } \config{C}_1 \ceval \config{C}_2) \\
    \textsc{LiftM} & \actor{a}{M_1}{\seq{V}} & \ceval & \actor{a}{M_2}{\seq{V}} \qquad
    (\text{if } M_1 \teval M_2) \\
\end{array}
\]}

\caption{Reduction on $\lact$ terms and configurations}\label{fig:lact-reduction}
\end{figure}

\subparagraph{Reduction on configurations.} While $\lch$ makes
use of separate constructs to create new processes and channels, $\lact$ uses
a single construct $\lactspawn{M}$ to spawn a new actor with an empty mailbox to
evaluate term $M$.  Communication happens directly between actors instead of
through an intermediate entity: as a result of evaluating $\lactsend{V}{a}$, the
value $V$ will be appended directly to the end of the mailbox of actor $a$.
\textsc{SendSelf} allows reflexive sending; an alternative would be to decouple
mailboxes from the definition of actors, but this complicates both the configuration
typing rules and the intuition.
\textsc{Self} returns the name of the current process, and
\textsc{Receive} retrieves the head value of a non-empty mailbox.

As before, typing is preserved modulo structural congruence and under reduction.

\begin{lemma}
  If $\Gamma ; \Delta \vdash \config{C}$ and $\config{C} \equiv \config{D}$ for
some $\config{D}$, then $\Gamma ; \Delta \vdash \config{D}$.
\end{lemma}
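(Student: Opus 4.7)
The plan is to proceed by induction on the derivation of $\config{C} \equiv \config{D}$, handling each of the four structural congruence rules (commutativity, associativity, scope extrusion, and the congruence closure $G[\config{C}] \equiv G[\config{D}]$) as a case, plus their symmetric counterparts. Since $\equiv$ is an equivalence, it is enough to argue each axiom in both directions, or equivalently to show typeability is closed under the raw axioms and then appeal to symmetry and transitivity.

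The commutativity and associativity cases are routine: inverting \textsc{Par} gives a split of $\Delta$ into two (respectively three) disjoint pieces, and the resulting typing derivation can be reassembled with the pieces in any order or bracketing, since disjoint union of linear environments is commutative and associative. The congruence-closure case $G[\config{C}_1] \equiv G[\config{C}_2]$ is a secondary induction on the structure of $G$: at each layer the typing of $G[\config{C}_i]$ decomposes uniquely through \textsc{Par} or \textsc{Pid}, and the outer induction hypothesis on $\config{C}_1 \equiv \config{C}_2$ slots straight into the hole.

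The scope-extrusion case is the main obstacle. In the direction $\config{C} \parallel (\nu a)\config{D} \equiv (\nu a)(\config{C} \parallel \config{D})$, assume $\Gamma ; \Delta \vdash \config{C} \parallel (\nu a)\config{D}$ with $a \notin \fv(\config{C})$. Inverting \textsc{Par} yields $\Delta = \Delta_1, \Delta_2$ with $\Gamma ; \Delta_1 \vdash \config{C}$ and $\Gamma ; \Delta_2 \vdash (\nu a)\config{D}$, and inverting \textsc{Pid} gives $\Gamma, a : \pid{A} ; \Delta_2, a{:}A \vdash \config{D}$. To recombine these under a single \textsc{Pid}, I need to type $\config{C}$ under the extended term environment $\Gamma, a : \pid{A}$. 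This uses a weakening lemma on the value/term and configuration judgements for names not occurring free, which is a standard induction on the typing derivation. The converse direction symmetrically requires strengthening, again justified by $a \notin \fv(\config{C})$. Linearity of $\Delta$ guarantees that the $a{:}A$ capability in the linear environment must sit on the $\config{D}$ side, so nothing extra needs to be argued there.

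The hard part will be setting up the weakening and strengthening lemmas on $\Gamma$ cleanly, since these must cover all three judgements ($\Gamma \vdash V : A$, $\Gamma \mid B \vdash M : A$, and $\Gamma ; \Delta \vdash \config{C}$) and must be stated for adding or removing a single name $a : \pid{A}$ not appearing free. Once these lemmas are in hand, each case of the main induction reduces to inverting \textsc{Par}, \textsc{Pid}, or \textsc{Actor} and reassembling with the pieces in the desired shape.
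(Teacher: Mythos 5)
Your proposal is correct and follows essentially the same route as the paper's proof of the analogous $\lch$ lemma (which the $\lact$ version mirrors): case analysis over the congruence axioms, inverting \textsc{Par} and re-splitting the linear environment for commutativity and associativity, and using weakening of $\Gamma$ with $a : \pid{A}$ to push $\config{C}$ under the name restriction in the scope-extrusion case. The only differences are presentational — you induct on the derivation of $\config{C} \equiv \config{D}$ rather than on the typing derivation, and you explicitly flag the strengthening needed for the reverse direction of scope extrusion, which the paper leaves implicit.
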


\begin{theorem}[Preservation ($\lact$ configurations)]\label{thm:lact-preservation}
  If $\Gamma ; \Delta \vdash \config{C}_1$ and $\config{C}_1 \ceval
  \config{C}_2$, then $\Gamma ; \Delta \vdash
\config{C}_2$.
\end{theorem}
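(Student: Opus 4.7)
The plan is to proceed by induction on the derivation of $\config{C}_1 \ceval \config{C}_2$, with a case for each of the seven configuration reduction rules in Figure~\ref{fig:lact-reduction}. Before starting, I would establish a standard context-replacement lemma for evaluation contexts: if $\Gamma \mid C \vdash E[M] : A$, then there is a type $B$ with $\Gamma \mid C \vdash M : B$, and for any $M'$ with $\Gamma \mid C \vdash M' : B$, we have $\Gamma \mid C \vdash E[M'] : A$. Because evaluation contexts are built only from $\efflettwo{x}{E}{N}$, the mailbox-effect index $C$ is propagated unchanged through the hole, which is what makes the type-and-effect discipline cooperate with evaluation contexts.

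The \textsc{LiftM} case follows immediately from Lemma~\ref{lem:lact-term-pres} (term preservation) together with \textsc{Actor}; the \textsc{Lift} case follows by the induction hypothesis, inversion on \textsc{Par} and \textsc{Pid}, and the preceding lemma that structural congruence preserves typing. For \textsc{Receive} on $\actor{a}{E[\lactrecv{}]}{W \cdot \seq V}$, I would invert \textsc{Actor} to obtain $\Gamma, a{:}\pid{A} \mid A \vdash E[\lactrecv{}] : \one$ and $\Gamma, a{:}\pid{A} \vdash W : A$; the context-replacement lemma, together with $\Gamma, a{:}\pid{A} \mid A \vdash \effreturn{W} : A$ from \textsc{EffReturn} and \textsc{Var}, reassembles the actor with the tail mailbox $\seq V$. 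The \textsc{Self} case is analogous, using $\Gamma, a{:}\pid{A} \vdash a : \pid{A}$. For \textsc{Send} and \textsc{SendSelf}, inverting \textsc{Send} inside the evaluation context gives $\Gamma, a{:}\pid{A} \vdash V' : A'$ where the recipient has type $\pid{A'}$; inverting \textsc{Actor} on the recipient actor (or re-using the same actor's mailbox type $A = A'$ in the self case) shows that its mailbox type is exactly $A'$, so extending the mailbox with $V'$ is still well-typed under \textsc{Actor}.

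The hardest case is \textsc{Spawn}, because it introduces a fresh binder. For $\actor{a}{E[\lactspawn{M}]}{\seq V} \ceval (\nu b)(\actor{a}{E[\effreturn{b}]}{\seq V} \parallel \actor{b}{M}{\epsilon})$, the context-replacement lemma combined with inversion on \textsc{Spawn} yields some $A'$ such that $\Gamma, a{:}\pid{A} \mid A \vdash \lactspawn{M} : \pid{A'}$ and $\Gamma, a{:}\pid{A} \mid A' \vdash M : \one$. I then need to retype everything under the extended environments $\Gamma, a{:}\pid{A}, b{:}\pid{A'}$ and $\Delta, b{:}A'$; this requires a routine weakening lemma for the term judgement, after which \textsc{EffReturn} gives $\effreturn{b} : \pid{A'}$ at effect $A$, context-replacement rebuilds $E[\effreturn{b}]$, \textsc{Actor} types both actors (the mailbox $\seq V$ re-types trivially by weakening, and the empty mailbox for $b$ is vacuously typed), \textsc{Par} combines them with disjoint linear environments $a{:}A$ and $b{:}A'$, and \textsc{Pid} closes with the fresh $\nu b$. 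The subtlety is purely bookkeeping: picking $b$ fresh for $\Gamma$, $\Delta$, and $\seq V$ so that both the term and linear environments remain well-formed when the new binding is introduced.
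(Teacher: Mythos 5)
Your proposal is correct and follows essentially the same route as the paper's own proof: induction on the reduction derivation, a replacement lemma for evaluation contexts (the paper's Lemma~\ref{lem:lact-replacement}), inversion on \textsc{Actor} and the relevant term typing rule in each case, and weakening plus freshness bookkeeping for \textsc{Spawn}. The only cosmetic difference is that you state the replacement lemma in a slightly stronger form that also extracts the existence of the intermediate type $B$, whereas the paper assumes it as a hypothesis.
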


\subsection{Progress and canonical forms}
Again, we cannot guarantee deadlock-freedom for $\lact$.
Instead, we proceed by defining a canonical form, and characterising the form of
progress that $\lact$ enjoys. The technical development follows that of $\lch$.

\begin{definition}[Canonical form ($\lact$)]
  A $\lact$ configuration $\config{C}$ is in \emph{canonical form} if
  $\config{C}$ can be written $(\nu a_1) \ldots (\nu
    a_n)(\actor{a_1}{M_1}{\seq{V_1}} \parallel \ldots \parallel
  \actor{a_n}{M_n}{\seq{V_n}})$.
\end{definition}

\begin{lemma}\label{lem:lact-semi-canon-form}
  If $\Gamma ; \Delta \vdash \config{C}$ and $\Delta = a_1 : A_1, \ldots a_k :
  A_k$, then there exists $\config{C'} \equiv \config{C}$ such that $\config{C'} = (\nu
  a_{k+1}) \ldots (\nu a_n) (\actor{a_1}{M_1}{\seq{V_1}} \parallel \ldots
\parallel \actor{a_n}{M_n}{\seq{V_n}})$.
\end{lemma}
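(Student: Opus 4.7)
The plan is to proceed by induction on the derivation of $\Gamma ; \Delta \vdash \config{C}$, using the three configuration typing rules (\textsc{Par}, \textsc{Pid}, \textsc{Actor}) as the case split. In each case I will use structural congruence (in particular scope extrusion and the commutativity/associativity of $\parallel$) to reorganise the configuration so that all name restrictions appear outermost and all actor nodes appear in the prescribed order.

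In the \textsc{Actor} base case we have $\config{C} = \actor{a_1}{M_1}{\seq{V_1}}$ with $\Delta = a_1 : A_1$, which is already in the required form with $k = n = 1$ and no outer $\nu$-binders. In the \textsc{Pid} case, $\config{C} = (\nu a)\config{C}_0$ with $\Gamma, a : \pid{A} ; \Delta, a : A \vdash \config{C}_0$; applying the IH (with $a$ appended as entry number $k+1$) yields $\config{C}_0' \equiv \config{C}_0$ in the required shape, and then $(\nu a)\config{C}_0'$ is the desired form for $\config{C}$, with the restriction on $a$ now joining the leading block of binders. In the \textsc{Par} case, the linear environment splits as $\Delta_1, \Delta_2$ with $\Gamma ; \Delta_i \vdash \config{C}_i$; applying the IH to each side yields $\config{C}_i \equiv (\nu \seq{b_i}) (\actor{a^i_1}{M^i_1}{\seq{V^i_1}} \parallel \cdots)$, and after $\alpha$-renaming the bound names $\seq{b_i}$ to be fresh for the opposite component, repeated use of scope extrusion lets me pull every $\nu$-binder to the outside and then reassociate the parallel composition so the actors appear in the canonical order dictated by $\Delta = \Delta_1, \Delta_2$.

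The main obstacle, as in the $\lch$ analogue, is the careful handling of scope in the \textsc{Par} case: scope extrusion $\config{C} \parallel (\nu a)\config{D} \equiv (\nu a)(\config{C} \parallel \config{D})$ requires $a \notin \fv(\config{C})$, so I must $\alpha$-convert the bound names produced by the inductive hypotheses before merging. This is a routine Barendregt-style freshening, justified by the fact that the $\nu$-bound names in each $\config{C}_i'$ are (by the \textsc{Pid} rule) disjoint from the $\Delta_i$ exposed to the outside, so they can be chosen to avoid any finite set of free names. Once this is done, Lemma~\ref{lem:lact-equiv-typeability}-style preservation of typing under $\equiv$ (the analogous result for $\lact$ stated just above) guarantees that the rearranged $\config{C}'$ still types under $\Gamma ; \Delta$ and remains equivalent to $\config{C}$, completing the induction.
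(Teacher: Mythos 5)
Your proof is correct and is exactly the ``straightforward induction'' the paper alludes to but omits (no explicit proof of this lemma appears in the body or the appendix): induction on the typing derivation, with the \textsc{Actor} leaf already in shape, \textsc{Pid} appending its name as the $(k{+}1)$-th entry, and \textsc{Par} handled by $\alpha$-freshening followed by scope extrusion and commutativity/associativity of $\parallel$. The only point worth making explicit in a write-up is that linearity of $\Delta$ is what guarantees exactly one actor per name, so the resulting list $\actor{a_1}{M_1}{\seq{V_1}} \parallel \ldots \parallel \actor{a_n}{M_n}{\seq{V_n}}$ is well-defined.
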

As before, it follows as a corollary
of Lemma~\ref{lem:lact-semi-canon-form} that closed configurations can be written in
canonical form. We can therefore classify the notion of progress enjoyed by
$\lact$.

\begin{corollary}
  If $\cdot ; \cdot \vdash \config{C}$, then there exists some $\config{C'}
  \equiv \config{C}$ such that $\config{C'}$ is in canonical form.
\end{corollary}

\begin{theorem}[Weak progress ($\lact$ configurations)] \hfill \\
  Let $\cdot ; \cdot \vdash \config{C}$, $\config{C} \not\ceval$, and let
  $\config{C'} = (\nu a_1) \ldots (\nu a_n)(\actor{a_1}{M_1}{\seq{V_1}}
  \parallel \ldots \parallel \actor{a_n}{M_n}{\seq{V_n}})$ be a canonical form
  of $\config{C}$. Each actor with name $a_i$ is either of the form
  $\actor{a_i}{\effreturn{W}}{\seq{V_i}}$ for some value $W$, or
  $\actor{a_i}{E[\lactrecv{}]}{\epsilon}$.
\end{theorem}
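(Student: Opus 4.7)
The plan is to mirror the structure of the weak progress proof for $\lch$ (Theorem~\ref{thm:lch-config-progress}): lift the term-level progress lemma to each actor body in the canonical form, and then use the assumption $\config{C} \not\ceval$ to rule out every communication and concurrency primitive except for a $\lactrecv{}$ blocked on an empty mailbox.

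First, I would apply Lemma~\ref{lem:lact-term-progress} to each body $M_i$ of $\config{C'}$. Since $\config{C}$ is closed, the only free names visible inside any actor are the $\nu$-bound names $a_1,\ldots,a_n$, all of which carry a type of the form $\pid{A_j}$ by the \textsc{Pid} configuration typing rule, so the hypothesis of the term progress lemma is met. Each $M_i$ is therefore either already of the form $\effreturn{W}$ (giving directly the first alternative of the conclusion) or can be written $E[M']$ where $M'$ is one of $\lactspawn{N}$, $\lactsend{V'}{\alpha}$, $\lactrecv{}$, or $\lactself{}$.

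Next I would perform a case analysis on $M'$ and show that in every case except blocked receive a reduction would exist, contradicting $\config{C} \not\ceval$. For $\lactspawn{N}$ the rule \textsc{Spawn} always fires, and for $\lactself{}$ the rule \textsc{Self} always fires. For $\lactsend{V'}{\alpha}$, closedness of $\config{C}$ forces $\alpha$ to be a $\nu$-bound name $b$; then either $b = a_i$ and \textsc{SendSelf} fires, or $b = a_j$ for some $j \neq i$ and \textsc{Send} fires against the actor $\actor{a_j}{M_j}{\seq{V_j}}$ that the canonical form supplies. For $\lactrecv{}$, the rule \textsc{Receive} fires whenever the mailbox $\seq{V_i}$ is non-empty, so the only irreducible situation is $\seq{V_i} = \epsilon$, which is precisely the second alternative of the conclusion.

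The main obstacle I expect is the $\lactsend{V'}{b}$ case, where I must argue that the target name $b$ really does denote some actor present in the canonical form. This requires combining three ingredients: that closedness eliminates any free variable target, that the \textsc{Pid} rule couples the introduction of $b : \pid{A}$ in $\Gamma$ with the capability $b : A$ in the linear environment $\Delta$, and that Lemma~\ref{lem:lact-semi-canon-form} together with the linearity of $\Delta$ forces each $\nu$-bound name to be matched by exactly one actor of the appropriate mailbox type. Once this bookkeeping is in place, the remaining cases are immediate applications of the reduction rules, following the template already established for $\lch$.
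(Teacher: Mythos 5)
Your proposal is correct and takes essentially the same route as the paper, which states that the $\lact$ progress proofs follow the structure of their $\lch$ counterparts: apply the term-level progress lemma to each actor body in canonical form, then discharge every primitive except a receive on an empty mailbox by exhibiting a reduction, using closedness and the canonical form to locate the target actor for \textsc{Send}/\textsc{SendSelf}. Your explicit attention to the $\lactsend{V'}{b}$ bookkeeping is exactly the point where the $\lch$ argument for $\lchwd{give}$ transfers.
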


 \section{From $\lact$ to $\lch$}\label{sec:lact-lch}

With both calculi in place, we can define the translation from
$\lact$ into $\lch$. 
The key idea is to emulate a mailbox using a channel, and
to pass the channel as an argument to each function. The translation on terms is
parameterised over the channel name, which is
used to implement context-dependent operations (i.e.,\ $\lactwd{receive}$ and
$\lactwd{self}$).
Consider again \metadef{recvAndShow}.
\[
  \metadef{recvAndShow} \defeq \lambda () . \efflet{x}{\lactrecv{}}{\metadef{intToString}(x)}
\]
A possible configuration would be an actor evaluating
$\metadef{recvAndShow} \app ()$, with some name $a$ and mailbox
with values $\seq{V}$, under a name restriction for $a$.
\[
  (\nu a) (\actor{a}{\metadef{recvAndShow} \app ()}{\seq{V}})
\]
The translation on terms takes a channel name \textit{ch} as a parameter.
As a result of the translation, we have that:
\[
  \translateactterm{\metadef{recvAndShow} \app ()}{ch} = \efflet{x}{\lchtake{ch}}{\metadef{intToString}(x)}
\]
with the corresponding configuration
  $(\nu a) (\lchbuf{a}{\translateactval{\seq{V}}} \parallel
  \translateactterm{\metadef{recvAndShow} \app ()}{a})$.
The values from the mailbox are translated pointwise and form the contents
of a buffer with name $a$. The translation of $\metadef{recvAndShow}$ is
provided with the name $a$ which is used to emulate $\lactrecv{}$.
\subsection{Translation ($\lact$ to $\lch$)}

\begin{figure}[t]
~Translation on types
{\small
\begin{mathpar}
\translateactty{\pid{A}} =  \chan{\translateactty{A}}

\translateactty{A \lactto{C} B}  =  \translateactty{A} \to \chan{\translateactty{C}} \to \translateactty{B}

\translateactty{\one} = \one
\end{mathpar}}
~Translation on values
{\small
\begin{mathpar}
  \translateactval{x} = x

  \translateactval{a} = a

  \translateactval{\lambda x.M} = \lambda x . \lambda ch . (\translateactterm{M}{ch})

  \translateactval{()} = ()
\end{mathpar}}
~Translation on computation terms
\vspace{-1.2em}
{\small
\begin{mathpar}
\translateactterm{\efflet{x}{M}{N}}{ch} = \efflet{x}{(\translateactterm{M}{ch})}{\translateactterm{N}{ch}}
\vspace{-1.2em}
\end{mathpar}
\begin{minipage}{\textwidth}
\begin{minipage}[t]{0.5\textwidth}
  \[
  \begin{array}{rcl}
  \translateactterm{V \app W}{ch} & = &
    \efflettwo{f}{(\translateactval{V} \app \translateactval{W})}{\: f \app \textit{ch}} \\
  \translateactterm{\effreturn{V}}{ch} & = & \effreturn{\translateactval{V}} \\
  \translateactterm{\lactself{}}{ch} & = & \effreturn{ch} \\
  \translateactterm{\lactrecv{}}{ch} & = & \lchtake{ch} \\
  \end{array}
  \]
\end{minipage}
~
\begin{minipage}[t]{0.45\textwidth}
\[
\begin{array}{rcl}
\translateactterm{\lactspawn{M}}{ch}
                                 & = & \efflettwo{\textit{chMb}}{\lchnewch{}} \\
                                 &   & \lchfork{(\translateactterm{M}{\textit{chMb}})}; \\
                                 &   & \effreturn{\textit{chMb}} \\
\translateactterm{\lactsend{V}{W}}{ch} & = & \lchgive{(\translateactval{V})}{(\translateactval{W})} \\
\end{array}
\]
\end{minipage}
\end{minipage}}
\vspace{0.5em}

~Translation on configurations
{\small
\begin{mathpar}
\translateactconfig{\config{C}_1 \parallel \config{C}_2} = \translateactconfig{\config{C}_1} \parallel \translateactconfig{\config{C}_2}

\translateactconfig{(\nu a)\config{C}} = (\nu a) \app \translateactconfig{\config{C}}

\translateactconfig{\actor{a}{M}{\seq{V}}} = \lchbuf{a}{\translateactval{\seq{V}}} \parallel (\translateactterm{M}{a})
\end{mathpar}}
\vspace{-1em}
\caption{Translation from $\lact$ into $\lch$}
\label{fig:lact-lch-trans}
\end{figure}

Figure~\ref{fig:lact-lch-trans} shows the formal translation from $\lact$ into
$\lch$. Of particular note is the translation on terms:
$\translateactterm{-}{ch}$ translates
a $\lact$ term into a $\lch$ term using a channel with name $ch$ to emulate
a mailbox.
An actor reference is represented as a channel reference in $\lch$; we
emulate sending a message to another actor by writing to the channel
emulating the recipient's mailbox. Key to translating $\lact$ into
$\lch$ is the translation of function arrows $A \lactto{C} B$; the
effect annotation $C$ is replaced by a second parameter $\chan{C}$,
which is used to emulate the mailbox of the actor.
Values translate to themselves, with the exception of $\lambda$
abstractions, whose translation takes an additional parameter denoting
the channel used to emulate operations on a mailbox. Given parameter
\textit{ch}, the translation function for terms emulates $\lactrecv{}$
by taking a value from \textit{ch}, and emulates $\lactself{}$ by
returning \textit{ch}.

Though the translation is straightforward, it is a \emph{global}
translation~\cite{felleisen:expressiveness}, as all functions must be
modified in order to take the mailbox channel as an additional parameter.

\subsection{Properties of the translation}
The translation on terms and values preserves typing.
We extend the translation function pointwise to typing environments:
$  \translateactty{\alpha_1 : A_1, \ldots, \alpha_n : A_n} =
  \alpha_1 : \translateactty{A_1}, \ldots, \alpha_n : \translateactty{A_n}
$.
\begin{lemma}[$\translateactval{-}$ preserves typing (terms and
  values)]\label{lem:lact-lch-term-typing} \hfill
\begin{enumerate}
\item If $\Gamma \vdash V : A$ in $\lact$, then $\translateactty{\Gamma}
  \vdash \translateactval{V} : \translateactty{A}$ in $\lch$.
\item If $\Gamma \mid B \vdash M : A$ in $\lact$, then
  $\translateactty{\Gamma}, \alpha : \chan{\translateactty{B}} \vdash
  \translateactterm{M}{\alpha} : \translateactty{A}$ in $\lch$.
\end{enumerate}
\end{lemma}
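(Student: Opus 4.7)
The plan is to prove both parts simultaneously by mutual induction on the $\lact$ typing derivations, treating the value judgement $\Gamma \vdash V : A$ and the computation judgement $\Gamma \mid B \vdash M : A$ in parallel. The key observation driving the whole argument is that the effect annotation $C$ on $\lact$ function arrows becomes an explicit channel parameter of type $\chan{\translateactty{C}}$ in $\lch$, so throughout the proof the ``ambient mailbox type'' $B$ from the $\lact$ judgement corresponds exactly to the type of the extra channel parameter $\alpha$ threaded through the translation.

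The value cases are essentially bookkeeping. \textsc{Var} and \textsc{Unit} are immediate from the definition of $\translateactty{-}$ on environments. For \textsc{Abs}, suppose $\Gamma, x : A \mid C \vdash M : B$; by the inductive hypothesis for computations, $\translateactty{\Gamma}, x : \translateactty{A}, \mathit{ch} : \chan{\translateactty{C}} \vdash \translateactterm{M}{\mathit{ch}} : \translateactty{B}$, and two applications of \textsc{Abs} in $\lch$ yield $\translateactty{\Gamma} \vdash \lambda x.\lambda \mathit{ch}.\translateactterm{M}{\mathit{ch}} : \translateactty{A} \to \chan{\translateactty{C}} \to \translateactty{B}$, which is exactly $\translateactty{A \lactto{C} B}$.

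For the computation cases, \textsc{EffReturn} and \textsc{EffLet} are direct from the inductive hypotheses. \textsc{Recv} uses the assumed $\alpha : \chan{\translateactty{B}}$ to type $\lchtake{\alpha} : \translateactty{B}$; \textsc{Self} returns $\alpha$, whose type $\chan{\translateactty{B}}$ matches $\translateactty{\pid{B}}$. \textsc{Send} is immediate once we note that $\translateactty{\pid{A}} = \chan{\translateactty{A}}$, which lines up with the typing rule for $\lchwd{give}$. \textsc{Spawn} of $M$ with mailbox type $A$ translates to allocating a fresh channel $\mathit{chMb}$ of type $\chan{\translateactty{A}}$ via $\lchnewch{}$, forking $\translateactterm{M}{\mathit{chMb}}$ (which types to $\one$ by the IH using $\mathit{chMb}$ as the threaded channel), and returning $\mathit{chMb}$ of type $\translateactty{\pid{A}}$.

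The one case requiring a moment of care is \textsc{App}, since the translation $\efflet{f}{(\translateactval{V}\app\translateactval{W})}{f\app\mathit{ch}}$ introduces an intermediate binding. From $\Gamma \vdash V : A \lactto{C} B$ and $\Gamma \vdash W : A$, the value IH gives $\translateactval{V} : \translateactty{A} \to \chan{\translateactty{C}} \to \translateactty{B}$ and $\translateactval{W} : \translateactty{A}$, so $\translateactval{V}\app\translateactval{W} : \chan{\translateactty{C}} \to \translateactty{B}$, and then $f\app\mathit{ch}$ produces $\translateactty{B}$ under the assumption $\mathit{ch} : \chan{\translateactty{C}}$ that the lemma statement provides. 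The only mild subtlety is ensuring the threaded channel name $\alpha$ introduced in the lemma statement is chosen fresh for $\Gamma$; this is standard and can be arranged by $\alpha$-renaming whenever the IH is invoked under a binder such as $\lambda x$ or $\efflet{x}{M}{N}$. I do not anticipate any real obstacle — the whole argument is a routine case analysis whose structure is dictated by how each syntactic form's translation mirrors the corresponding $\lch$ typing rule.
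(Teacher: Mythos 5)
Your proposal is correct and follows essentially the same route as the paper's own proof: a simultaneous induction on the two typing judgements, with the mailbox-type annotation realised as the extra channel parameter $\alpha : \chan{\translateactty{B}}$, and the same case-by-case analysis (your treatment of \textsc{App} is in fact slightly more faithful to the actual let-bound translation than the paper's, which elides the intermediate binding). The only detail left implicit is the routine use of weakening in the cases whose translations do not mention $\alpha$ (e.g.\ \textsc{EffReturn}, \textsc{Send}), which the paper spells out but which poses no difficulty.
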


The proof is by simultaneous induction on the derivations of $\Gamma \vdash V {:} A$ and
$\Gamma \mid B \vdash M {:} A$.
To state a semantics preservation result, we also define a translation on
configurations; the translations on parallel composition and name restrictions
are homomorphic. An actor configuration $\actor{a}{M}{\seq{V}}$ is translated as
a buffer $\lchbuf{a}{\translateactval{\seq{V}}}$, (writing
$\translateactval{\seq{V}} = \translateactval{V_0} \cdot, \ldots, \cdot
\translateactval{V_n}$ for each $V_i \in \seq{V}$), composed in parallel with the
translation of $M$, using $a$ as the mailbox channel.
We can now see that the translation preserves typeability of
configurations.

\begin{theorem}[$\translateactconfig{-}$ preserves typeability (configurations)] \hfill \\ \label{thm:lact-lch-config-typing}
  If $\Gamma ; \Delta \vdash \config{C}$ in $\lact$, then
  $\translateactty{\Gamma} ; \translateactty{\Delta} \vdash \translateactconfig{\config{C}}$
  in $\lch$.
\end{theorem}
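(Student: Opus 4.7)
The plan is to proceed by induction on the derivation of $\Gamma; \Delta \vdash \config{C}$ in $\lact$. There are three cases, one for each configuration typing rule (\textsc{Par}, \textsc{Pid}, \textsc{Actor}), and each should match up cleanly with the homomorphic/compositional clauses of $\translateactconfig{-}$.

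For \textsc{Par}, the translation is homomorphic, so I apply the induction hypothesis to each subconfiguration and reassemble with the $\lch$ \textsc{Par} rule; the linear environment $\Delta_1, \Delta_2$ translates pointwise to $\translateactty{\Delta_1}, \translateactty{\Delta_2}$ and the disjointness is preserved. For \textsc{Pid}, the translation gives $(\nu a) \translateactconfig{\config{C}}$; observing that $\translateactty{\pid{A}} = \chan{\translateactty{A}}$, the $\lact$ premise $\Gamma, a : \pid{A}; \Delta, a : A \vdash \config{C}$ yields, by the induction hypothesis, $\translateactty{\Gamma}, a : \chan{\translateactty{A}}; \translateactty{\Delta}, a : \translateactty{A} \vdash \translateactconfig{\config{C}}$, which is exactly the premise required by $\lch$'s \textsc{Chan} rule to derive $\translateactty{\Gamma}; \translateactty{\Delta} \vdash (\nu a) \translateactconfig{\config{C}}$.

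The interesting case is \textsc{Actor}, where $\actor{a}{M}{\seq{V}}$ translates to $\lchbuf{a}{\translateactval{\seq{V}}} \parallel \translateactterm{M}{a}$. I would split the goal using $\lch$'s \textsc{Par}, sending the singleton linear environment $a : \translateactty{A}$ to the buffer side and the empty linear environment to the term side. For the buffer, I instantiate \textsc{Buf} with the fact that $\translateactty{\Gamma}, a : \chan{\translateactty{A}} \vdash \translateactval{V_i} : \translateactty{A}$ for each $V_i$, which follows from Lemma~\ref{lem:lact-lch-term-typing}(1) applied to the value premises of \textsc{Actor}. For the term, I use \textsc{Term} together with Lemma~\ref{lem:lact-lch-term-typing}(2) applied to $\Gamma, a : \pid{A} \mid A \vdash M : \one$, instantiating the fresh parameter name as $a$; this gives $\translateactty{\Gamma}, a : \chan{\translateactty{A}} \vdash \translateactterm{M}{a} : \one$, as needed.

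The main obstacle, such as it is, is purely bookkeeping: keeping straight that $\pid{A}$ produces a term-level channel reference in $\translateactty{\Gamma}$ while the mailbox capability sits in $\translateactty{\Delta}$, and making sure the choice of parameter name in the translation of $M$ agrees with the name of the buffer on the other side of the parallel composition. Both are handled uniformly because the translation threads the actor name $a$ into the term as its sole mailbox-channel argument, exactly matching the fresh name introduced by \textsc{Chan} in the surrounding \textsc{Pid} case.
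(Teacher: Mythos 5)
Your proposal is correct and follows essentially the same route as the paper's proof: induction on the typing derivation, with \textsc{Par} and \textsc{Pid} handled compositionally via the induction hypothesis and the \lch{} rules \textsc{Par} and \textsc{Chan}, and the \textsc{Actor} case discharged by splitting the linear environment between the buffer (via \textsc{Buf} and Lemma~\ref{lem:lact-lch-term-typing}(1)) and the term (via \textsc{Term} and Lemma~\ref{lem:lact-lch-term-typing}(2) instantiated at the actor's own name $a$).
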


We describe semantics preservation in terms of a simulation theorem: should a
configuration $\config{C}_1$ reduce to a configuration $\config{C}_2$ in
$\lact$, then there exists some configuration $\config{D}$ in $\lch$ such
that $\translateactconfig{\config{C}_1}$ reduces in zero or more steps to $\config{D}$,
with $\config{D} \equiv \translateactconfig{\config{C}_2}$.
To establish the result, we begin by showing that $\lact$ term
reduction can be simulated in $\lch$.

\begin{lemma}[Simulation of $\lact$ term reduction in $\lch$] \label{lem:lact-lch-term-sim}\hfill \\
If $\Gamma \vdash M_1 : A$ and $M_1 \teval M_2$ in $\lact$, then given some
$\alpha$, $\translateactterm{M_1}{\alpha} \teval^* \translateactterm{M_2}{\alpha}$
in $\lch$.
\end{lemma}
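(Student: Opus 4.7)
The plan is to proceed by induction on the derivation of $M_1 \teval M_2$, which has three cases: $\beta$-reduction $(\lambda x . M)\app V \teval M\{V/x\}$; let-return $\efflet{x}{\effreturn{V}}{M} \teval M\{V/x\}$; and the congruence rule $E[N_1] \teval E[N_2]$ whenever $N_1 \teval N_2$.

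Before starting the induction, I would establish a routine substitution lemma by simultaneous induction on $\lact$ terms and values: for any $V$, $W$, $M$, $x$, and any $\alpha \neq x$,
\[
  \translateactval{W\{V/x\}} = \translateactval{W}\{\translateactval{V}/x\},
  \qquad
  \translateactterm{M\{V/x\}}{\alpha} = (\translateactterm{M}{\alpha})\{\translateactval{V}/x\}.
\]
The only nontrivial clause is $\lambda$-abstraction, where the added $\lambda \mathit{ch}$ introduces a fresh binder that does not interfere with the substituted variable, and the application clause, which unfolds through the $\efflet{f}{\ldots}{f\app\alpha}$ wrapper; the side condition $\alpha \neq x$ is maintained since $\alpha$ is a runtime name while $x$ is bound.

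For the $\beta$-reduction case, I unfold the translation to obtain
\[
  \translateactterm{(\lambda x.M)\app V}{\alpha} = \efflet{f}{\bigl((\lambda x . \lambda \mathit{ch} . \translateactterm{M}{\mathit{ch}}) \app \translateactval{V}\bigr)}{f \app \alpha}.
\]
One $\lch$ $\beta$-step on the inner application, followed by the substitution lemma, rewrites the bound computation to a value of the form $\lambda \mathit{ch} . \translateactterm{M\{V/x\}}{\mathit{ch}}$; one further $\efflet$-reduction binds it to $f$, and a final $\lch$ $\beta$-step applied to $f\app\alpha$ yields $\translateactterm{M\{V/x\}}{\alpha}$, as required. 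The let-return case is immediate: $\translateactterm{\efflet{x}{\effreturn{V}}{N}}{\alpha} = \efflet{x}{\effreturn{\translateactval{V}}}{\translateactterm{N}{\alpha}}$ reduces in one step to $(\translateactterm{N}{\alpha})\{\translateactval{V}/x\}$, which is $\translateactterm{N\{V/x\}}{\alpha}$ by the substitution lemma.

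For the congruence case, I would observe that $\lact$ and $\lch$ share the evaluation-context grammar $E ::= [\,] \mid \efflet{x}{E}{N}$, and that the translation commutes with plugging: $\translateactterm{E[N]}{\alpha} = E'[\translateactterm{N}{\alpha}]$ for the evaluation context $E'$ obtained by recursively translating the let-branches of $E$ at parameter $\alpha$. Then the IH applied to $N_1 \teval N_2$ gives $\translateactterm{N_1}{\alpha} \teval^* \translateactterm{N_2}{\alpha}$, and closing under $E'$ via the $\lch$ term-context rule yields $\translateactterm{E[N_1]}{\alpha} \teval^* \translateactterm{E[N_2]}{\alpha}$. The main (though still routine) obstacle is the $\beta$ case: a single $\lact$ step is simulated by several $\lch$ steps because the translation of application wraps the call in an $\efflet$ that threads the mailbox channel, so one must be careful to check that each intermediate $\lch$ step is licensed and that the substitution lemma lines the terms up syntactically.
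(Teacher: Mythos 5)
Your proposal is correct and is exactly the ``straightforward induction'' the paper has in mind for this lemma (the paper omits the proof from both the body and the appendix, but the multi-step simulation of a single $\lact$ $\beta$-step, threading the channel through the $\efflettwo{f}{\cdot}{f \app \alpha}$ wrapper, is precisely what the configuration-level proof of Theorem~\ref{thm:lact-lch-config-sim} relies on in its \textsc{LiftM} case and its ``Let reduction'' steps). The substitution lemma $\translateactterm{M\{V/x\}}{\alpha} = (\translateactterm{M}{\alpha})\{\translateactval{V}/x\}$ with the side condition $\alpha \neq x$, together with commutation of the translation with evaluation contexts, is the right supporting machinery, and your case analysis is complete.
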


Finally, we can see that the translation preserves structural congruences, and
that $\lch$ configurations can simulate reductions in $\lact$.

\begin{lemma} \label{lem:lact-lch-struct}
If $\Gamma ; \Delta \vdash \config{C}$ and $\config{C} \equiv \config{D}$, then
$\translateactconfig{\config{C}} \equiv \translateactconfig{\config{D}}$.
\end{lemma}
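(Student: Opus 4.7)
The plan is to proceed by induction on the derivation of $\config{C} \equiv \config{D}$, with one case per structural congruence rule. Because the translation on configurations is defined homomorphically on parallel composition and name restriction, the commutativity and associativity rules transfer essentially for free: for instance, $\translateactconfig{\config{C} \parallel \config{D}} = \translateactconfig{\config{C}} \parallel \translateactconfig{\config{D}}$ by definition, and then we apply the corresponding $\lch$ structural rule to reorder or reassociate, before folding back using the homomorphism in reverse. The congruence-under-contexts rule is a straightforward induction: I unfold $G[\config{C}]$ according to the shape of $G$, apply the inductive hypothesis to the hole, and rebuild the context using the homomorphism clauses.

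The only case that requires genuine care is scope extrusion, $\config{C} \parallel (\nu a)\config{D} \equiv (\nu a)(\config{C} \parallel \config{D})$ when $a \notin \fv(\config{C})$. After unfolding, I must discharge the side condition $a \notin \fv(\translateactconfig{\config{C}})$ to apply the $\lch$ extrusion rule. To do this, I would first establish an auxiliary lemma stating that the translation preserves free names: $\fv(\translateactconfig{\config{C}}) = \fv(\config{C})$, and similarly $\fv(\translateactterm{M}{\alpha}) \subseteq \fv(M) \cup \{\alpha\}$ on terms and $\fv(\translateactval{V}) = \fv(V)$ on values. This is immediate by induction on the syntax: the term translation only inserts the explicit mailbox parameter $\alpha$ and fresh bound names (e.g.\ the $\textit{chMb}$ in the translation of $\lactspawn{M}$) but never introduces a free occurrence of a $\nu$-bound configuration name. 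On the configuration side, actors $\actor{a}{M}{\seq{V}}$ translate to $\lchbuf{a}{\translateactval{\seq{V}}} \parallel \translateactterm{M}{a}$, which has the same free names as the original actor (the mailbox argument $a$ is already free there).

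With that free-names lemma in hand, the scope extrusion case becomes routine: $\translateactconfig{\config{C} \parallel (\nu a)\config{D}} = \translateactconfig{\config{C}} \parallel (\nu a)\translateactconfig{\config{D}}$, and since $a \notin \fv(\config{C})$ implies $a \notin \fv(\translateactconfig{\config{C}})$, the $\lch$ extrusion rule yields $(\nu a)(\translateactconfig{\config{C}} \parallel \translateactconfig{\config{D}}) = \translateactconfig{(\nu a)(\config{C} \parallel \config{D})}$. I expect the free-names lemma to be the only real content of the proof; everything else is bookkeeping justified directly by the homomorphic clauses in Figure~\ref{fig:lact-lch-trans}.
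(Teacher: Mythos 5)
Your proposal is correct and follows essentially the same route as the paper's proof, which simply notes that the result follows by induction over the possible equivalences from the homomorphic nature of the translation on parallel composition and name restriction. Your additional observation that the scope-extrusion case needs a free-names preservation property is a detail the paper elides, but it does not change the approach.
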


\begin{theorem}[Simulation of $\lact$ configurations in $\lch$]\label{thm:lact-lch-config-sim} \hfill \\
  If $\Gamma ; \Delta \vdash \config{C}_1$ and $\config{C}_1 \ceval \config{C}_2$, then
  there exists some $\config{D}$ such that
  $\translateactconfig{\config{C}_1}\ceval^{*} \config{D}$, with $\config{D} \equiv
  \translateactconfig{\config{C}_2}$.
\end{theorem}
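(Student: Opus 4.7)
The plan is to proceed by induction on the derivation of $\config{C}_1 \ceval \config{C}_2$, doing case analysis on the last rule used. Before the main induction I would prove two auxiliary facts about the translation. First, a \emph{context compatibility} lemma: for every $\lact$ evaluation context $E$ and channel name $ch$, there is a $\lch$ evaluation context $E'$ such that $\translateactterm{E[M]}{ch} = E'[\translateactterm{M}{ch}]$ for all $M$. This follows by a straightforward induction on $E$, using the definition $\translateactterm{\efflet{x}{E}{N}}{ch} = \efflet{x}{\translateactterm{E}{ch}}{\translateactterm{N}{ch}}$ (and the base case $E = [~]$). Second, I would rely on Lemma~\ref{lem:lact-lch-struct} (translation preserves structural congruence) to handle the implicit congruence moves, and on Lemma~\ref{lem:lact-lch-term-sim} for the \textsc{LiftM} case.

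For the configuration rules I would dispatch each case in turn. The \textsc{Send}, \textsc{SendSelf}, \textsc{Self} and \textsc{Receive} cases are direct: after unfolding the translation using the context compatibility lemma, the translated redex is a single $\lch$ primitive ($\lchwd{give}$, $\lchwd{take}$, or $\effreturn{a}$), which reduces in one $\ceval$ step (plus a few administrative $\teval$ steps through $\textsc{LiftM}$ to eliminate the wrapper $\efflet{f}{\translateactval{V}\app\translateactval{W}}{f\app ch}$ introduced for applications). In each case the resulting $\lch$ configuration is literally $\translateactconfig{\config{C}_2}$, possibly after an application of $\equiv$ to reassociate parallel composition. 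The \textsc{Lift} case is immediate from the inductive hypothesis plus Lemma~\ref{lem:lact-lch-struct}, since $\translateactplain{\cdot}$ commutes with both parallel composition and $\nu$.

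The main obstacle is the \textsc{Spawn} case, which is the only step that requires multiple $\lch$ reductions and a genuine use of structural congruence. Starting from $\translateactconfig{\actor{a}{E[\lactspawn{M}]}{\seq V}} = \lchbuf{a}{\translateactval{\seq V}} \parallel E'[\translateactterm{\lactspawn{M}}{a}]$, the translated spawn expression is
\[
\efflet{\textit{chMb}}{\lchnewch{}}{\lchfork{(\translateactterm{M}{\textit{chMb}})};\effreturn{\textit{chMb}}},
\]
so I would fire \textsc{NewCh} (introducing a fresh name $b$ together with the empty buffer $\lchbuf{b}{\epsilon}$), then administratively reduce the $\efflet$, then fire \textsc{Fork} to produce $\translateactterm{M}{b}$ as a sibling computation, and finally reduce the remaining $\effreturn{b};\effreturn{b}$ to expose $E'[\effreturn{b}]$. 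Up to scope extrusion and commutativity of $\parallel$, the result is exactly $(\nu b)(\lchbuf{a}{\translateactval{\seq V}} \parallel E'[\effreturn{b}] \parallel \lchbuf{b}{\epsilon} \parallel \translateactterm{M}{b})$, which is structurally congruent to $\translateactconfig{(\nu b)(\actor{a}{E[\effreturn{b}]}{\seq V} \parallel \actor{b}{M}{\epsilon})}$. Freshness of $b$ with respect to the translated ambient configuration ensures that the $(\nu b)$ binder can be lifted to the top, which is where Lemma~\ref{lem:lact-lch-struct} and care with free names in $E'$ do the real work.
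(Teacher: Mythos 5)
Your proposal is correct and follows essentially the same route as the paper's proof: induction on the derivation of $\config{C}_1 \ceval \config{C}_2$, with the communication cases discharged by a single matching $\lch$ reduction and the \textsc{Spawn} case requiring the \textsc{NewCh}, let, \textsc{Fork}, let sequence followed by scope extrusion of the fresh $(\nu b)$. The context compatibility lemma you state explicitly is used only implicitly in the paper (which silently writes $\translateacttermzero{E}[\cdot] \app a$), so spelling it out is a harmless refinement rather than a different approach.
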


 \section{From $\lch$ to $\lact$}\label{sec:lch-lact}

The translation from $\lact$ into $\lch$ emulates an actor mailbox
using a channel to implement operations which normally rely on the
context of the actor. Though global, the translation is
straightforward due to the limited form of communication supported by
mailboxes.
Translating from $\lch$ into $\lact$ is more challenging, as would
be expected from Figure~\ref{fig:mailboxes-as-pinned-channels}. Each
channel in a system may have a different type; each process may have access to
multiple channels; and (crucially) channels may be freely passed between
processes.

\begin{figure}[t]
~Syntax
\vspace{-0.6em}
\begin{syntax}
  \text{Types} & A,B,C & ::= & \ldots \mid A \times B \mid A + B \mid \listty{A} \mid \mu X . A \mid X
  \\ 
  \text{Values} & V, W & ::= & \ldots \mid \rec{f}{x}{M} \mid (V, W) \mid \inl{V} \mid \inr{W} \mid \roll{V} \\
  \text{Terms} & L,M,N & ::= & \ldots \mid \letin{(x,y)}{V}{M} \mid  \caseof{V}{\mkwd{inl} \; x \mapsto M; \mkwd{inr} \; y \mapsto N} \mid \unroll{V} \\
\end{syntax}
~Additional value typing rules
\hfill \framebox{$\Gamma \vdash V : A$}\\
{\footnotesize
\begin{mathpar}
\inferrule
[Rec]
  { \Gamma, x : A, f : A \to B  \vdash M : B }
  { \Gamma \vdash \rec{f}{x}{M} : A \to B }

\inferrule
[Pair]
  {\Gamma \vdash V: A \\
   \Gamma \vdash W: B}
  {\Gamma \vdash (V, W): A \times B}

\inferrule
[Inl]
  { \Gamma \vdash V : A}
  { \Gamma \vdash \inl{V} : A + B }

\inferrule
[Roll]
  { \Gamma \vdash V : A \{ \mu X . A / X \} }
  { \Gamma \vdash \roll{V} : \mu X . A }
\end{mathpar}
}
~Additional term typing rules
\hfill \framebox{$\Gamma \vdash M : A$}
{\footnotesize
\begin{mathpar}
\inferrule
[Let]
  {\Gamma \vdash V : A \times A' \\\\
   \Gamma, x:A,y:A' \vdash M : B}
  {\Gamma \vdash \letin{(x,y)}{V}{M} : B}

\inferrule
[Case]
  {\Gamma \vdash V : A + A' \\\\
   \Gamma, x : A  \vdash M : B \\ \Gamma, y : A'  \vdash N : B}
  {\Gamma \vdash \caseof{V}{\inl{x} \mapsto M ; \inr{y} \mapsto N} : B }

\inferrule
[Unroll]
  { \Gamma \vdash V : \mu X . A }
  { \Gamma \vdash \unroll{V} : A \{ \mu X . A / X \} }
\end{mathpar}
}
\vspace{-0.25em}
~Additional term reduction rules
\vspace{-1.35em}
\hfill \framebox{$M \teval M'$} \\
{\small
\begin{align*}
  (\rec{f}{x}{M}) \app V & \teval M \{ (\rec{f}{x}{M}) / f, V / x \} \\
  \letin{(x,y)}{(V,W)}{M} &\teval M \{ V / x, W / y \} \\
  \caseof{(\mkwd{inl} \app V)}{\inl{x} \mapsto M ; \inr{y} \mapsto N} &\teval M \{ V / x \} \\
  \unroll{(\roll{V})} &\teval \effreturn{V}
\end{align*}}
\vspace{-0.25em}
~Encoding of lists
{\small
\vspace{-0.6em}
\begin{mathpar}
    \listty{A} \defeq \mu X . 1 + (A \times X)

    \emptylist \defeq \roll{(\inl ())}

    V :: W  \defeq \roll{(\inr{(V, W)})}
  \end{mathpar}
  \vspace{-3em}
  \begin{align*}
    \caseof{V}{\emptylist \mapsto M; x :: y \mapsto N} &\defeq \efflet{z}{\unroll{V}}{\caseof{z}{\inl{()} \mapsto M; \inr{(x, y) \mapsto N}}}
  \end{align*}}
\vspace{-1.75em}
\caption{Extensions to core languages to allow translation from $\lch$ into $\lact$}
\label{fig:lch-lact-extensions}
\end{figure}

\subsection{Extensions to the core language}
We require several more language constructs: sums, products, recursive
functions, and iso-recursive types. Recursive functions are used to
implement an event loop, and recursive types to maintain a term-level
buffer. Products are used to record both a list of values in the
buffer and a list of pending requests.
Sum types allow the disambiguation of the two types of messages sent to an
actor: one to queue a value (emulating $\lchwd{give}$) and one to dequeue a
value (emulating $\lchwd{take}$).
Sums are also used to encode monomorphic
variant types; we write $\variant{\ell_1:A_1, \dots, \ell_n:A_n}$ for
variant types and $\variant{\ell_i = V}$ for variant values.

Figure~\ref{fig:lch-lact-extensions} shows the extensions to the core
term language and their reduction rules; we omit the symmetric rules
for $\termconst{inr}$. With products, sums, and recursive types, we
can encode lists. The typing rules are shown for $\lch$ but can be
easily adapted for $\lact$, and it is straightforward to verify that
the extended languages still enjoy progress and preservation.

\subsection{Translation strategy ($\lch$ into $\lact$)}
\begin{figure}[t]
  \centering
  \begin{subfigure}[t]{0.4\textwidth}
    \centering
    \includegraphics[width=0.75\textwidth]{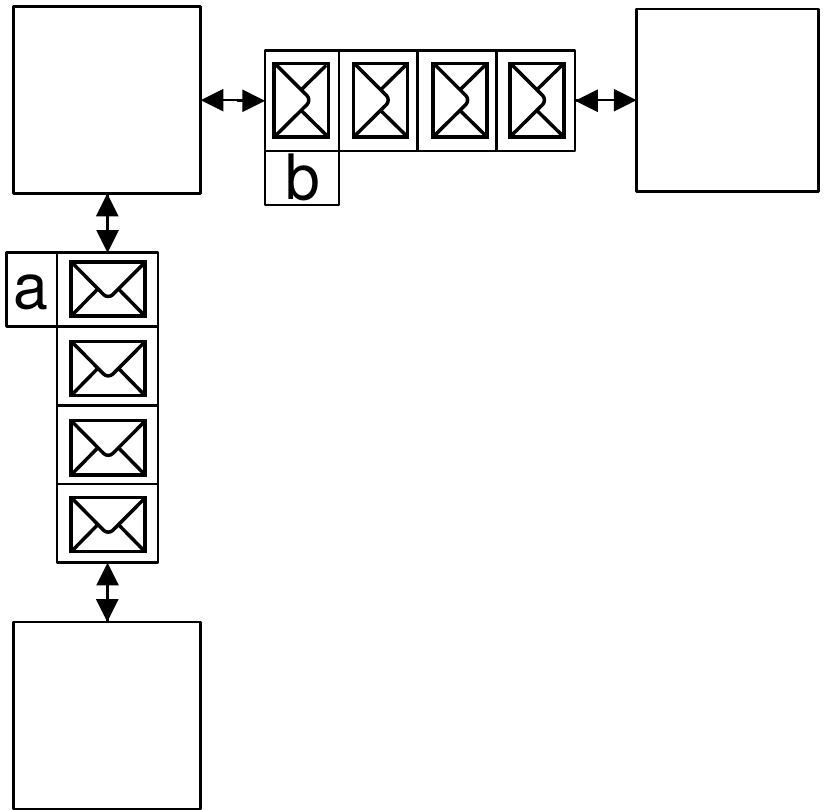}
    \caption{Before Translation}
  \end{subfigure}
  ~
  \begin{subfigure}[t]{0.4\textwidth}
    \centering
    \includegraphics[width=0.75\textwidth]{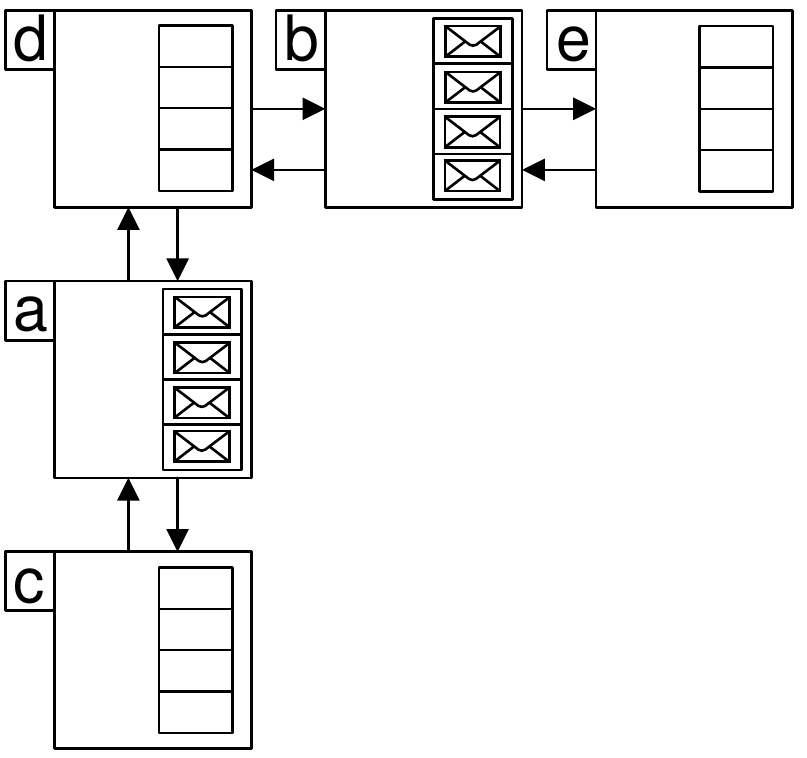}
    \caption{After Translation}
  \end{subfigure}
  \caption{Translation strategy: $\lch$ into $\lact$}
  \label{fig:lch-lact-translation-strategy}
\end{figure}

To translate typed actors into typed channels (shown in
Figure~\ref{fig:lch-lact-translation-strategy}), we emulate each
channel using an actor process, which is crucial in retaining the mobility of
channel endpoints.
Channel types describe the typing of a \emph{communication medium}
between communicating processes, where processes are unaware of the
identity of other communicating parties, and the types of messages
that another party may receive. Unfortunately, the same does not hold
for mailboxes.  Consequently, we require that before translating into
actors, \emph{every channel has the same type}. Although this may seem
restrictive, it is both possible and safe to transform a $\lch$
program with multiple channel types into a $\lch$ program with a
single channel type.

As an example, suppose we have a program which contains channels carrying values
of types $\mkwd{Int}$, $\mkwd{String}$, and $\chan{\mkwd{String}}$. It is possible to
construct a recursive variant type $\mu X . \langle \ell_1 : \mkwd{Int}, \ell_2 : \mkwd{String},
\ell_3 : \chan{X} \rangle$ which can be assigned to all channels in the
system. Then, supposing we wanted to send a $5$ along a channel which previously
had type $\chan{\mkwd{Int}}$, we would instead send a value
$\roll{\variant{\ell_1 = 5}}$ (where $\roll{V}$ is the introduction rule for an
iso-recursive type).
Appendix A~\cite{mm-extended} provides more details.

\subsection{Translation}
We write $\lch$ judgements of the form $\jarg{B} \Gamma \vdash M : A$ for a
term where all channels have type $B$, and similarly for value and configuration
typing judgements. Under such a judgement, we can write $\chanzero{}$ instead
of $\chan{B}$.

\subparagraph{Meta level definitions.}
The majority of the translation lies within the translation of $\lchnewch{}$,
which makes use of the meta-level definitions \metadef{body} and
\metadef{drain}.
The \metadef{body} function emulates a channel. Firstly, the
actor receives a message \textit{recvVal}, which is either
of the form $\inl{V}$ to store a message $V$, or $\inr{W}$ to
request that a value is dequeued and sent to the actor with ID
$W$. We assume a standard implementation of list concatenation
$(\doubleplus)$.
If the message is $\inl{V}$, then $V$ is appended to the tail of the list of
values stored in the channel, and the new state is passed as an argument to
\metadef{drain}. If the message is $\inr{W}$, then the process ID $W$ is
appended to the end of the list of processes waiting for a value.
The \metadef{drain} function satisfies all requests that can be satisfied,
returning an updated channel state. Note that \metadef{drain} does not need
to be recursive, since one of the lists will either be empty or a singleton.

\begin{figure}[t]
~Translation on types (wrt.\ a channel type $C$)
{\small
\begin{mathpar}
  \translatechty{\chanzero} = \pid{\translatechty{C} + \pid{\translatechty{C}}}

\translatechty{A \to B} = \translatechty{A} \lactto{\translatechty{C}} \translatechty{B}

\end{mathpar}}
~Translation on communication and concurrency primitives

{\small
\begin{minipage}{0.5\textwidth}
\[
\begin{array}{rcl}
  \translatechterm{\lchfork{M}} & = & \efflet{x}{\lactspawn{\translatechterm{M}}}{\effreturn{()}} \\
\translatechterm{\lchgive{V}{W}} & = & \lactsend{(\inl{\translatechval{V}})}{\translatechval{W}} \\
\translatechterm{\lchnewch} & = & \lactspawn{(\metadef{body} \, (\emptylist{}, \emptylist))} \\
\end{array}
\]
\end{minipage}
~
\begin{minipage}{0.5\textwidth}
\[
  \begin{array}{rcl}
\translatechterm{\lchtake{V}} & = &
 {
  \begin{aligned}[t]
    & \efflettwo{\textit{selfPid}}{\lactself{}} \\
    & \lactsend{(\mkwd{inr} \, \textit{selfPid})}{\translatechval{V}}; \\
    & \lactrecv{}
  \end{aligned}
  } \\
\end{array}
\]
\end{minipage}
}

~Translation on configurations
{\small
\[
  \begin{array}{ccc}
{\translatechconfig{\config{C}_{1} \parallel {\config{C}}_{2}}} =
  {\translatechconfig{\config{C}_{1}}} \parallel
  {\translatechconfig{\config{C}_{2}}} \qquad \quad &
  \translatechconfig{(\nu a) \config{C}} = (\nu a)
  \translatechconfig{\config{C}} \qquad \quad &
\translatechconfig{M} = (\nu a) (\actor{a}{\translatechterm{M}}{\epsilon})\quad
\\
& & a \text{ is a fresh name }
\end{array}
\]
\vspace{-1.5em}
\[
\begin{array}{rcl}
\translatechconfig{\lchbuf{a}{\seq{V}}} & = &
  \actor{a}{\metadef{body} \, (\translatechval{\seq{V}}, \emptylist)}{\epsilon}
  \qquad \text{where } \translatechval{\seq{V}} = \translatechval{V_0} :: \ldots :: \translatechval{V_n} :: \emptylist
\end{array}
\]}
~Meta level definitions
\vspace{-1.5em}

{\small
\begin{minipage}[t]{0.5\textwidth}
\[
\begin{array}{l}
      \metadef{body} \defeq{}
            \termconst{rec} \; g (\textit{state}) \; . \\
         \quad \efflettwo{\textit{recvVal}}{\lactrecv} \\
         \quad \letintwo{(\textit{vals}, \textit{pids})}{\textit{state}} \\
         \quad \caseofone{\textit{recvVal} } \\
         \qquad \quad \inl{v} \mapsto
         {
         \begin{aligned}[t]
           & \efflettwo{\textit{vals}'}{\listappend{vals}{\listterm{v}}} \\
           & \efflettwo{\textit{state}'}{\metadef{drain} \app
           (\textit{vals}', \textit{pids})} \\
           & g \app (\textit{state}') \\
         \end{aligned}
         } \\
        \qquad \quad \inr{\textit{pid}} \mapsto
         {
         \begin{aligned}[t]
           & \efflettwo{\textit{pids}'}{\listappend{\textit{pids}}{\listterm{pid}}} \\
           & \efflettwo{\textit{state}'}{\metadef{drain} \app
           (\textit{vals}, \textit{pids}')} \\
           & g \app (\textit{state}') \quad \} \\
         \end{aligned}
         }
\end{array}
\]
\end{minipage}
~
\begin{minipage}[t]{0.4\textwidth}
\[
\begin{array}{l}
    \metadef{drain} \defeq{} \lambda x . \\
        \quad \letintwo{(\textit{vals}, \textit{pids})}{x} \\
        \quad \caseofone{\textit{vals}} \\
        \qquad \emptylist \mapsto \effreturn{(\textit{vals}, \textit{pids})} \\
        \qquad \listcons{v}{vs} \mapsto \\
        \qquad \quad \caseofone{\textit{pids}} \\
        \qquad \quad \quad \quad \emptylist{} \mapsto \effreturn{(\textit{vals}, \textit{pids})}\\
        \qquad \quad \quad \quad \listcons{\textit{pid}}{\textit{pids}} \mapsto
        \begin{aligned}[t]
           & \lactsend{v}{\textit{pid}} ; \\
           & \effreturn{(\textit{vs}, \textit{pids})} \\
        \end{aligned} \\
        \qquad \quad \} \quad \}
\end{array}
\]
\end{minipage}
}
\caption{Translation from $\lch$ into $\lact$}
\label{fig:lch-lact-translation}
\end{figure}

\subparagraph{Translation on types.}
Figure~\ref{fig:lch-lact-translation} shows the translation from $\lch$
into $\lact$.
The translation function on types
$\translatechty{-}$ is defined with respect to the type of all channels $C$ and
is used to annotate function arrows and to assign a parameter to
$\mkwd{ActorRef}$ types. The (omitted) translations on sums,
products, and lists are homomorphic.
The translation of $\chanzero{}$ is $\pid{\translatechty{C} +
\pid{\translatechty{C}}}$, meaning an
actor which can receive a request to either store a value of type
$\translatechty{C}$, or to dequeue a value and send it to a process ID of
type $\pid{\translatechty{C}}$.

\subparagraph{Translation on communication and concurrency primitives.}
We omit the translation on values and functional terms, which are homomorphisms.
Processes in $\lch$ are anonymous, whereas all actors in $\lact$ are
addressable; to emulate $\lchwd{fork}$, we therefore discard the reference
returned by $\lactwd{spawn}$. The translation of
$\lchwd{give}$ wraps the translated value to be sent in the left injection of a
sum type, and sends to the translated channel name $\translatechval{W}$.
To emulate $\lchwd{take}$, the process ID (retrieved using $\lactself{}$) is
wrapped in the right injection and sent to the actor emulating the channel, and
the actor waits for the response message.
Finally, the translation of $\lchnewch{}$ spawns a new actor to execute
\metadef{body}.

\subparagraph{Translation on configurations.}
The translation function $\translatechconfig{-}$ is homomorphic on parallel
composition and name restriction. Unlike $\lch$, a term cannot exist outwith an
enclosing actor context in $\lact$, so the translation of a
process evaluating term $M$ is an actor evaluating $\translatechterm{M}$ with
some fresh name $a$ and an empty mailbox,
enclosed in a name restriction.
A buffer is translated to an actor with an empty mailbox, evaluating
\metadef{body} with a state containing the (term-level) list of values
previously stored in the buffer.

Although the translation from $\lch$ into $\lact$, is much more verbose than
the translation from $\lact$ to $\lch$, it is (once all channels have the same
type) a \emph{local transformation}~\cite{felleisen:expressiveness}.

\subsection{Properties of the translation}
Since all channels in the
source language of the translation have the same type, we can assume that each
entry in the codomain of $\Delta$ is the same type $B$.

\begin{definition}[Translation of typing environments wrt.\ a channel type $B$] \hfill
  \begin{enumerate}
    \item If $\Gamma = \alpha_1 {:} A_1, \ldots, \alpha_n : A_n$, define
        $\translatechty{\Gamma} = \alpha_1 : \translatechty{A_1}, \ldots, \alpha_n : \translatechty{A_n}$.
    \item Given a $\Delta = a_1 : B, \ldots, a_n : B$, define
      $\translatechty{\Delta} = \\ a_1 : (\translatechty{B} +
      \pid{\translatechty{B}}), \ldots,  a_n :
      (\translatechty{B} + \pid{\translatechty{B}})$.
  \end{enumerate}
\end{definition}

\noindent
The translation on terms preserves typing.
\begin{lemma}[$\translatechval{-}$ preserves typing (terms and values)]\label{lem:lch-lact-term-typing}\hfill \\ \vspace{-1.5em}
  \begin{enumerate}
    \item If $\jarg{B} \Gamma \vdash V {:} A$, then
      $\translatechty{\Gamma} \vdash \translatechval{V} {:}
      \translatechty{A}$.
    \item If $\jarg{B} \Gamma  \vdash M {:} A$, then
      $\translatechty{\Gamma} \mid \translatechty{B} \vdash
      \translatechterm{M} {:} \translatechty{A}$.
  \end{enumerate}
\end{lemma}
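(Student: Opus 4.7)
The plan is to proceed by simultaneous induction on the derivations of $\jarg{B} \Gamma \vdash V : A$ and $\jarg{B} \Gamma \vdash M : A$. The value and computation judgements are mutually defined, so they must be proved together in one induction. All cases arising from the purely functional fragment ($\lambda$-abstraction, application, variables, unit, let/return, pairs, sums, recursion, iso-recursive types, lists) are immediate: the translation is a homomorphism on these constructs, so the appropriate $\lact$ typing rule fires directly on the translated premises obtained from the induction hypothesis, with the single observation that every arrow $A \to B$ is rewritten to $\translatechty{A} \lactto{\translatechty{C}} \translatechty{B}$ carrying the fixed mailbox type $\translatechty{C}$ expected by the enclosing actor.

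The only interesting cases are the four communication/concurrency primitives. For $\lchfork{M}$, the induction hypothesis gives $\translatechty{\Gamma} \mid \translatechty{B} \vdash \translatechterm{M} : \one$; wrapping with $\lactspawn$ yields a fresh $\pid{\translatechty{B}}$, which is discarded by the enclosing $\termconst{let}$, so the overall type is $\one$. For $\lchgive{V}{W}$, the IH gives $\translatechval{V} : \translatechty{B}$ and $\translatechval{W} : \pid{\translatechty{B} + \pid{\translatechty{B}}}$ (since $\jarg{B}$ forces all channels to carry the same payload type $B$), so $\inl{\translatechval{V}}$ lies in the sum and \textsc{Send} closes the case. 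For $\lchtake{V}$, observe that $\lactself{}$ has type $\pid{\translatechty{B}}$ under mailbox type $\translatechty{B}$, hence $\inr{\textit{selfPid}}$ has type $\translatechty{B} + \pid{\translatechty{B}}$ as required by the sent value, and the final $\lactrecv{}$ produces a value of the enclosing mailbox type $\translatechty{B}$, matching $\translatechty{A}$ for the translation of a $\lchwd{take}$ expression typed at $B$.

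The main obstacle is the $\lchnewch{}$ case, which requires a separate well-typedness check for the meta-level definitions $\metadef{body}$ and $\metadef{drain}$. I would first establish, as an auxiliary lemma, that
\[
  \cdot \vdash \metadef{body} : (\listty{\translatechty{B}} \times \listty{\pid{\translatechty{B}}}) \lactto{\translatechty{B} + \pid{\translatechty{B}}} \one
\]
and
\[
  \cdot \vdash \metadef{drain} : (\listty{\translatechty{B}} \times \listty{\pid{\translatechty{B}}}) \lactto{\translatechty{B} + \pid{\translatechty{B}}} (\listty{\translatechty{B}} \times \listty{\pid{\translatechty{B}}}).
\]
These follow by straightforward derivations using \textsc{Rec}, \textsc{EffLet}, \textsc{Case}, \textsc{Recv}, \textsc{Send}, and the list/pair rules from Figure~\ref{fig:lch-lact-extensions}; the effect annotation $\translatechty{B} + \pid{\translatechty{B}}$ is the mailbox type of any actor running $\metadef{body}$, which is precisely what $\lactrecv{}$ in $\metadef{body}$ consumes and what $\lactsend$ inside $\metadef{drain}$ produces (wrapped through $\inr$). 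Once these auxiliary typings are in place, $\lactspawn{(\metadef{body}\,(\emptylist,\emptylist))}$ has type $\pid{\translatechty{B} + \pid{\translatechty{B}}} = \translatechty{\chanzero}$ by the \textsc{Spawn} rule, closing the final case. The remaining minor bookkeeping is checking that the extension of $\Gamma$ with a channel/actor name $a : \chanzero$ translates to $a : \pid{\translatechty{B} + \pid{\translatechty{B}}}$ consistently in the \textsc{Var} case, which is handled pointwise by the definition of $\translatechty{\Gamma}$.
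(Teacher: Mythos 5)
Your proposal is correct and takes essentially the same route as the paper's proof: a simultaneous induction on the two typing judgements in which the homomorphic cases are immediate, the four communication primitives are checked directly against \textsc{Spawn}, \textsc{Send}, \textsc{Self}, and \textsc{Recv}, and the $\lchnewch{}$ case reduces to verifying the types of $\metadef{body}$ and $\metadef{drain}$ (which the paper does inline as a sketch rather than as a separate auxiliary lemma). Your effect annotation $\translatechty{B}+\pid{\translatechty{B}}$ on $\metadef{drain}$'s arrow is the one actually required for its application inside $\metadef{body}$ to typecheck.
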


\noindent
The translation on configurations also preserves typeability.
We write $\Gamma \compat
\Delta$ if for each $a : A \in \Delta$, we have that $a : \chan{A} \in \Gamma$;
for closed configurations this is ensured by \textsc{Chan}. This is necessary since
the typing rules for $\lact$ require that the local actor name is present in the
term environment to ensure preservation in the presence of
$\lactself{}$, but there is no such restriction in $\lch$.

\begin{theorem}[$\translatechconfig{-}$ preserves typeability (configurations)]\hfill \\
  \label{thm:lch-lact-config-typing}
If $\jarg{A} \Gamma ; \Delta \vdash \config{C}$ with $\Gamma \compat \Delta$,
then $\translatechty{\Gamma} ; \translatechty{\Delta} \vdash
\translatechconfig{\config{C}}$.
\end{theorem}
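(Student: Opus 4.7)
The plan is to proceed by induction on the derivation of $\jarg{A} \Gamma ; \Delta \vdash \config{C}$, one case per configuration typing rule. Before beginning the induction I would establish two auxiliary typing facts for the meta-level definitions: that $\metadef{body}$ has type $(\listty{\translatechty{A}} \times \listty{\pid{\translatechty{A}}}) \lactto{\translatechty{A} + \pid{\translatechty{A}}} \one$, and that $\metadef{drain}$ has type $(\listty{\translatechty{A}} \times \listty{\pid{\translatechty{A}}}) \lactto{D} (\listty{\translatechty{A}} \times \listty{\pid{\translatechty{A}}})$ for any mailbox annotation $D$, since its body only performs $\lactsend$ and so is insensitive to the enclosing mailbox type. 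Both check by direct inspection, unfolding the encodings of sums, pairs, and lists in Figure~\ref{fig:lch-lact-extensions}.

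The \textsc{Par} case is immediate from the inductive hypotheses on each premise together with the homomorphic translation of $\parallel$; the linear environment splits the same way on both sides. The \textsc{Chan} case is also straightforward: the inductive hypothesis gives $\translatechty{\Gamma}, a : \pid{\translatechty{A} + \pid{\translatechty{A}}} ; \translatechty{\Delta}, a : \translatechty{A} + \pid{\translatechty{A}} \vdash \translatechconfig{\config{C}'}$, which is exactly the premise of $\lact$'s \textsc{Pid} rule, using $\translatechty{\chanzero{}} = \pid{\translatechty{A} + \pid{\translatechty{A}}}$.

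For the \textsc{Term} case $\config{C} = M$ with $\Gamma \vdash M : \one$ and $\Delta$ empty. Part (2) of Lemma~\ref{lem:lch-lact-term-typing} yields $\translatechty{\Gamma} \mid \translatechty{A} \vdash \translatechterm{M} : \one$; weakening by the fresh binding $a : \pid{\translatechty{A}}$, then applying \textsc{Actor} followed by \textsc{Pid}, gives the required judgement for $(\nu a)(\actor{a}{\translatechterm{M}}{\epsilon})$.

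The main obstacle is the \textsc{Buf} case. Here $\config{C} = \lchbuf{a}{\seq{V}}$ with $a : A \in \Delta$ and $\Gamma \vdash V_i : A$ for each $V_i$, and the translation is the actor $\actor{a}{\metadef{body}(\translatechval{\seq{V}}, \emptylist)}{\epsilon}$. Part (1) of Lemma~\ref{lem:lch-lact-term-typing} gives $\translatechty{\Gamma} \vdash \translatechval{V_i} : \translatechty{A}$, so the translated encoded list has type $\listty{\translatechty{A}}$ and the pair has the domain type demanded by $\metadef{body}$. The hypothesis $\Gamma \compat \Delta$ is used exactly here: it guarantees that $a : \chan{A} \in \Gamma$, hence $a : \pid{\translatechty{A} + \pid{\translatechty{A}}} \in \translatechty{\Gamma}$, which is precisely what \textsc{Actor} requires to type the enclosing actor at mailbox type $\translatechty{A} + \pid{\translatechty{A}}$. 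The bulk of the technical care in the whole proof sits in verifying the types of $\metadef{body}$ and $\metadef{drain}$ --- in particular, checking that the internal $\lactsend{v}{\textit{pid}}$ in $\metadef{drain}$ matches $v : \translatechty{A}$ against the mailbox of $\textit{pid} : \pid{\translatechty{A}}$ --- but once those auxiliary lemmas are in place, all four cases close cleanly via their inductive hypotheses.
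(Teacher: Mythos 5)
Your proposal is correct and follows essentially the same route as the paper's proof: induction on the configuration typing derivation, with \textsc{Par} and \textsc{Chan} closing homomorphically, \textsc{Term} via part (2) of Lemma~\ref{lem:lch-lact-term-typing} plus weakening, \textsc{Actor}, and \textsc{Pid}, and \textsc{Buf} via part (1) together with the typing of $\metadef{body}$ and the use of $\Gamma \compat \Delta$ to obtain $a : \chan{A} \in \Gamma$. The only cosmetic difference is that you factor the typing of $\metadef{body}$ and $\metadef{drain}$ into upfront auxiliary facts (and correctly note that $\metadef{drain}$'s mailbox annotation is arbitrary), whereas the paper establishes the same facts inline in the \textsc{NewCh} case of the term-level lemma and refers back to them.
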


\noindent
It is clear that reduction on translated $\lch$ terms can simulate reduction in
$\lact$.
\begin{lemma}\label{lem:lch-lact-term-sim}
If $\jarg{B} \Gamma \vdash M_1 : A $ and $M_1 \teval M_2$, then
$\translatechterm{M_1} \teval \translatechterm{M_2}$.
\end{lemma}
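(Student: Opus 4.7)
The plan is to prove the statement by induction on the derivation of $M_1 \teval M_2$. The essential observation is that the translation $\translatechterm{-}$ acts homomorphically on all the purely functional constructs (application, $\termconst{let}$, $\termconst{return}$, pairs, sums, roll/unroll, recursion), and that term-level reduction in $\lch$ only involves these functional constructs; the communication and concurrency primitives never reduce at the term level. So each source step should correspond to exactly one step after translation.

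Before tackling the induction, I would establish a small but crucial auxiliary lemma: \emph{the translation commutes with substitution}, i.e., $\translatechterm{M\{V/x\}} = \translatechterm{M}\{\translatechval{V}/x\}$ and $\translatechval{W\{V/x\}} = \translatechval{W}\{\translatechval{V}/x\}$. This is by straightforward simultaneous induction on the syntax of $M$ and $W$, using the fact that the translation is defined compositionally and $x$ is a term variable that does not interfere with any meta-level names (such as $\textit{selfPid}$) introduced by the translation of primitives. Armed with this, the $\beta$-reduction cases $(\lambda x . M)\app V \teval M\{V/x\}$ and $\efflet{x}{\effreturn{V}}{M} \teval M\{V/x\}$ follow immediately: the translation of $\lambda x.M$ is $\lambda x.\translatechterm{M}$ and the translation of $\effreturn{V}$ is $\effreturn{\translatechval{V}}$, so applying the corresponding $\lact$ reduction rule and invoking the substitution lemma closes the case. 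The analogous cases for $\termconst{rec}$, pair destruction, $\termconst{case}$, and $\unroll{(\roll{V})}$ are handled identically.

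For the congruence case $E[M_1] \teval E[M_2]$ with $M_1 \teval M_2$, I would verify by induction on $E$ that $\translatechterm{E[M]} = E'[\translatechterm{M}]$ where $E'$ is the evident $\lact$ evaluation context obtained by translating $E$ (recall $E ::= [\cdot] \mid \efflet{x}{E}{M}$, and the translation of $\termconst{let}$ is homomorphic). Then the inductive hypothesis gives $\translatechterm{M_1} \teval \translatechterm{M_2}$, and the $\lact$ evaluation-context rule lifts this to $E'[\translatechterm{M_1}] \teval E'[\translatechterm{M_2}]$.

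I do not anticipate a serious obstacle here: the proof is routine once the substitution lemma is in hand. The only point requiring minor care is the substitution lemma itself, specifically checking that translations of primitives such as $\translatechterm{\lchtake{V}}$, which introduce fresh meta-level binders like $\textit{selfPid}$, do not capture free variables of the substituted value $V$; this is resolved by the usual Barendregt convention on freshness of bound names introduced by the translation.
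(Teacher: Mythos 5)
Your proposal is correct and follows exactly the route the paper intends: the paper omits this proof as one of its ``straightforward inductions,'' and the standard argument is precisely your induction on the derivation of $M_1 \teval M_2$, using the fact that $\translatechterm{-}$ is homomorphic on all functional constructs together with a substitution lemma $\translatechterm{M\{V/x\}} = \translatechterm{M}\{\translatechval{V}/x\}$ (with the usual freshness convention for binders such as $\textit{selfPid}$ introduced by the translation of primitives). Your observation that each source step maps to exactly one target step is also right, which is why the lemma can be stated with $\teval$ rather than $\teval^*$.
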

\begin{comment}
  \begin{definition}
  A pair of list terms $V = (W_1, W_2)$ is a \emph{stable buffer} if either:

  \begin{enumerate}
  \item $V = (\emptylist{}, W_2)$
  \item $V = (W_1, \emptylist)$
  \end{enumerate}
  \end{definition}
\end{comment}
Finally, we show that $\lact$ can simulate $\lch$.
\begin{lemma}
  If $\Gamma ; \Delta \vdash \config{C}$ and $\config{C} \equiv \config{D}$,
  then $\translatechconfig{\config{C}} \equiv \translatechconfig{\config{D}}$.
\end{lemma}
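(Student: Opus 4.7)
The plan is to proceed by induction on the derivation of $\config{C} \equiv \config{D}$, exploiting the fact that $\translatechconfig{-}$ is a homomorphism on the configuration constructors involved in structural congruence (parallel composition and name restriction).

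First I would handle the two algebraic axioms. For commutativity, $\translatechconfig{\config{C}_1 \parallel \config{C}_2} = \translatechconfig{\config{C}_1} \parallel \translatechconfig{\config{C}_2}$, and by the commutativity rule of $\lact$'s own structural congruence this is congruent to $\translatechconfig{\config{C}_2} \parallel \translatechconfig{\config{C}_1} = \translatechconfig{\config{C}_2 \parallel \config{C}_1}$. Associativity is analogous. The congruence-under-context case $G[\config{C}] \equiv G[\config{D}]$ follows by a straightforward inner induction on $G$, appealing to the homomorphism equations together with the induction hypothesis.

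The subtler case is scope extrusion, $\config{C} \parallel (\nu a)\config{D} \equiv (\nu a)(\config{C} \parallel \config{D})$ under the side condition $a \notin \fv(\config{C})$. Translating yields $\translatechconfig{\config{C}} \parallel (\nu a) \translatechconfig{\config{D}}$ on the left and $(\nu a)(\translatechconfig{\config{C}} \parallel \translatechconfig{\config{D}})$ on the right, and these are related by the corresponding scope-extrusion rule in $\lact$ provided $a \notin \fv(\translatechconfig{\config{C}})$. The key auxiliary observation is therefore that the translation does not introduce new free names: the clauses for $\parallel$, $(\nu a)$, and $\lchbuf{a}{\seq{V}}$ are manifestly free-name-preserving, the value translation is a homomorphism, and the term clauses for $\lchwd{give}$, $\lchwd{take}$, $\lchwd{fork}$, and $\lchwd{newCh}$ only mention meta-level definitions (\metadef{body}, \metadef{drain}) whose free variables are bound by their own $\lambda$/$\termconst{rec}$ binders. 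Hence $\fv(\translatechconfig{\config{C}}) \subseteq \fv(\config{C})$ and the side condition transfers.

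The main subtlety will be the clause $\translatechconfig{M} = (\nu a)(\actor{a}{\translatechterm{M}}{\epsilon})$, which introduces a fresh name $a$. This case only arises when structural congruence operates on a term leaf, and since the introduced $a$ is immediately bound by $\nu$ and is chosen fresh, it contributes nothing to the free variables of the translated configuration; alpha-equivalence lets us always pick a representative distinct from any name mentioned elsewhere in the derivation. Once that bookkeeping is handled, the induction goes through routinely.
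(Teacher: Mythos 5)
Your proof is correct and follows essentially the same route as the paper's: the paper disposes of this lemma by induction over the typing derivation, examining each congruence and appealing to the homomorphic nature of $\translatechconfig{-}$ on parallel composition and name restriction, which is exactly the core of your argument. Your additional care over the scope-extrusion side condition (that $\fv(\translatechconfig{\config{C}}) \subseteq \fv(\config{C})$) and over the fresh name introduced by the clause $\translatechconfig{M} = (\nu a)(\actor{a}{\translatechterm{M}}{\epsilon})$ is detail the paper leaves implicit, and is welcome.
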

\begin{comment}
\begin{lemma}
  If $\Gamma ; \Delta \vdash \config{C}$ and $\config{C} \equiv \config{D} $
\end{lemma}
\begin{qedproof}
  By induction on the derivation of $\Gamma ; \Delta \vdash \config{C}$,
  examining possible congruences.
\end{qedproof}
\end{comment}

\begin{theorem}[Simulation ($\lact$ configurations in $\lch$)]\label{thm:lch-lact-config-sim} \hfill \\
If $\jarg{A} \Gamma ; \Delta \vdash \config{C}_1$, and $\config{C}_1 \ceval \config{C}_2$,
then there exists some $\config{D}$ such that $\translatechconfig{\config{C}_1}
\ceval^* \config{D}$ with $\config{D} \equiv \translatechconfig{\config{C}_2}$.
\end{theorem}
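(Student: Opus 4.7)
The plan is to induct on the derivation of $\config{C}_1 \ceval \config{C}_2$, performing case analysis on the last reduction rule applied. The structural cases, \textsc{Lift} over parallel composition and $\nu$-restriction and reduction modulo $\equiv$, follow by the induction hypothesis together with the fact that $\translatechconfig{-}$ is homomorphic on $\parallel$ and $\nu$ and the preceding congruence-preservation lemma. The \textsc{LiftM} case follows from Lemma \ref{lem:lch-lact-term-sim}: a term step $M_1 \teval M_2$ inside a process $M_1$ translates to a term step inside the enclosing actor $\actor{a}{\translatechterm{M_1}}{\epsilon}$, which is exactly a $\lact$ reduction.

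For the genuine concurrency primitives each $\lch$ step unfolds into several $\lact$ steps, and I stitch these together to land on $\translatechconfig{\config{C}_2}$ up to $\equiv$. For \textsc{Fork}, the translated $\efflettwo{x}{\lactspawn{\translatechterm{M}}}$ reduces by \textsc{Spawn} followed by one $\mkwd{EffLet}$ step, yielding a fresh actor running $\translatechterm{M}$ in parallel with a continuation returning $()$; $\equiv$-renaming the fresh name matches $\translatechconfig{E[\effreturn{()}] \parallel M}$. For \textsc{NewCh}, a single \textsc{Spawn} step on $\lactspawn{(\metadef{body}(\emptylist,\emptylist))}$ produces exactly $\translatechconfig{(\nu a)(E[\effreturn{a}] \parallel \lchbuf{a}{\epsilon})}$.

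The two interesting cases are \textsc{Give} and \textsc{Take}, where the channel actor must consume a request and return to a stable state at the head of $\metadef{body}$. For \textsc{Give}, the sender fires \textsc{Send} with payload $\inl{\translatechval{W}}$; the channel actor then steps through \textsc{Receive}, the $\mathsf{case}$ on $\inl{v}$, a list append producing $\translatechval{\seq{V}} \doubleplus \listterm{\translatechval{W}}$, a call to $\metadef{drain}$ that short-circuits because the pending-pid list is empty, and finally the recursive unrolling of $\metadef{body}$. For \textsc{Take}, the requester first evaluates $\lactwd{self}$, sends $\inr{\mathit{selfPid}}$, and blocks on $\lactrecv$; the channel actor receives, dispatches on $\inr{\mathit{pid}}$, appends the requester's pid to the singleton pids list, and invokes $\metadef{drain}$; since both the vals list $W :: \translatechval{\seq{V}}$ and the pids list are non-empty, $\metadef{drain}$ fires exactly one $\lactwd{send}$ of $W$ to the requester and returns residual state $(\translatechval{\seq{V}}, \emptylist)$; the requester then consumes $W$ via \textsc{Receive}. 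Reassembling these steps gives a configuration $\equiv$-equal to $\translatechconfig{E[\effreturn{W}] \parallel \lchbuf{a}{\seq{V}}}$.

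The main obstacle is the bookkeeping for \textsc{Take}: I need an invariant guaranteeing that in every translated configuration at least one of the vals list or pids list threaded through $\metadef{body}$ is empty, so that the non-recursive $\metadef{drain}$ suffices to return to a head-of-loop state matching $\translatechconfig{\config{C}_2}$ rather than leaving the actor drained-halfway. This invariant holds for direct translations $\translatechconfig{\lchbuf{a}{\seq{V}}}$ (where pids is always $\emptylist$), and each of the cases above preserves it, so reductions always realign with a state on the nose of $\metadef{body}$. With the invariant in hand, the $\equiv$-match in the conclusion is a matter of consistent $\alpha$-renaming of the freshly generated actor names against the $\nu$-bound channel names in $\config{C}_2$.
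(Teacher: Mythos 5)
Your proposal matches the paper's proof essentially step for step: induction on the reduction derivation, homomorphic handling of the structural and lifting cases, and explicit multi-step unfoldings of the translated \textsc{Fork}, \textsc{NewCh}, \textsc{Give}, and \textsc{Take} redexes through $\metadef{body}$ and $\metadef{drain}$, reassembled up to $\equiv$. The ``stable buffer'' invariant you flag as the main obstacle is needed only in the trivial form in which it is automatic here --- the theorem relates source configurations to their \emph{translations}, and $\translatechconfig{\lchbuf{a}{\seq{V}}}$ always starts $\metadef{body}$ with an empty pids list, so each case begins and ends at such a state --- which is exactly how the paper's proof proceeds.
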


\subparagraph{Remark.}
The translation from $\lch$ into $\lact$ is more involved than the translation
from $\lact$ into $\lch$ due to the asymmetry shown in
Figure~\ref{fig:mailboxes-as-pinned-channels}. Mailbox types are less precise;
generally taking the form of a large variant type.

Typical implementations of this translation use synchronisation mechanisms such
as futures or shared memory (see \textsection{}\ref{sec:extensions-sync}); the
implementation shown in the Hopac documentation uses ML
references~\cite{hopac-actors}.
Given the ubiquity of these abstractions, we were surprised to
discover that the additional expressive power of synchronisation is
not \emph{necessary}.
Our original attempt at a synchronisation-free translation was
type-directed. We were surprised to discover that the translation can
be described so succinctly after factoring out the coalescing step,
which precisely captures the type pollution problem.

 \section{Extensions}\label{sec:extensions}

In this section, we discuss common extensions to channel- and
actor-based languages.
Firstly, we discuss synchronisation, which is ubiquitous in practical
implementations of actor-inspired languages. Adding synchronisation
simplifies the translation from channels to actors, and relaxes the
restriction that all channels must have the same type.
Secondly, we consider an extension with Erlang-style selective
receive, and show how to encode it in $\lact$.
Thirdly, we discuss how to nondeterministically choose a message from
a collection of possible sources, and
finally, we discuss what the translations tell us about the nature of
behavioural typing disciplines for actors.
Establishing exactly how the latter two extensions fit into our
framework is the subject of ongoing and future work.

\begin{figure}[t]
~Additional types, terms, configuration reduction rule, and equivalence
{\small
\begin{mathpar}
\text{Types} ::= \pidtwo{A}{B} \mid \ldots

\text{Terms} ::= \lactwait{V} \mid \ldots
\end{mathpar}
\[
  \begin{array}{c}
  \actor{a}{E[\lactwait{b}]}{\seq{V}} \parallel \actor{b}{\effreturn{V'}}{\seq{W}} \ceval
  \actor{a}{E[\effreturn{V'}]}{\seq{V}} \parallel \actor{b}{\effreturn{V'}}{\seq{W}} \\
  (\nu a) (\actor{a}{\effreturn{V}}{\seq{V}}) \parallel \config{C} \equiv \config{C}
  \end{array}
\]
}
~Modified typing rules for terms
\hfill \framebox{$\Gamma \mid A, B \vdash M : A$}
{\footnotesize
\begin{mathpar}
  \inferrule
  [Sync-Spawn]
  { \Gamma \mid A, B  \vdash M : B }
  { \Gamma \mid C, C' \vdash \lactspawn{M} : \pidtwo{A}{B} }

  \inferrule
  [Sync-Wait]
  { \Gamma \vdash V : \pidtwo{A}{B} }
  { \Gamma \mid C, C' \vdash \lactwait{V} : B }

  \inferrule
  [Sync-Self]
  { }
  { \Gamma \mid A, B \vdash \lactself{} : \pidtwo{A}{B} }
\end{mathpar}
}
~Modified typing rules for configurations
\hfill \framebox{$\Gamma ; \Delta \vdash \config{C}$}
{ \footnotesize
\begin{mathpar}
  \inferrule
  [Sync-Actor]
  { \Gamma, a{:} \pidtwo{A}{B} \vdash M {:} B \\\\ (\Gamma, a{:} \pidtwo{A}{B} \vdash
  V_i {:} A)_i }
  { \Gamma, a: \pidtwo{A}{B} ; a{:} (A, B) \vdash \actor{a}{M}{\seq{V}} }

  \inferrule
  [Sync-Nu]
  { \Gamma, a : \pidtwo{A}{B} ; \Delta, a : (A, B) \vdash \config{C} }
  { \Gamma ; \Delta \vdash (\nu a) \config{C} }
\end{mathpar}
}

~Modified translation
\vspace{-1em}

{\small
  \begin{minipage}{0.5\textwidth}
\begin{align*}
  &\translatechtysync{\chan{A}} = \\
  & \qquad \pidtwo{\translatechtysync{A} +
  \pidtwo{\translatechtysync{A}}{\translatechtysync{A}}  }{\one} \\
  &\translatechtysync{A \to B} = \translatechtysync{A} \lacttotwo{C}{\one} \translatechtysync{B} \\
\end{align*}
\end{minipage}
~
\begin{minipage}{0.5\textwidth}
\begin{align*}
  &\translatechterm{\lchtake{V}} =
  {
  \begin{aligned}[t]
    & \efflettwonoin{\textit{requestorPid}}{\lactwd{spawn} \: (} \\
    & \qquad
        \begin{aligned}[t]
          & \efflettwo{\textit{selfPid}}{\lactself{}} \\
          & \lactsend{(\inr{\textit{selfPid}})}{\translatechval{V}}; \\
          & \lactrecv{} ) \: \termconst{in}
        \end{aligned}\\
    & \lactwait{\textit{requestorPid}}
  \end{aligned}
  }  
\end{align*}
\end{minipage}
}

\caption{Extensions to add synchronisation to $\lact$}
\label{fig:lact-sync-extension}
\end{figure}

\subsection{Synchronisation}\label{sec:extensions-sync}

Although communicating with an actor via asynchronous message passing
suffices for many purposes, implementing ``call-response'' style interactions
can become cumbersome.
Practical implementations such as Erlang and Akka implement some way
of synchronising on a result: Erlang achieves this by generating a
unique reference to send along with a request, \emph{selectively
receiving} from the mailbox to await a response tagged with the same
unique reference. Another method of synchronisation embraced by the
Active Object community~\cite{lavender:active-objects,
deboer:complete-future, johnsen:abs} and Akka
is to generate a \emph{future variable} which is populated with the
result of the call.

Figure~\ref{fig:lact-sync-extension} details an extension of $\lact$
with a synchronisation primitive, $\lactwd{wait}$, which encodes a
deliberately restrictive form of synchronisation capable of emulating
futures. The key idea behind $\lactwd{wait}$ is it allows some actor $a$ to
block until an actor $b$ evaluates to a value; this value is then returned
directly to $a$, bypassing the mailbox.
A variation of the $\lactwd{wait}$
primitive is implemented as part of the Links~\cite{cooper:links}
concurrency model. This is but one of multiple ways of allowing
synchronisation; first-class futures, shared reference cells, or selective
receive can achieve a similar result.  We discuss $\lactwd{wait}$ as it avoids
the need for new configurations.

We replace the unary type constructor for process IDs with a binary
type constructor $\pidtwo{A}{B}$, where $A$ is the type of messages
that the process can receive from its mailbox, and $B$ is the type of
value to which the process will eventually evaluate. We assume that
the remainder of the primitives are modified to take the additional
effect type into account.
We can now adapt the previous translation from $\lch$ to $\lact$, making
use of $\lactwd{wait}$ to avoid the need for the coalescing
transformation.
Channel references are translated into actor references which can either
receive a value of type $A$, or the PID of a process which can receive a value
of type $A$ and will eventually evaluate to a value of type $A$. Note
that the unbound annotation $C, 1$ on function arrows reflects that
the mailboxes can be of \emph{any} type, since the mailboxes are
unused in the actors emulating threads.

The key idea behind the modified translation is to spawn a fresh actor which
makes the request to the channel and blocks waiting for the response. Once the
spawned actor has received the result, the result can be
retrieved synchronously using $\lactwd{wait}$ \emph{without} reading from the mailbox.
The previous soundness theorems adapt to the new setting.
\begin{theorem}
  If $\Gamma ; \Delta \vdash \config{C}$ with $\Gamma \compat \Delta$, then
  $\translatechtysync{\Gamma} ; \translatechtysync{\Delta} \vdash
  \translatechconfig{\config{C}}$.
\end{theorem}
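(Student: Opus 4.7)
The plan is to mirror the structure of Theorem~\ref{thm:lch-lact-config-typing}, but adapted for the new binary PID type constructor $\pidtwo{A}{B}$ and the modified $\lactwd{spawn}/\lactwait{}$ based translation of $\lchwd{take}$. Since the ``same channel type'' restriction has been dropped, the induction is slightly cleaner: every occurrence of $\chan{A}$ in the source translates locally, without any global coalescing.

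First I would establish the term-level preservation lemma: if $\Gamma \vdash V : A$ (resp.\ $\Gamma \vdash M : A$) in the extended $\lch$, then $\translatechtysync{\Gamma} \vdash \translatechval{V} : \translatechtysync{A}$ (resp.\ $\translatechtysync{\Gamma} \mid C, \one \vdash \translatechterm{M} : \translatechtysync{A}$ for every choice of $C$). The ``any $C$'' quantification reflects the observation in the paper that the mailboxes of the actors emulating $\lch$ threads are unused, which is exactly why the arrow annotation in $\translatechtysync{A \to B}$ carries $\one$ in the return-type slot. This lemma is proved by simultaneous induction on the typing derivations; all cases other than the communication primitives are homomorphic.

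The main obstacle is the $\lchwd{take}$ case. Unfolding the translation we need to check that
\[
\efflettwo{\textit{requestorPid}}{\lactwd{spawn}\,(\efflettwo{\textit{selfPid}}{\lactself{}} \lactsend{(\inr{\textit{selfPid}})}{\translatechval{V}};\;\lactrecv{})}\lactwait{\textit{requestorPid}}
\]
types at $\translatechtysync{A}$ in the ambient mailbox $C$. The spawned actor's body must be typed under mailbox $\translatechtysync{A}$ (since it reads the response from the channel actor via $\lactrecv{}$) and returns a value of type $\translatechtysync{A}$, so by \textsc{Sync-Spawn} it has PID type $\pidtwo{\translatechtysync{A}}{\translatechtysync{A}}$. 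Inside that body, \textsc{Sync-Self} assigns $\textit{selfPid}$ the same type, which matches the right summand of $\translatechtysync{\chan{A}} = \pidtwo{\translatechtysync{A} + \pidtwo{\translatechtysync{A}}{\translatechtysync{A}}}{\one}$, so the $\lactsend$ to $\translatechval{V}$ is well-typed. Finally \textsc{Sync-Wait} on $\textit{requestorPid}$ yields $\translatechtysync{A}$, as required. The cases for $\lchwd{give}$, $\lchwd{fork}$, and $\lchwd{newCh}$ are routine, using the $\one$ return-type slot to accommodate arbitrary ambient mailbox types.

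With the term lemma in hand, the configuration case proceeds by induction on $\Gamma;\Delta \vdash \config{C}$. The cases \textsc{Par} and \textsc{Chan} are homomorphic and delegate to the inductive hypothesis, using \textsc{Sync-Nu} in place of \textsc{Chan}. For a source term $M$ the translation produces $(\nu a)(\actor{a}{\translatechterm{M}}{\epsilon})$; \textsc{Sync-Actor} applies because $\translatechterm{M}$ is typeable under any mailbox, and $\Gamma\compat\Delta$ ensures the fresh binding agrees with $\translatechtysync{\Gamma}$. For a buffer $\lchbuf{a}{\seq{V}}$ we obtain $\actor{a}{\metadef{body}\,(\translatechval{\seq{V}},\emptylist)}{\epsilon}$; here we need to recheck that $\metadef{body}$ typechecks at mailbox $\translatechtysync{A} + \pidtwo{\translatechtysync{A}}{\translatechtysync{A}}$ and that the adapted \metadef{drain}, which now uses $\lactsend$ into a $\pidtwo{\translatechtysync{A}}{\translatechtysync{A}}$-typed requestor, is still well-typed. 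This last check is the only place where the binary PID type materially changes the argument compared to Theorem~\ref{thm:lch-lact-config-typing}, and it follows by direct inspection using \textsc{Sync-Spawn} and \textsc{Send}.
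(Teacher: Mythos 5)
Your proposal is correct and matches the intended argument: the paper offers no explicit proof of this theorem, saying only that the earlier soundness results adapt, and your development --- a term-level typing lemma parametric in the ambient mailbox type (with return slot $\one$), followed by induction on the configuration typing derivation --- is precisely the adaptation of the proof of Theorem~\ref{thm:lch-lact-config-typing}, including the right analysis of the new $\lchwd{take}$ case via \textsc{Sync-Spawn}, \textsc{Sync-Self}, and \textsc{Sync-Wait}. One small nitpick: the $\Gamma \compat \Delta$ hypothesis is really consumed in the buffer case (so that \textsc{Sync-Actor} finds $a : \translatechtysync{\chan{A}}$ in the term environment), not in the fresh-name term case, where weakening suffices.
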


\begin{theorem}
  If $\Gamma ; \Delta \vdash \config{C}_1$ and $\config{C}_1 \ceval
  \config{C}_2$, then there exists some $\config{D}$ such that
  $\translatechconfig{\config{C}} \ceval^* \config{D}$ with $\config{D} \equiv
  \translatechconfig{\config{C}_2}$.
\end{theorem}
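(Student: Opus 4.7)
The plan is to proceed by induction on the derivation of $\config{C}_1 \ceval \config{C}_2$, mirroring the structure of the proof of Theorem~\ref{thm:lch-lact-config-sim}. The cases for \textsc{Fork}, \textsc{Give}, \textsc{NewCh}, \textsc{LiftM}, and \textsc{Lift} carry over essentially unchanged: the translation clauses for those primitives coincide with the old ones up to the trivial $\one$ effect annotation on function arrows, so one can reuse the analogue of Lemma~\ref{lem:lch-lact-term-sim} for the extended language and the congruence-preservation lemma. The structural-congruence case is discharged, as before, by pushing the induction hypothesis through a configuration context.

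The genuinely new case is \textsc{Take}, where $E[\lchtake{a}] \parallel \lchbuf{a}{W \cdot \seq{V}} \ceval E[\effreturn{W}] \parallel \lchbuf{a}{\seq{V}}$. Unfolding the translation, the left-hand side becomes, up to $\equiv$, an ambient actor $\actor{c}{\translatechterm{E}[\letin{r}{\lactspawn{N}}{\lactwait{r}}]}{\epsilon}$ in parallel with the channel-actor $\actor{a}{\metadef{body}\,(\translatechval{W}::\translatechval{\seq{V}},\emptylist)}{\epsilon}$, where $N$ is the request body $\efflet{s}{\lactself{}}{\lactsend{(\inr{s})}{a}; \lactrecv{}}$. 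I would exhibit the following explicit simulating $\ceval^{*}$ sequence. A \textsc{Spawn} step introduces a fresh requestor $\actor{r}{N}{\epsilon}$ and advances the ambient actor to $\lactwait{r}$; the requestor reduces $N$ via \textsc{Self}, a few functional steps, and \textsc{Send}, depositing $\inr{r}$ into $a$'s mailbox and blocking on $\lactrecv{}$; the channel-actor then performs \textsc{Receive} followed by a bounded run of $\metadef{body}$ and $\metadef{drain}$, whose $\listcons{v}{vs}/\listcons{\textit{pid}}{\textit{pids}}$ branch fires, issuing a \textsc{Send} of $\translatechval{W}$ to $r$ and updating its state to $\metadef{body}\,(\translatechval{\seq{V}},\emptylist)$; the requestor's $\lactrecv{}$ consumes the response so that $r$ reduces to $\effreturn{\translatechval{W}}$; \textsc{Wait} now fires in the ambient actor, producing $\actor{c}{\translatechterm{E}[\effreturn{\translatechval{W}}]}{\epsilon}$; finally the new equivalence $(\nu a) (\actor{a}{\effreturn{V}}{\seq{V}}) \parallel \config{C} \equiv \config{C}$ garbage-collects the spent requestor. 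The resulting configuration is $\equiv$-equal to $\translatechconfig{E[\effreturn{W}] \parallel \lchbuf{a}{\seq{V}}}$, supplying the witness $\config{D}$.

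The main obstacle is bookkeeping in this \textsc{Take} case: one must check that $\metadef{drain}$ applied to $(\translatechval{W}::\translatechval{\seq{V}}, \listcons{r}{\emptylist})$ enters exactly the $\listcons/\listcons$ branch and yields $(\translatechval{\seq{V}},\emptylist)$ in a bounded number of steps, so that the post-simulation channel-actor agrees syntactically with $\translatechconfig{\lchbuf{a}{\seq{V}}}$; that the fresh name $r$ introduced by \textsc{Spawn} can be extruded outward and absorbed by the new structural equivalence without capturing free names; and that the intermediate steps are sequenced so that \textsc{Wait} only fires after $r$ has terminated. Once these facts are in place, variants of the case where the channel-actor already carries a non-empty pending-\textit{pids} list are handled by the same local argument because $\metadef{drain}$ preserves the invariant that the value and pid lists cannot both be non-empty. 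Instantiating $\config{D}$ with the configuration produced at the end of the simulating sequence and invoking $\equiv$ closes the case, and the induction principle delivers the theorem.
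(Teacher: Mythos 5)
Your proposal is correct and follows exactly the route the paper intends: the paper omits this proof, remarking only that ``the previous soundness theorems adapt to the new setting,'' and the intended adaptation is precisely your argument---reuse the case analysis of Theorem~\ref{thm:lch-lact-config-sim}, observing that only the \textsc{Take} clause of the translation has changed. Your explicit reduction sequence for \textsc{Take} (spawn the requestor, let it perform \textsc{Self}/\textsc{Send}/\textsc{Receive} against the channel actor, run $\metadef{drain}$ through its cons/cons branch, fire \textsc{Wait}, then garbage-collect the terminated requestor via the new structural equivalence after scope extrusion) is the right witness for $\config{D}$, and your bookkeeping concerns are exactly the ones that need checking.
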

The translation in the other direction requires named threads and a $\lchwd{join}$
construct in $\lch$.

 \begin{figure}[t]
~Additional syntax
{\small
  \[
  \begin{array}{rccl}
    \text{Receive Patterns} & c & ::= & (\variant{\ell = x} \when M) \mapsto N \\
    \text{Computations} & M & ::= & \selrecv{\seq{c}} \mid \ldots \\
  \end{array}
\]
}

~Additional term typing rule
{\footnotesize
  \begin{mathpar}
    \inferrule
      [Sel-Recv]
      {  \seq{c} = \{ \variant{\ell_i = x_i} \when M_i \mapsto N_i \}_i \\ i \in J \\\\
        \Gamma, x_i : A_i \vdash_{\textsf{P}} M_i : \boolty \\
        \Gamma, x_i : A_i \mid \variant{\ell_j : A_j}_{j \in J} \vdash N_i : C}
        { \Gamma \mid \variant{\ell_j : A_j}_{j \in J} \vdash \selrecv{\seq{c}} : C }
  \end{mathpar}}

  ~Additional configuration reduction rule
{\small
\begin{mathpar}
  \inferrule
  { \exists k, l . \forall i. i < k \Rightarrow \neg (\matchesany(\seq{c}, V_i)) \wedge
    \matches(c_l, V_k) \wedge \forall j . j < l \Rightarrow \neg (\matches(c_j, V_k))
  }
  { \actor{a}{E[\selrecv{\seq{c}}]}{\seq{W} \cdot V_k \cdot \seq{W'}} \ceval \actor{a}{E[N_l\{ V'_k / x_l
      \} ]}{\seq{W}
  \cdot \seq{W'} } }
\end{mathpar}
where
\begin{mathpar}
    \seq{c} = \{ \variant{\ell_i = x_i} \when M_i \mapsto N_i \}_i

    \seq{W} = V_1 \cdot \ldots \cdot V_{k - 1}

    \seq{W'} = V_{k + 1} \cdot \ldots \cdot V_{n}

    V_k = \variant{\ell_k = V'_k}
\end{mathpar}
\vspace{-2em}
  \begin{align*}
    \matches((\variant{\ell = x} \when M) \mapsto N, \variant{\ell' = V}) &\defeq
  (\ell = \ell') \wedge (M \{ V / x \} \teval^* \effreturn{\ttrue}) \\
  \matchesany(\seq{c}, V) &\defeq \exists c \in \seq{c} . \matches(c, V)
\end{align*}
}
\vspace{-1.5em}
\caption{Additional syntax, typing rules, and reduction rules for $\lact$ with selective receive}
\label{fig:selrecv-extensions}
\end{figure}

 \subsection{Selective receive}\label{sec:extensions-selrecv}

The $\lactrecv{}$ construct in $\lact$ can only read the first message
in the queue, which is cumbersome as it often only makes sense for an
actor to handle a subset of messages at a given time.

In practice, Erlang provides a \emph{selective receive} construct,
matching messages in the mailbox against multiple pattern clauses.
Assume we have a mailbox containing values $V_1, \ldots V_n$ and
evaluate $\selrecv{c_1, \ldots, c_m}$. The
construct first tries to match value $V_1$ against clause
$c_1$---if it matches, then the body of
$c_1$ is evaluated, whereas if it fails, $V_1$ is tested against $c_2$
and so on. Should $V_1$ not match any pattern, then the
process is repeated until $V_n$ has been tested against $c_m$. At
this point, the process blocks until a matching message arrives.

More concretely, consider an actor with mailbox type
$C = \variant{\textsf{PriorityMessage} \of \textsf{Message},\\
\textsf{StandardMessage} \of \textsf{Message}, \textsf{Timeout} \of \one }$
which can receive both high- and low-priority messages.  
Let \textit{getPriority} be a function which extracts a priority from a message.

\noindent
Now consider the following actor:
{\small
\[
\begin{array}{l}
  \lactrecv \: \{  \\
    \quad
    \begin{array}{l@{\when}l@{~\mapsto~}l}
      \variant{\textsf{PriorityMessage} = \textit{msg}} &
        (\textit{getPriority} \app \textit{msg}) > 5 &
          \textit{handleMessage} \app \textit{msg} \\
       \variant{\textsf{Timeout} = \textit{msg}} & \ttrue & () \\
    \end{array} \\
  \}; \\
  \lactrecv \: \{  \\
    \quad
    \begin{array}{l@{\when}l@{~\mapsto~}l}
      \variant{\textsf{PriorityMessage} = \textit{msg}} & \ttrue &
        \textit{handleMessage} \app \textit{msg} \\
      \variant{\textsf{StandardMessage} = \textit{msg}} & \ttrue &
        \textit{handleMessage} \app \textit{msg} \\
      \variant{\textsf{Timeout} = \textit{msg}} & \ttrue & () \\
    \end{array} \\
  \} \\
\end{array}
\]}
This actor begins by handling a message only if it has a priority
greater than 5. After the timeout message is received, however, it
will handle any message---including lower-priority messages that were
received beforehand.

Figure~\ref{fig:selrecv-extensions} shows the additional syntax,
typing rule, and configuration reduction rule required to encode
selective receive; the type $\boolty$ and logical operators are
encoded using sums in the standard way. We write $\Gamma
\vdash_\textsf{P} M : A$ to mean that under context $\Gamma$, a term
$M$ which does not perform any communication or concurrency actions
has type $A$. Intuitively, this means that no subterm of $M$ is a
communication or concurrency construct.

The $\selrecv{\seq{c}}$ construct models an ordered sequence of receive pattern
clauses $c$ of the form $(\variant{\ell = x} \when M) \mapsto N$, which can be
read as ``If a message with body $x$ has label $\ell$ and satisfies predicate
$M$, then evaluate $N$''.
The typing rule for $\selrecv{\seq{c}}$ ensures that for each
pattern $\variant{\ell_i = x_i} \when M_i \mapsto N_i$ in $\seq{c}$, we have
that there exists some $\ell_i : A_i$ contained in the mailbox variant type;
and when $\Gamma$ is extended with $x_i : A_i$, that the guard $M_i$ has type
$\boolty$ and the body $N_i$ has the same type $C$ for each branch.

\begin{figure}[t]
~Translation on types
{\small
  \begin{mathpar}
    \seltrans{\pid{\variant{\ell_i : A_i}_i}} =
    \pid{\variant{\ell_i : \seltrans{A_i}}_i}

    \seltrans{A \times B} = \seltrans{A} \times \seltrans{B}

    \seltrans{A + B} = \seltrans{A} + \seltrans{B}

    \seltrans{\mu X . A} = \mu X . \seltrans{A}

    \seltrans{A \lactto{C} B} =
    \seltrans{A} \lactto{\seltrans{C}} \listty{\seltrans{C}} \lactto{\seltrans{C}} (\seltrans{B}
    \times \listty{\seltrans{C}})
  \end{mathpar}
  \quad where $C$ = $\variant{\ell_i : A'_i}_i$, and $\seltrans{C} = \variant{\ell_i :
  \seltrans{A'_i}}_i$
  }

~Translation on values
\vspace{-0.5em}
{\small
\begin{mathpar}
  \seltrans{\lambda x . M} = \lambda x . \lambda \textit{mb} . (\seltransterm{M}{\textit{mb}})

  \seltrans{\rec{f}{x}{M}} = \rec{f}{x}{\lambda \textit{mb} . (\seltransterm{M}{\textit{mb}})}
\end{mathpar}
\vspace{-2em}
}

~Translation on computation terms (wrt.\ a mailbox type $\variant{\ell_i : A_i}_i$)
\vspace{-1.25em}

{\small
\[
  \begin{array}{rcl}
    \seltransterm{V \app W}{\textit{mb}} & = &
      \efflet{\textit{f}}{(\seltrans{V} \app \seltrans{W})}{f \app \textit{mb}} \\
    \seltransterm{\effreturn{V}}{\textit{mb}} & = & \effreturn{(\seltrans{V}, \textit{mb})} \\
    \seltransterm{\efflet{x}{M}{N}}{\textit{mb}} & = &
      \efflettwo{\textit{resPair}}{\seltransterm{M}{\textit{mb}}}
        \:\letin{(\textit{x}, \textit{mb}')}{\textit{resPair }}{\seltransterm{N}{\textit{mb}'}} \\
      \seltransterm{\lactself}{\textit{mb}} & = &
      \efflet{\textit{selfPid}}{\lactself}{\effreturn{(\textit{selfPid}, \textit{mb})}} \\
      \seltransterm{\lactsend{V}{W}}{\textit{mb}} & = &
      \efflet{x}{\lactsend{(\seltrans{V})}{(\seltrans{W})}}{\effreturn{(x, \textit{mb})}} \\
      \seltransterm{\lactspawn{M}}{\textit{mb}} & = &
        \termconst{let} \: \textit{spawnRes} \Leftarrow \lactwd{spawn} (\seltrans{M}{\emptylist}) \:
                \termconst{in} \: \effreturn{(\textit{spawnRes}, \textit{mb})} \\
      \seltransterm{\selrecv{\seq{c}}}{\textit{mb}} & = & \find{\seq{c}}{\textit{mb}}
    \end{array}
\]
\vspace{-1em}
}

~Translation on configurations
\vspace{-0.75em}
{\small
  \[
  \begin{array}{rcl}
    \seltrans{(\nu a) \config{C}} & = & \{ (\nu a) \config{D} \mid
    \config{D} \in \seltrans{C} \} \\
    \seltrans{\config{C}_1 \parallel \config{C}_2 } & = &
    \{ \config{D}_1 \parallel \config{D}_2 \mid
      \config{D}_1 \in \seltrans{\config{C}_1} \wedge \config{D}_2 \in
    \seltrans{\config{C}_2} \}  \\
    \seltrans{\actor{a}{M}{\seq{V}}} & = &
      \{ \actor{a}{\seltransterm{M}{\emptylist}}{\seltransterm{\seq{V}}} \} \: \cup \\
      & & \{ \actor{a}{(\seltransterm{M}{\seq{W^1_i}})}{\seq{W^2_i}} \mid i \in 1 .. n \}
  \end{array}
  \begin{array}{l}
    \text{where} \\
     \quad \seq{W^1_i} = \seltrans{V_1} :: \ldots ::
    \seltrans{V_i} :: \emptylist \\
     \quad \seq{W^2_i} = \seltrans{V_{i + 1}} \cdot \ldots \cdot \seltrans{V_n}
  \end{array}
\]}
\vspace{-1em}

\caption{Translation from $\lact$ with selective receive into $\lact$}
\label{fig:lact-sel-recv-trans}
\end{figure}
 
The reduction rule for selective receive is inspired by that
of Fredlund~\cite{fredlund:thesis}. Assume that the
mailbox is of the form $V_1 \cdot \ldots \cdot V_k \cdot \ldots V_n$, with
$\seq{W} = V_1 \cdot \ldots \cdot V_{k - 1}$ and $\seq{W'} = V_{k + 1}
\cdot \ldots \cdot V_n$. The $\textsf{matches}(c, V)$ predicate holds if
the label matches, and the branch guard evaluates to true. The
$\matchesany(\seq{c}, V)$ predicate holds if $V$ matches any pattern in
$\seq{c}$. The key idea is that $V_k$ is the first value to satisfy a pattern.
The construct evaluates to the body of the matched
pattern, with the message payload $V'_k$ substituted for the pattern variable
$x_k$; the final mailbox is $\seq{W} \cdot \seq{W'}$
(that is, the original mailbox without $V_k$).

Reduction in the presence of selective receive preserves typing.

\begin{theorem}[Preservation ($\lact$ configurations with selective receive)]
  If $\Gamma ; \Delta \mid \variant{\ell_i : A_i}_i \vdash \config{C}_1$ and $\config{C}_1 \ceval \config{C}_2$, then
  $\Gamma ; \Delta \mid \variant{\ell_i : A_i}_i \vdash \config{C}_2$.
\end{theorem}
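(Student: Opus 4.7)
The plan is to proceed by induction on the derivation of $\config{C}_1 \ceval \config{C}_2$, exactly as in the proof of Theorem \ref{thm:lact-preservation}. All cases involving pre-existing reduction rules (\textsc{Spawn}, \textsc{Send}, \textsc{SendSelf}, \textsc{Self}, \textsc{Receive}, \textsc{Lift}, \textsc{LiftM}) carry over unchanged, since the new \textsc{Sel-Recv} typing rule is orthogonal to them and term reduction is extended only with the standard cases whose preservation is handled by a trivial extension of Lemma \ref{lem:lact-term-pres}. The only genuinely new case is the selective receive configuration rule, so the work concentrates there.

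For that case, I would invert the typing of $\actor{a}{E[\selrecv{\seq{c}}]}{\seq{W} \cdot V_k \cdot \seq{W'}}$ via \textsc{Actor}. This gives, first, that every mailbox value $V_i$ (and in particular each value of $\seq{W}$ and $\seq{W'}$) is well-typed at $\variant{\ell_j : A_j}_{j \in J}$, and second, that $\Gamma, a : \pid{\variant{\ell_j : A_j}_{j \in J}} \mid \variant{\ell_j : A_j}_{j \in J} \vdash E[\selrecv{\seq{c}}] : \one$. A standard evaluation-context decomposition lemma (proved by induction on the shape of $E$, analogous to the one implicit in Lemma \ref{lem:lact-term-pres}) then yields a type $C$ such that the hole is typed at $C$ and $\Gamma' \mid \variant{\ell_j : A_j}_{j \in J} \vdash \selrecv{\seq{c}} : C$, where $\Gamma' = \Gamma, a : \pid{\variant{\ell_j : A_j}_{j \in J}}$.

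Inverting \textsc{Sel-Recv} gives that for every clause $c_i$, the body $N_i$ satisfies $\Gamma', x_i : A_i \mid \variant{\ell_j : A_j}_{j \in J} \vdash N_i : C$. In particular, for the matched clause $c_l = (\variant{\ell_l = x_l} \when M_l) \mapsto N_l$, we have $\Gamma', x_l : A_l \mid \variant{\ell_j : A_j}_{j \in J} \vdash N_l : C$. From the well-typedness of $V_k = \variant{\ell_k = V'_k}$ at $\variant{\ell_j : A_j}_{j \in J}$, inversion on variant introduction gives $V'_k : A_{\ell_k}$; and because $\matches(c_l, V_k)$ forces $\ell_l = \ell_k$, this is exactly $V'_k : A_l$. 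A standard substitution lemma then yields $\Gamma' \mid \variant{\ell_j : A_j}_{j \in J} \vdash N_l\{V'_k / x_l\} : C$, which we plug back into $E$ to obtain $\Gamma' \mid \variant{\ell_j : A_j}_{j \in J} \vdash E[N_l\{V'_k / x_l\}] : \one$. Re-applying \textsc{Actor} with the shorter mailbox $\seq{W} \cdot \seq{W'}$ (whose entries remain well-typed since they did so in the original mailbox) produces the required derivation for $\config{C}_2$.

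The main obstacle, such as it is, is bookkeeping around the evaluation-context decomposition and substitution lemmas in the presence of the two-zone judgement $\Gamma \mid B \vdash M : A$; one must verify that substituting a value for a variable of pure type $A_l$ into a term typed under a specific mailbox effect does not disturb the effect annotation. This is routine because substitution only affects the term-level environment $\Gamma$, leaving the mailbox type component untouched, so the existing lemmas extend without incident. No new structural machinery is required beyond what already supports Lemmas \ref{lem:lact-term-pres} and \ref{thm:lact-preservation}.
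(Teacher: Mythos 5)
Your proof is correct and follows essentially the approach the paper takes for the analogous result without selective receive (Theorem~\ref{thm:lact-preservation}): induction on the reduction derivation, with all pre-existing cases unchanged and the only new work in the selective-receive case, which you handle by inversion of \textsc{Actor} and \textsc{Sel-Recv}, an evaluation-context replacement lemma in the style of Lemma~\ref{lem:lact-replacement}, and a substitution lemma. The paper omits the full proof of this particular theorem, but your key observations --- that $\matches(c_l, V_k)$ forces the label agreement making $N_l\{V'_k/x_l\}$ well-typed at the branch type $C$, and that the shortened mailbox $\seq{W}\cdot\seq{W'}$ re-types under \textsc{Actor} because its entries were already typed at the mailbox variant --- are exactly what is needed.
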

\subparagraph{Translation to $\lact$.}
Given the additional constructs used to translate $\lch$ into $\lact$, it is
possible to translate $\lact$ with selective receive into plain $\lact$.
Key to the translation is reasoning about values in the
mailbox at the term level; we maintain a term-level `save queue' of values that
have been received but not yet matched, and can loop through the list to find
the first matching value.
Our translation is similar in spirit to the ``stashing'' mechanism
described by Haller~\cite{haller:integration} to emulate selective receive in Akka,
where messages can be moved to an auxiliary queue for processing at a later
time.

Figure~\ref{fig:lact-sel-recv-trans} shows the translation formally. Except for
function types, the translation on types is homomorphic.
Similar to the translation from $\lact$ into $\lch$, we add an
additional parameter for the save queue.

\begin{figure}[t]
{\footnotesize
  \begin{minipage}[t]{0.4\textwidth}
  \begin{align*}
    &\find{\seq{c}}{\textit{mb}} \defeq \\
    & \begin{aligned}[t]
    & (\rectwo{\textit{findLoop}}{\textit{ms}} \\
    & \quad \letintwo{(mb_1, mb_2)}{ms} \\
    & \quad \caseofone{mb_2} \\
    & \qquad \emptylist{} \mapsto \transloop{\seq{c}}{\textit{mb}_1} \\
    & \qquad \listcons{x}{mb_2'} \mapsto \\
    & \qquad \quad \efflettwo{mb'}{mb_1 \doubleplus mb_2'} \\
    & \qquad \quad \caseofone{x} \metadef{branches}(\seq{c}, \textit{mb}', \\
    & \qquad \qquad
    \begin{aligned}[t]
      & \lambda y . (\efflettwo{\textit{mb}_1'}{\textit{mb}_1 \doubleplus \listterm{y}} \\
      & \qquad \textit{findLoop} \app (\textit{mb}_1', \textit{mb}_2')) ) \})
      \app (\emptylist{}, \textit{mb})
    \end{aligned}
  \end{aligned}
  \end{align*}
\begin{align*}
  & \metadef{label}(\variant{\ell = x} \when M \mapsto N) = \ell \\
  & \metadef{labels}(\seq{c}) = \nodups{[ \metadef{label}(c) \mid c \leftarrow \seq{c} ]} \\
  & \metadef{matching}(\ell, \seq{c}) = [ c \mid (c \leftarrow \seq{c}) \wedge \metadef{label}(c) = \ell ] \\
  & \metadef{unhandled}(\seq{c}) =
  [ \ell \mid (\variant{\ell : A} \leftarrow \variant{\ell_i : A_i}_i) \wedge \ell \not\in \metadef{labels}(\seq{c}) ] \hspace{-20em}
\end{align*}

\end{minipage}
~
\begin{minipage}[t]{0.5\textwidth}
\begin{align*}
  &\metadef{ifPats}(\textit{mb}, \ell, y, \epsilon, \textit{default}) = \textit{default} \app
  \variant{\ell = y} \\
  &\metadef{ifPats}(\textit{mb}, \ell, y, \\
  & \quad (\variant{\ell = x} \when M \mapsto N) \cdot pats,
\textit{default}) = \\
  &\qquad
  \begin{aligned}[t]
    & \efflettwo{\textit{resPair}}{(\seltransterm{M}{\textit{mb}}) \{ y / x \} } \\
    & \letintwo{(\textit{res}, \textit{mb}')}{\textit{resPair}} \\
    & \termconst{if} \: \textit{res} \: \termconst{then} \: (\seltrans{N}{\textit{mb}}) \{ y / x \} \\
    & \termconst{else} \: \metadef{ifPats}(\textit{mb}, \ell, y, \textit{pats}, \textit{default})
  \end{aligned}
\end{align*}
\[
  \begin{aligned}[t]
    & \transloop{\seq{c}}{\textit{mb}} \defeq \\ \quad
    & \quad (\rectwo{\textit{recvLoop}}{\textit{mb}} \\
    & \qquad \efflettwo{x}{\lactrecv} \\
    & \qquad \caseof{x}{\metadef{branches}(\seq{c}, \textit{mb}, \\
    & \quad \qquad \lambda y .
    \begin{aligned}[t]
      & \efflettwo{\textit{mb}'}{\textit{mb} \doubleplus \listterm{y}} \\
      & \textit{recvLoop} \app \textit{mb}')}) \app \textit{mb}
    \end{aligned}
  \end{aligned}
\]
\end{minipage}
\vspace{1em}
\begin{align*}
  & \metadef{branches}(\seq{c}, \textit{mb}, \textit{default}) =
    \metadef{patBranches}(\seq{c}, \textit{mb}, \textit{default}) \cdot
\metadef{defaultBranches}(\seq{c}, \textit{mb}, \textit{default}) \\
  &\metadef{patBranches}(\seq{c}, \textit{mb}, \textit{default}) = \\
  & \quad [ \variant{\ell = x} \mapsto
    \metadef{ifPats}(\textit{mb}, \ell, x, \seq{c_\ell}, \textit{default}) \mid
    (\ell \leftarrow \metadef{labels}(\seq{c})) \: \wedge
    \seq{c_\ell} = \metadef{matching}(\ell, \seq{c}) \wedge x \text{ fresh} ]  \\
    & \metadef{defaultBranches}(\seq{c}, \textit{mb}, \textit{default}) =
    [ \variant{\ell = x} \mapsto \textit{default} \app \variant{\ell = x} \mid (\ell \leftarrow \metadef{unhandled}(\seq{c})) \wedge x \text{ fresh} ] \\
\end{align*}
}
\vspace{-2em}
\caption{Meta level definitions for translation from $\lact$ with selective receive to $\lact$
(wrt.\ a mailbox type $\variant{\ell_i : A_i}_i$)}
\end{figure}

The translation on terms $\seltransterm{M}{\textit{mb}}$ takes a variable
\textit{mb} representing the save queue as its parameter, returning a pair of the
resulting term and the updated save queue. The majority of cases are standard,
except for $\selrecv{\seq{c}}$, which relies on the meta-level definition
\metadef{find}($\seq{c}$, \textit{mb}): $\seq{c}$ is a sequence of clauses,
and \textit{mb} is the save queue. The constituent \textit{findLoop} function takes
a pair of lists $(\textit{mb}_1, \textit{mb}_2)$, where $mb_1$ is the
list of processed values found not to match, and
$\textit{mb}_2$ is the list of values still to be processed.
The loop inspects the list until one either matches, or the end of the list is
reached.
Should no values in the term-level representation of the mailbox match, then the
\metadef{loop} function repeatedly receives from the mailbox, testing each new
message against the patterns.

Note that the $\termconst{case}$ construct in the core $\lact$ calculus is more
restrictive than selective receive: given a variant
$\variant{\ell_i : A_i}_i$, $\termconst{case}$ requires a
single branch for each label. Selective receive allows multiple
branches for each label, each containing a possibly-different
predicate, and does not require pattern matching to be exhaustive.

We therefore need to perform pattern matching elaboration; this is achieved by
the \metadef{branches} meta level definition. We make use of list comprehension notation:
for example, $[c \mid (c \leftarrow \seq{c}) \wedge \metadef{label}(c) = \ell]$ returns
the (ordered) list of clauses in a sequence $\seq{c}$ such that the label of the
receive clause matches a label $\ell$. We assume a meta level function $\metadef{noDups}$
which removes duplicates from a list.
Case branches are computed using the \metadef{branches} meta level
definition: \metadef{patBranches} creates a branch for each label present in
the selective receive, creating (via \metadef{ifPats}) a sequence of
if-then-else statements to check each predicate in turn;
\metadef{defaultBranches} creates a branch for each label that is present in
the mailbox type but not in any selective receive clauses.

\subparagraph{Properties of the translation.}
The translation preserves typing of terms and values.

\begin{lemma}[Translation preserves typing (values and terms)]\label{lem:seltrans-term-typing} \hfill
  \begin{enumerate}
    \item If $\Gamma \vdash V \of A$, then $\seltrans{\Gamma} \vdash
      \seltrans{V} \of \seltrans{A}$.
    \item If $\Gamma \mid \variant{\ell_i \of A_i}_i \vdash M \of B$, then \\ $\seltrans{\Gamma},
      \textit{mb} \of \listty{\variant{\ell_i \of \seltrans{A_i}}_i} \mid
    \variant{\ell_i \of \seltrans{A_i}}_i \vdash \seltrans{M}{\textit{mb}} \of (\seltrans{B} \times
      \listty{\variant{\ell_i \of \seltrans{A_i}}_i})$.
  \end{enumerate}
\end{lemma}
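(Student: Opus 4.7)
The plan is to prove both statements by simultaneous induction on the derivations of $\Gamma \vdash V \of A$ and $\Gamma \mid \variant{\ell_i \of A_i}_i \vdash M \of B$. Value cases are routine: \textsc{Var} and \textsc{Unit} follow immediately from the pointwise definition of $\seltrans{\Gamma}$; for $\lambda$-abstractions and recursive functions I extend the context with the extra parameter $\textit{mb} \of \listty{\variant{\ell_i \of \seltrans{A_i}}_i}$ and invoke the inductive hypothesis on the body, noting that $\seltrans{A \lactto{C} B}$ is exactly the arrow type demanded by the wrapped $\lambda \textit{mb}. \seltransterm{M}{\textit{mb}}$. Pair, injection, and $\termconst{roll}$ cases are homomorphic and follow directly from the IHs.

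For computations, the key invariant is that every translated term has type $\seltrans{A} \times \listty{\variant{\ell_i \of \seltrans{A_i}}_i}$, so composing them via $\termconst{let}$ threads the save queue automatically. In the \textsc{EffReturn} case I pair $\seltrans{V}$ (typed by part~1) with $\textit{mb}$. In the \textsc{EffLet} case I apply the IH to $M$ to obtain a pair, destructure it, and apply the IH to $N$ under an environment extended with the new save-queue variable $\textit{mb}'$. For \textsc{App} I unfold $\seltrans{A \lactto{C} B}$ and observe that after supplying the translated argument the resulting function expects $\textit{mb}$ and returns the required pair. The \textsc{Send}, \textsc{Self}, and \textsc{Spawn} cases are similar: perform the primitive, then pair with $\textit{mb}$; in $\lactspawn{M}$ the spawned child is seeded with $\emptylist$ of type $\listty{\seltrans{C}}$ for its \emph{own} save queue, which is well-typed independently of the spawner's mailbox.

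The main obstacle is the $\selrecv{\seq{c}}$ case, which delegates to the meta-level definitions $\metadef{find}$, $\metadef{loop}$, $\metadef{ifPats}$, $\metadef{branches}$, $\metadef{patBranches}$, and $\metadef{defaultBranches}$. The plan is to prove a small typing lemma for each. The crux is that $\metadef{branches}(\seq{c}, \textit{mb}, \textit{default})$ produces an \emph{exhaustive} set of $\termconst{case}$ arms over $\variant{\ell_i \of \seltrans{A_i}}_i$: $\metadef{patBranches}$ covers (via $\metadef{noDups}$) each label that appears in $\seq{c}$, while $\metadef{defaultBranches}$ covers each label that does not, so together every label is matched exactly once, as required by the core $\termconst{case}$ typing rule. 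Each arm must yield a value of type $\seltrans{B} \times \listty{\variant{\ell_i \of \seltrans{A_i}}_i}$; for $\metadef{ifPats}$ this follows from the premise $\Gamma, x_i \of A_i \vdash_{\textsf{P}} M_i \of \boolty$ for the guards (whose translation preserves \boolty by part~1 applied under the communication-free restriction) and the main IH applied to each body $N_i$, together with the observation that $\textit{default}$ is always supplied at the same result type.

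For $\metadef{find}$ and $\metadef{loop}$ I then check that the \termconst{rec}-bound functions have the intended types: $\textit{findLoop}$ takes a pair of lists of type $\listty{\variant{\ell_i \of \seltrans{A_i}}_i} \times \listty{\variant{\ell_i \of \seltrans{A_i}}_i}$, destructs it, and either recurses or, when the second component is empty, hands off to $\transloop{\seq{c}}{\textit{mb}_1}$; $\textit{recvLoop}$ invokes $\lactrecv$ (well-typed at $\variant{\ell_i \of \seltrans{A_i}}_i$ thanks to the ambient mailbox annotation) and either returns a pair via a matching clause or appends the new message to the save queue and recurses. Once these auxiliary typings are established, the $\selrecv{\seq{c}}$ case is a direct instantiation, and both parts of the lemma close by induction.
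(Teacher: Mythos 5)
Your proposal matches the paper's proof: a simultaneous induction on the two typing judgements, with all cases routine except $\selrecv{\seq{c}}$, which the paper likewise discharges via separate typing lemmas for $\metadef{ifPats}$, $\metadef{patBranches}$, $\metadef{defaultBranches}$, $\metadef{branches}$, $\metadef{loop}$, and $\metadef{find}$, resting on exactly the two observations you identify (exhaustiveness of the generated case arms over the mailbox variant, and purity of the guards). One small correction: the guards $M_i$ are computations, so their translations are typed by part~2 of the induction hypothesis combined with an effect-irrelevance property for communication-free terms (which the paper isolates as a separate lemma), not by part~1.
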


\noindent
Alas, a direct one-to-one translation on configurations is not possible, since a
message in a mailbox in the source language could be either in the mailbox or
the save queue in the target language. Consequently, we translate a
configuration into a set of possible configurations, depending on how many
messages have been processed.
We can show that all configurations in the resulting set are
type-correct, and can simulate the original reduction.

\begin{theorem}[Translation preserves typing]\label{thm:seltrans-term-typing}
  If $\Gamma ; \Delta \vdash \config{C}$, then $\forall \config{D} \in
  \seltrans{\config{C}}$, it is the case that $\seltrans{\Gamma} ; \seltrans{\Delta} \vdash
  \config{D}$.
\end{theorem}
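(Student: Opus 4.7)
The plan is to proceed by induction on the derivation of $\Gamma ; \Delta \vdash \config{C}$, performing case analysis on the last rule applied, and showing that \emph{every} $\config{D} \in \seltrans{\config{C}}$ is well-typed under $\seltrans{\Gamma} ; \seltrans{\Delta}$. The \textsc{Par} and \textsc{Pid} (name-restriction) cases are straightforward: the set-valued translation distributes over parallel composition and restriction, so picking an arbitrary $\config{D}$ in the translated set amounts to picking witnesses in the translated sets of the subconfigurations, and the inductive hypothesis on each premise combined via \textsc{Par} or \textsc{Pid} gives the required judgement.

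The substantive case is \textsc{Actor}, with $\config{C} = \actor{a}{M}{\seq{V}}$ where by assumption $\Gamma, a : \pid{\variant{\ell_j : A_j}_j} \mid \variant{\ell_j : A_j}_j \vdash M : \one$ and each $V_k$ is typeable at $\variant{\ell_j : A_j}_j$. Every $\config{D} \in \seltrans{\actor{a}{M}{\seq{V}}}$ has the shape $\actor{a}{\seltransterm{M}{\seq{W^1_i}}}{\seq{W^2_i}}$ for some $i \in \{0, 1, \ldots, n\}$ corresponding to how many values have been pulled into the term-level save queue. The strategy is: first apply the value clause of Lemma~\ref{lem:seltrans-term-typing} to each $V_k$, obtaining $\seltrans{\Gamma}, a : \pid{\variant{\ell_j : \seltrans{A_j}}_j} \vdash \seltrans{V_k} : \variant{\ell_j : \seltrans{A_j}}_j$; hence $\seq{W^1_i} = \seltrans{V_1} :: \cdots :: \seltrans{V_i} :: \emptylist$ is typeable at $\listty{\variant{\ell_j : \seltrans{A_j}}_j}$, while every value in $\seq{W^2_i}$ has the translated variant type. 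Then the term clause of Lemma~\ref{lem:seltrans-term-typing}, instantiated with $\textit{mb} = \seq{W^1_i}$, yields a typing for $\seltransterm{M}{\seq{W^1_i}}$ at the appropriate mailbox effect, and \textsc{Actor} assembles the pieces.

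The main obstacle is a small but real mismatch in the shape of the term judgement: Lemma~\ref{lem:seltrans-term-typing} assigns $\seltransterm{M}{\textit{mb}}$ the type $\seltrans{\one} \times \listty{\variant{\ell_j : \seltrans{A_j}}_j}$, whereas \textsc{Actor} demands the body have type $\one$. To close this gap I would either post-compose with a projection that discards the final save-queue (justified because the top-level actor body is of unit type and its save queue is no longer observable) or read the \textsc{Actor} rule up to this administrative wrapper; either convention is compatible with the definitions and leaves the rest of the argument mechanical. A minor bookkeeping point, handled inside the induction, is verifying that the name $a$ appearing in the \textsc{Actor} premise survives the translation on environments, which follows because $\seltrans{\cdot}$ preserves $\pid{\cdot}$ up to translation of its argument.
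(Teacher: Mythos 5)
Your proposal matches the paper's proof: induction on the typing derivation, with \textsc{Par} and the name-restriction case handled homomorphically, and the \textsc{Actor} case discharged by typing each save-queue/mailbox split of $\seq{V}$ via Lemma~\ref{lem:seltrans-term-typing} plus substitution for the $\textit{mb}$ parameter (the paper additionally runs an inner induction on the split index $i$, which your direct per-$i$ argument renders unnecessary). The mismatch you flag between $\one$ and $\one \times \listty{\variant{\ell_j : \seltrans{A_j}}_j}$ is real---the paper's own proof silently asserts the translated body has type $\one$---and your fix of projecting away the final save queue (or reading \textsc{Actor} up to that administrative wrapper) is the appropriate repair.
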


\begin{theorem}[Simulation ($\lact$ with selective receive in $\lact$)]\label{thm:seltrans-simulation}
  If $\Gamma ; \Delta \vdash \config{C}$ and $\config{C} \ceval \config{C'}$,
  then $\forall \config{D} \in \seltrans{\config{C}}$, there exists a
  $\config{D'}$ such that $\config{D} \ceval^+ \config{D'}$ and $\config{D'} \in
  \seltrans{\config{C'}}$.
\end{theorem}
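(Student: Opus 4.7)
The plan is to induct on the derivation of $\config{C} \ceval \config{C'}$ and, for each reduction rule and each $\config{D} \in \seltrans{\config{C}}$, exhibit a matching $\config{D'} \in \seltrans{\config{C'}}$ with $\config{D} \ceval^+ \config{D'}$. The structural cases dispatch by direct appeal to the inductive hypothesis, using the fact that $\seltrans{-}$ distributes over parallel composition and name restriction: any $\config{D} \in \seltrans{\config{C}_1 \parallel \config{C}_2}$ splits as $\config{D}_1 \parallel \config{D}_2$ with $\config{D}_i \in \seltrans{\config{C}_i}$, and analogously for $(\nu a)\config{C}$.

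Next I would handle the standard actor primitives \textsc{Spawn}, \textsc{Send}, \textsc{SendSelf}, \textsc{Self}, \textsc{Receive}, and pure term reduction \textsc{LiftM}. These rest on a term-level simulation lemma analogous to Lemma~\ref{lem:lact-lch-term-sim}: for any $\textit{mb}$, if $M \teval M'$ then $\seltransterm{M}{\textit{mb}} \teval^{+} \seltransterm{M'}{\textit{mb}}$. Each translated communication primitive performs the corresponding source action and threads the save queue unchanged as the second component of the returned pair. Since none of these cases alter the save-queue/mailbox split in the enclosing actor, the split index $i$ that witnesses $\config{D} \in \seltrans{\config{C}}$ also witnesses the translated residual in $\seltrans{\config{C'}}$.

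The substantive case is the selective-receive rule: the source step $\actor{a}{E[\selrecv{\seq{c}}]}{\seq{W} \cdot V_k \cdot \seq{W'}} \ceval \actor{a}{E[N_l\{V'_k/x_l\}]}{\seq{W} \cdot \seq{W'}}$ must be simulated from every split of the source mailbox. A given $\config{D}$ corresponds to a split index $i$ in which $\seltrans{V_1} :: \cdots :: \seltrans{V_i}$ sits in the save queue argument of the translated term and $\seltrans{V_{i+1}} \cdot \ldots \cdot \seltrans{V_n}$ sits in the physical mailbox. I would case on whether $i < k$ or $i \geq k$. When $i < k$, the call to \textit{findLoop} inside $\metadef{find}$ traverses the save queue without a match (since $V_k$ is the first match in source, none of $V_1,\ldots,V_i$ can match), falls through to $\metadef{loop}$, and repeatedly issues $\lactrecv{}$ to consume $V_{i+1},\ldots,V_{k-1}$, stashing each into the save queue, until it consumes $V_k$ and fires the matched clause. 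When $i \geq k$, \textit{findLoop} walks the save queue and locates $V_k$ directly, leaving the physical mailbox untouched. In every sub-case the resulting actor has the form $\actor{a}{\seltransterm{E[N_l\{V'_k/x_l\}]}{\seq{W^1_{i'}}}}{\seq{W^2_{i'}}}$ for some split index $i'$ of the new source mailbox $\seq{W}\cdot\seq{W'}$, exhibiting membership in $\seltrans{\config{C'}}$.

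The hard part will be verifying that the generated $\metadef{branches}$ code faithfully reproduces the source matching semantics. I would prove a side lemma that, since each guard $M_i$ is pure (judged under $\vdash_{\textsf{P}}$), evaluating $\seltransterm{M_i\{V/x\}}{\textit{mb}}$ terminates at $(\ttrue, \textit{mb})$ exactly when $\matches((\variant{\ell = x} \when M_i) \mapsto N_i, \variant{\ell = V})$ holds, leaving both mailbox and save queue untouched. With this lemma, the chain of if-then-elses generated by $\metadef{ifPats}$ tries the clauses in source order, and $\metadef{defaultBranches}$ ensures that messages whose label appears in no clause are stashed (matching the failure of $\matchesany$). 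The three sub-cases then reduce to bookkeeping on how many non-matching messages end up appended to the save queue, which determines the target split index $i'$.
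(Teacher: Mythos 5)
Your proposal is correct and follows essentially the same route as the paper's proof: induction on the reduction derivation, dispatching the structural and ordinary communication cases via a term-level simulation lemma, and handling selective receive by case analysis on the position of the save-queue/mailbox split relative to the index $k$ of the first matching message, supported by auxiliary lemmas showing that the generated $\metadef{branches}$ code sends non-matching values to the default (stashing) continuation and fires the clause body for the first match. The paper organises the receive case as ``empty save queue'' versus ``split at $j$ with $k \le j$ or $k > j$'', which coincides with your $i < k$ versus $i \ge k$ split, and its guard lemma is stated in the one direction actually needed rather than as your biconditional.
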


\subparagraph{Remark.}
Originally we expected to need to add an analogous selective receive
construct to $\lch$ in order to be able to translate $\lact$ with
selective receive into $\lch$. We were surprised (in part due to the
complex reduction rule and the native runtime support in Erlang) when
we discovered that selective receive can be emulated in plain
$\lact$. Moreover, we were pleasantly surprised that types pose no
difficulties in the translation.

 \subsection{Choice}\label{sec:extensions-choice}

The calculus $\lch$ supports only blocking receive on a \emph{single}
channel. A more powerful mechanism is \emph{selective communication},
where a value is taken nondeterministically from \emph{two}
channels. An important use case is receiving a value when either
channel could be empty.

Here we have considered only the most basic form of selective choice
over two channels. More generally, it may be extended to arbitrary
regular data types~\cite{paykin:choose}.
As Concurrent ML~\cite{reppy:cml} embraces rendezvous-based
synchronous communication, it provides \emph{generalised selective
  communication} where a process can synchronise on a mixture of input
or output communication events.
Similarly, the join patterns of the join calculus~\cite{FournetG96}
provide a general abstraction for selective communication over
multiple channels.

As we are working in the asynchronous
setting where a $\lchwd{give}$ operation can reduce immediately, we consider
only input-guarded choice.
Input-guarded choice can be added straightforwardly to $\lch$, as
shown in Figure~\ref{fig:choose:defn}. Emulating such a construct
satisfactorily in $\lact$ is nontrivial, because messages must be
multiplexed through a local queue. One approach could be to use the
work of Chaudhuri~\cite{chaudhuri:concurrent-ml-haskell} which shows how to
implement generalised choice using synchronous message passing, but
implementing this in $\lch$ may be difficult due to the asynchrony of
$\lchwd{give}$. We leave a more thorough investigation to future work.

\begin{figure}[t]
{\footnotesize
\begin{mathpar}
  \inferrule
  { \Gamma \vdash V: \chan{A} \\ \Gamma \vdash W : \chan{B} }
  { \Gamma \vdash \lchchoose{V}{W} : A + B }
\end{mathpar}}
{\small
\[
\begin{array}{rcl}
 E[\lchchoose{a}{b}] \parallel \lchbuf{a}{W_1 \cdot
 \seq{V_1}} \parallel \lchbuf{b}{\seq{V_2}} & \ceval &
    E[\effreturn{(\inl{W_1})}] \parallel \lchbuf{a}{\seq{V_1}} \parallel
\lchbuf{b}{\seq{V_2}} \\
 E[\lchchoose{a}{b}] \parallel \lchbuf{a}{\seq{V_1}} \parallel \lchbuf{b}{W_2 \cdot \seq{V_2}} & \ceval &
   E[\effreturn{(\inr{W_2})}] \parallel \lchbuf{a}{\seq{V_1}} \parallel
\lchbuf{b}{\seq{V_2}}
\end{array}
\]}
\caption{Additional typing and evaluation rules for $\lch$ with choice}
\label{fig:choose:defn}
\end{figure}

\subsection{Behavioural types}\label{sec:extensions-behavioural-types}

Behavioural types allow the type of an object (e.g. a channel) to
evolve as a program executes.
A widely studied behavioural typing discipline is that of
\emph{session types}~\cite{honda:dyadic, honda:primitives}, which are channel types
sufficiently expressive to describe \emph{communication protocols} between
participants. For example, the session type for a channel which sends two
integers and receives their sum could be defined as 
  ${!}\textsf{Int}.{!}\textsf{Int}.{?}\textsf{Int}.\textsf{end}$.
\begin{comment}
  probably obvious
where ${!}A.S$ is the type of a channel which sends a value of type $A$ before
continuing with behaviour $S$, and ${?}A.S$ is the type of a channel which
receives a value of type $A$ before continuing with behaviour $S$
\end{comment}
Session
types are suited to channels, whereas current work on
session-typed actors concentrates on
runtime monitoring~\cite{neykova:actors}.

A natural question to ask is whether one can combine the benefits of
actors and of session types---indeed, this was one of our original
motivations for wanting to better understand the relationship between
actors and channels in the first place!
A session-typed channel may support both sending and receiving (at
different points in the protocol it encodes), but communication with
another process' mailbox is one-way.
We have studied several variants of $\lact$ with \emph{polarised}
session types~\cite{PfenningG15, LindleyM16} which capture such
one-way communication, but they seem too weak to simulate
session-typed channels.
In future, we would like to find an extension of $\lact$ with
behavioural types that admits a similar simulation result to the ones
in this paper.

 \section{Related work}\label{sec:related-work}

Our formulation of concurrent $\lambda$-calculi is inspired by
$\lambda(\textsf{fut})$~\cite{niehren:lambda-fut}, a concurrent
$\lambda$-calculus with threads, futures, reference cells, and an atomic
exchange construct. In the presence of lists, futures are sufficient to encode
asynchronous channels. In $\lch$, we concentrate on asynchronous channels to
better understand the correspondence with actors.  Channel-based concurrent
$\lambda$-calculi form the basis of functional languages with session
types~\cite{gay:linearsessions, lindley:semantics}.

Concurrent ML~\cite{reppy:cml} extends Standard ML with a rich set of
combinators for synchronous channels, which again can
emulate asynchronous channels. A core notion in Concurrent ML is
nondeterministically synchronising on multiple synchronous events, such as
sending or receiving messages; relating such a construct to an actor calculus
is nontrivial, and remains an open problem.
Hopac~\cite{hopac} is a channel-based concurrency library for F\#, based on
Concurrent ML. The Hopac documentation relates synchronous channels and
actors~\cite{hopac-actors}, implementing actor-style primitives using channels,
and channel-style primitives using actors. The implementation of channels using
actors uses mutable references to emulate the $\lchwd{take}$ function, whereas
our translation achieves this using message passing. Additionally, our
translation is formalised and we prove that the translations are type- and
semantics-preserving.

Links~\cite{cooper:links} provides actor-style concurrency, and the paper
describes a translation into $\lambda(\mkwd{fut})$. Our translation is
semantics-preserving and can be done without synchronisation.

The actor model was designed by Hewitt~\cite{hewitt:actors} and examined in the
context of distributed systems by Agha~\cite{agha:actors}.
Agha et al.~\cite{agha:foundation} describe a functional actor calculus
based on the $\lambda$-calculus augmented by three core constructs:
\textsf{send} sends a message; \textsf{letactor} creates a new
actor; and \textsf{become} changes an actor's behaviour.
The operational semantics is defined in terms of
a global actor mapping, a global multiset of
messages, a set of \emph{receptionists} (actors which are
externally visible to other configurations), and a set of external actor names.
Instead of \textsf{become}, we use an explicit $\lactrecv{}$
construct, which more closely
resembles Erlang (referred to by the authors as ``essentially an actor
language'').
Our concurrent semantics, more in the spirit of process calculi, encodes
visibility via
name restrictions and structural congruences.  The authors consider a
behavioural theory in terms of
operational and testing equivalences---something we have not
investigated. 

Scala has native support for actor-style
concurrency, implemented efficiently without explicit virtual machine
support~\cite{haller:scala-actors}.
The actor model inspires \emph{active objects}~\cite{lavender:active-objects}:
objects supporting asynchronous method calls which return responses using
futures. De Boer et al.~\cite{deboer:complete-future} describe a language for active objects
with cooperatively scheduled threads within each object. Core
ABS~\cite{johnsen:abs} is a specification language based on active objects.
Using futures for synchronisation sidesteps the type pollution problem inherent
in call-response patterns with actors, although our translations work in the
absence of synchronisation. By working in the functional setting, we obtain
more compact calculi.

\begin{comment}

\end{comment}

 \section{Conclusion}\label{sec:conclusion}

Inspired by languages such as Go which take channels as core
constructs for communication, and languages such as Erlang which are
based on the actor model of concurrency, we have presented
translations back and forth between a concurrent $\lambda$-calculus
$\lch$ with channel-based communication constructs and a concurrent
$\lambda$-calculus $\lact$ with actor-based communication constructs.
We have proved that $\lact$ can simulate $\lch$ and vice-versa.

The translation from $\lact$ to $\lch$ is straightforward, whereas the
translation from $\lch$ to $\lact$ requires considerably more effort.
Returning to Figure~\ref{fig:mailboxes-as-pinned-channels}, this is unsurprising!

We have also shown how to extend $\lact{}$ with synchronisation, greatly
simplifying the translation from $\lch$ into $\lact$, and have shown how
Erlang-style selective receive can be emulated in $\lact$. Additionally, we have
discussed input-guarded choice in $\lch{}$, and how behavioural types may fit in
with $\lact{}$.

In future, we firstly plan to strengthen our
operational correspondence results by considering operational completeness.
Secondly, we plan to investigate how to emulate $\lch$ with input-guarded choice
in $\lact$.  Finally, we intend to use the lessons learnt from studying $\lch$
and $\lact$ to inform the design of an actor-inspired language with behavioural
types.

\section*{Acknowledgements}
This work was supported by EPSRC grants EP/L01503X/1 (University of
Edinburgh CDT in Pervasive Parallelism) and EP/K034413/1 (A Basis for
Concurrency and Distribution). Thanks to Philipp Haller, Daniel
Hillerstr\"om, Ian Stark, and the anonymous reviewers for detailed
comments.

{\small
\bibliographystyle{plainurl}
\bibliography{bibliography}
}

\newpage
\appendix
\section{Coalescing Transformation}~\label{appendix:coalescing}

Our translation from $\lch{}$ into $\lact{}$ relies on the assumption that all
channels have the same type, which is rarely the case in practice. Here, we
sketch a sample type-directed transformation which we call \emph{coalescing}, which
transforms an arbitrary $\lch{}$ program into an $\lch{}$ program which has only
one type of channel.

The transformation works by encapsulating each type of message in a variant
type, and ensuring that $\lchwd{give}$ and $\lchwd{take}$ use the correct
variant injections. Although the translation
necessarily loses type information, thus introducing partiality, we can
show that terms and configurations that are the result of the coalescing
transformation never reduce to an error.

We say that a type $A$ is a \emph{base carried type} in a configuration $\Gamma ;
\Delta \vdash \config{C}$ if there exists some subterm $\Gamma \vdash V :
\chan{A}$, where $A$ is \emph{not} of the form $\chan{B}$.

In order to perform the coalescing transformation, we require an environment $\sigma$
which maps each base carried type $A$ to a unique token $\ell$, which we use as an
injection into a variant type.

We write $\sigma \smile \Gamma ; \Delta \vdash
\config{C}$ if $\sigma$ contains a bijective mapping $A \mapsto \ell$ for each base
carried type in $\Gamma ; \Delta \vdash \config{C}$. We extend the relation
analogously to judgements on values and computation terms.

Next, we define the notion of a \emph{coalesced channel type}, which can be used
to ensure that all channels in the system have the same type.

\begin{definition}[Coalesced channel type]\hfill \\
Given a token environment $\sigma = A_0 \mapsto \ell_0, \ldots A_n \mapsto
\ell_n$, we define the \emph{coalesced channel type} $\cct{\sigma}$ as

\[
  \cct{\sigma} = \mu X . \variant{\ell_0 : \coalesceinner{A_0}, \ldots, \ell_n :
  \coalesceinner{A_n}, \ell_c : \chan{X}}
\]

where

\begin{minipage}{0.5\textwidth}
  \centering \[ \begin{array}{rcl}
  \coalesceinner{\one} &=& \one \\ \coalesceinner{A \to B} &=&
  \coalesceinner{A} \to \coalesceinner{B} \\ \coalesceinner{A \times B}
  &=& \coalesceinner{A} \times \coalesceinner{B} \\ \end{array} \]
\end{minipage}
  \begin{minipage}{0.5\textwidth} \[ \begin{array}{rcl} \coalesceinner{A + B}
      &=& \coalesceinner{A} + \coalesceinner{B} \\
      \coalesceinner{\listty{A}} &=& \listty{\coalesceinner{A}} \\
      \coalesceinner{\chan{A}} &=& \chan{X} \\
      \coalesceinner{\mu Y . A} &=& \mu Y . \coalesceinner{A}
  \end{array} \]
    \end{minipage}

which is the single channel type which can receive values of all possible types
sent in the system.
  \end{definition}

Note that the definition of a base carried type excludes the possibility of a
type of the form $\chan{A}$ appearing in $\sigma$.
To handle the case of sending channels, we require $\cct{\sigma}$ to be a
recursive type; a distinguished token $\ell_c$ denotes the variant case for
sending a channel over a channel.

Retrieving a token from the token environment $\sigma$ is defined by the
following inference rules. Note that $\chan{A}$ maps to the distinguished token
$\ell_c$.

\begin{mathpar}
  \inferrule
  { (A \mapsto \ell) \in \sigma }
  { \sigma(A) = \ell }

  \inferrule
  { }
  { \sigma(\chan{A}) = \ell_c }
\end{mathpar}

With the coalesced channel type defined, we can define a function mapping types
to coalesced types.

\begin{definition}[Type coalescing] \hfill \\ \begin{minipage}{0.5\textwidth}
  \centering \[ \begin{array}{rcl}
  \coalesce{\one}{\sigma} &=& \one \\ \coalesce{A \to B}{\sigma} &=&
  \coalesce{A}{\sigma} \to \coalesce{B}{\sigma} \\ \coalesce{A \times B}{\sigma}
  &=& \coalesce{A}{\sigma} \times \coalesce{B}{\sigma} \\ \end{array} \]
\end{minipage}
  \begin{minipage}{0.5\textwidth} \[ \begin{array}{rcl} \coalesce{A + B}{\sigma}
      &=& \coalesce{A}{\sigma} + \coalesce{B}{\sigma} \\
      \coalesce{\listty{A}}{\sigma} &=& \listty{\coalesce{A}{\sigma}} \\
      \coalesce{\chan{A}}{\sigma} &=& \chan{\cct{\sigma}} \\ 
      \coalesce{\mu X . A}{\sigma} &=& \mu X . \coalesce{A}{\sigma} \\
  \end{array} \]
    \end{minipage}
\end{definition}

We then extend $\coalesce{-}{\sigma}$ to typing environments $\Gamma$, taking
into account that we must annotate channel names.

\begin{definition}[Type coalescing on environments]
  \begin{enumerate}
    \item For $\Gamma$:
      \begin{enumerate}
        \item $\coalesce{\emptyset}{\sigma} = \emptyset$
        \item $\coalesce{x : A, \Gamma}{\sigma} = x : \coalesce{A}{\sigma},
          \coalesce{\Gamma}{\sigma}$.
        \item $\coalesce{a : \chan{A}, \Gamma}{\sigma} = a_A :
          \chan{\cct{\sigma}}, \coalesce{\Gamma}{\sigma}$.
      \end{enumerate}
    \item For $\Delta$: 
      \begin{enumerate}
        \item $\coalesce{\emptyset}{\sigma} = \emptyset$
        \item $\coalesce{a: A, \Delta}{\sigma} = a_A : \cct{\sigma}, \coalesce{\Delta}{\sigma}$
      \end{enumerate}
    \end{enumerate}
\end{definition}

\begin{figure}
%\parbox{0.28\textwidth}{coalescing of values, terms, and configurations} \hfill \\
~Coalescing of channel names \hfill
\framebox{$\jarg{\sigma} \Gamma \vdash V \leadsto V'$}
\begin{mathpar}
  \inferrule
    [Ref]
    { a : \chan{A} \in \Gamma }
    { \jarg{\sigma} \Gamma \vdash a : \chan{A} \leadsto a_A}
\end{mathpar}

~Coalescing of communication and concurrency primitives
\hfill
\framebox{$\jarg{\sigma} \Gamma \vdash M \leadsto M'$}
\begin{mathpar}
\inferrule
  [Give]
  {\jarg{\sigma} \Gamma \vdash V : A \leadsto V' \\\\
   \jarg{\sigma} \Gamma \vdash W : \chan{A} \leadsto W' \\ \sigma(A) = \ell }
  {\jarg{\sigma} \Gamma \vdash \lchgive{V}{W} : \one \leadsto 
  \lchgive{(\roll{\variant{\ell = V'}})}{W'} }

\inferrule
  [Take]
  {\jarg{\sigma} \Gamma \vdash V : \chan{A} \leadsto V' \\ \sigma(A) = \ell_j }
  {\jarg{\sigma} \Gamma \vdash \lchtake{V} : A \leadsto 
   { 
    \begin{aligned}[t]
    & \efflettwo{x}{\lchtake{V'}} \\
    & \efflettwo{y}{\unroll{x}} \\
    & \caseofone{y} \\
    & \quad \variant{\ell_0 = y} \ldots \variant{\ell_{j-1} = y} \mapsto \texttt{error} \\
    & \quad \variant{\ell_j = y} \mapsto y \\
    & \quad \variant{\ell_{j+1} = y} \ldots \variant{\ell_{n} = y} \mapsto \texttt{error}
    & \} \\
    \end{aligned}
   }
  }

\inferrule
  [NewCh]
  { }
  { \jarg{\sigma} \Gamma \vdash \lchnewch : \chan{A} \leadsto \lchnewch{}_A }

\inferrule
  [Fork]
  { \jarg{\sigma} \Gamma  \vdash M : \one \leadsto M' }
  { \jarg{\sigma} \Gamma \vdash \lchfork{M} : \one \leadsto \lchfork{M'} }
\end{mathpar}

~Coalescing of configurations
\hfill
\framebox{$\jarg{\sigma} \Gamma \vdash \config{C} \leadsto \config{C}'$} \\
\begin{mathpar}
\inferrule
  [Par]
  { \jarg{\sigma} \Gamma ; \Delta \vdash \config{C}_1 \leadsto \config{C}'_1 \\ 
    \jarg{\sigma} \Gamma ; \Delta \vdash \config{C}_2 \leadsto \config{C}'_2 }
  { \jarg{\sigma} \Gamma ; \Delta \vdash \config{C}_1 \parallel \config{C}_2 \leadsto \config{C}'_1 \parallel \config{C}'_2}
  
\inferrule
  [Chan]
  { \jarg{\sigma} \Gamma, a : \chan{A} ; \Delta, a : A \vdash \config{C} \leadsto \config{C}' }
  { \jarg{\sigma} \Gamma ; \Delta \vdash (\nu a) \config{C} \leadsto (\nu a_A)\config{C}' }
  
\inferrule
  [Term]
  {\jarg{\sigma} \Gamma \vdash M : A \leadsto M'}
  {\jarg{\sigma} \Gamma ; \cdot \vdash M \leadsto M'}
  
\inferrule
  [Buf]
  {(\jarg{A, \sigma} \Gamma \vdash V_i : A \leadsto V'_i)_i}
  {\jarg{\sigma} \Gamma ; a : A \vdash \lchbuf{a}{\seq{V}} \leadsto \lchbuf{a_A}{\seq{V'}} }
\end{mathpar}

~Coalescing of buffer values
\hfill \framebox{$\{A, \sigma\} \; \Gamma  \vdash V : A \leadsto V'$} \\
\begin{mathpar}
\inferrule
  { \jarg{\sigma} \Gamma  \vdash V : A \leadsto V' \\ \sigma{}(A) = \ell }
  { \jarg{A, \sigma} \Gamma  \vdash V : A \leadsto \variant{\ell = V} }
\end{mathpar}
\caption{Type-directed coalescing pass}\label{fig:lch-coalescing}
\end{figure}

Figure~\ref{fig:lch-coalescing} describes the coalescing pass from $\lch{}$ with
multiple channel types into $\lch{}$ with a single channel type. Judgements
are of the shape $\jarg{\sigma} \Gamma \vdash V : A \leadsto V'$ for values;
$\jarg{\sigma} \Gamma \vdash M : A \leadsto M'$ for computations; and $\jarg{\sigma} \Gamma \vdash
\config{C} \leadsto \config{C'}$ for configurations, where $\sigma$ is an
bijective mapping from types to tokens, and primed values are the results of
the coalescing pass. We omit the rules for values and functional terms, which
are homomorphisms.

Of particular note are the rules for $\lchwd{give}$ and $\lchwd{take}$. The
coalesced version of $\lchwd{give}$ ensures that the correct token is used to
inject into the variant type. The translation of $\lchwd{take}$ retrieves a value
from the channel, and pattern matches to retrieve a value of the correct type
from the variant. As we have less type information, we have to account for the
possibility that pattern matching fails by introducing an error term, which we
define at the top-level of the term:

\[
  \letin{\texttt{error}}{(\rec{f}{x}{f \app x})}{\ldots}
\]

The translation on configurations ensures that all existing values contained
within a buffer are wrappen in the appropriate variant injection.

The coalescing step necessarily loses typing information on channel types. To
aid us in stating an error-freedom result, we annotate channel names $a$ with
their original type; for example, a channel with name $a$ carrying values of
type $A$ would be translated as $a_A$. It is important to note that annotations
are irrelevant to reduction, i.e.:

\[
  E[\lchgive{a_A}{W}] \parallel \lchbuf{a_B}{\seq{V}} \ceval E[\effreturn{()}]
  \parallel \lchbuf{a_B}{\seq{V} \cdot W}
\]

%  \subparagraph{Properties of coalescing.}
%
%  We now investigate the properties of terms resulting from the coalescing pass.
%  We define a type coalescing function $\coalesce{A}{\sigma}$:
%
%  Armed with these definitions, we can proceed to state that coalescing preserves
%  typing of terms, values, and configurations:
%
%  \begin{lemma}[Coalescing preserves typability of terms]
%  \begin{enumerate}
%    \item If $\Gamma \vdash V : A \leadsto V'$, then
%      $\coalesce{\Gamma}{\sigma} \vdash V' : \coalesce{A}{\sigma}$
%      for some $\sigma$ such that
%      $\sigma \smile \Gamma \vdash V : A$.
%    \item If $\Gamma  \vdash M : A \leadsto M'$, then
%      $\coalesce{\Gamma}{\sigma} \vdash M' : \coalesce{A}{\sigma}$
%      for some $\sigma$ such that $\sigma \smile
%      \Gamma \vdash M : A$.
%  \end{enumerate}
%  \end{lemma}
%
%  \begin{qedproof}
%    By mutual induction on the derivations of $\Gamma \vdash V : A \leadsto V'$
%    and $\Gamma \vdash M : A \leadsto M'$.
%  \end{qedproof}
%
%  \begin{theorem}[coalescing preserves typability of configurations]\label{lem:coalescing-term-typability}\hfill \\
%  If $\Gamma ; \Delta \vdash \config{C} \leadsto \config{C'}$, then
%  $\coalesce{\Gamma}{\sigma} ; \coalesce{\Delta}{\sigma} \vdash \config{C'}$ for some $\sigma$ such that
%  $\sigma \smile \Gamma ; \Delta \vdash \config{C}$.
%  \end{theorem}
%  \begin{qedproof}
%    By induction on the derivation of $\Gamma ; \Delta \vdash \config{C}$.
%  \end{qedproof}

As previously discussed, the coalescing pass means that channel types are less
specific, with the pass introducing partiality in the form of an error term,
\texttt{error}. However, since we began with a type-safe program in $\lch{}$, we
can show that programs that have been coalesced from well-typed $\lch{}$
configurations never reduce to an error term.

\begin{definition}[Error configuration] \hfill \\
  A configuration $\config{C}$ is an \emph{error configuration} if $\config{C}
  \equiv G[\texttt{error}]$ for some configuration context $G$.
\end{definition}

\begin{definition}[Error-free configuration] \hfill \\
  A configuration $\config{C}$ is \emph{error-free} if it is not an error
  configuration.
\end{definition}

We can straightforwardly see that the initial result of a coalescing pass is
error-free:

\begin{lemma}\label{lem:refinements-error-free}
  If $\sigma \smile \Gamma \vdash \config{C} \leadsto \config{C}'$, then $\config{C}'$ is
  error-free.
\end{lemma}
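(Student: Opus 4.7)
The plan is to prove the lemma by induction on the derivation of $\sigma \smile \Gamma \vdash \config{C} \leadsto \config{C}'$, after first establishing an auxiliary invariant about the shape of translated terms. The key observation is that the only rule of Figure~\ref{fig:lch-coalescing} that introduces the identifier \texttt{error} is \textsc{Take}, and there it appears strictly inside the branches of a $\termconst{case}$ buried under an $\efflet$. In particular, no translated computation has \texttt{error} as its top-level syntactic form.

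First I would prove, by simultaneous induction on the value, computation, and configuration coalescing judgements, that (i) if $\jarg{\sigma}\Gamma \vdash V : A \leadsto V'$, then $V'$ is not the identifier \texttt{error}; (ii) if $\jarg{\sigma}\Gamma \vdash M : A \leadsto M'$, then $M'$ is not syntactically equal to \texttt{error}; and (iii) if $\jarg{A,\sigma}\Gamma \vdash V : A \leadsto V'$, then $V'$ is a variant value $\variant{\ell = V''}$ and in particular not \texttt{error}. Each case follows by inspection: the value rules and the homomorphic cases for functional terms produce terms whose outermost constructor (variable, $\lambda$, application, $\effreturn$, $\efflet$, pair, injection, etc.) is not \texttt{error}; the \textsc{Give}, \textsc{NewCh}, and \textsc{Fork} rules produce terms whose outermost form is a communication primitive; and the \textsc{Take} rule produces an $\efflet$ expression, not \texttt{error} itself.

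Next I would observe that the configuration coalescing rules \textsc{Par}, \textsc{Chan}, \textsc{Term}, and \textsc{Buf} build $\config{C}'$ only from parallel composition, name restriction, translated computations, and buffers whose elements are translated values. Combined with (i)--(iii), this gives the stronger invariant that every ``leaf'' of $\config{C}'$---in the sense of the canonical form of Section~\ref{sec:lch}---is either a buffer $\lchbuf{a_A}{\seq{V'}}$ or a translated computation $M'$ with $M' \neq \texttt{error}$.

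It then remains to rule out $\config{C}' \equiv G[\texttt{error}]$ for any configuration context $G$. Since $G$ is built only from $[~]$, $G \parallel \config{D}$, and $(\nu a)G$, a configuration of shape $G[\texttt{error}]$ has $\texttt{error}$ itself as one of its parallel-composed leaves. Structural congruence (commutativity/associativity of $\parallel$ and scope extrusion) preserves the multiset of leaf computations and buffers up to $\alpha$-renaming of restricted names, so if $\config{C}' \equiv G[\texttt{error}]$, then $\config{C}'$ must already contain $\texttt{error}$ as a leaf, contradicting the invariant above. The main obstacle is simply being precise about this last step: one needs a small lemma that the set of leaves of a configuration is invariant under $\equiv$, which follows routinely from induction on $\equiv$ and the shape of configuration contexts.
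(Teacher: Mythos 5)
Your proposal is correct and follows essentially the same route as the paper, which proves this lemma by a direct induction on the derivation of $\Gamma \vdash \config{C} \leadsto \config{C}'$, relying on the observation that \texttt{error} is introduced only inside the $\termconst{case}$ branches of the \textsc{Take} rule and hence never as a leaf of the translated configuration. The paper leaves the details (the leaf invariant and its preservation under $\equiv$) implicit, whereas you spell them out; this is just a more explicit rendering of the same argument.
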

\begin{qedproof}
  By induction on the derivation of $\Gamma \vdash \config{C} \leadsto
  \config{C'}$.
\end{qedproof}

Next, we show that error-freedom is preserved under reduction. To do so,
we make essential use of the fact that the coalescing pass annotates each
channel with its original type.

\begin{lemma}[Error-freedom (coalesced $\lch{}$)] \hfill \\
  If $\Gamma ; \Delta \vdash \config{C} \leadsto \config{C}'_1$, and
  $\config{C}'_1 \ceval^* \config{C}'_2$, then $\config{C}'_2$ is error free.
\end{lemma}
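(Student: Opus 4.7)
The plan is to prove error-freedom by establishing a structural invariant on reachable configurations and then arguing that the error branches in the translated $\lchwd{take}$ are unreachable whenever the invariant holds. Concretely, call a coalesced configuration \emph{well-tagged} if, for every buffer $\lchbuf{a_A}{\seq V}$ occurring in it, every value $V_i \in \seq V$ has the form $\roll{\variant{\ell = V'_i}}$ with $\sigma(A) = \ell$, and if every occurrence of $\lchgive{V}{a_A}$ in any term of the configuration is the immediate result of a coalesced translation step (so $V = \roll{\variant{\ell = W}}$ with $\sigma(A) = \ell$). This invariant is the key to the argument, since the reduction rules for $\lchwd{give}$ and $\lchwd{take}$ completely ignore the type annotations on names, yet the annotations are exactly what align a buffer's values with the correct variant injection.

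First I would verify the base case: if $\sigma \smile \Gamma ; \Delta \vdash \config{C} \leadsto \config{C}'_1$ then $\config{C}'_1$ is well-tagged. This follows by induction on the coalescing derivation, using in particular the \textsc{Buf} rule for configurations together with the auxiliary judgement $\jarg{A,\sigma}\Gamma \vdash V : A \leadsto \variant{\ell = V}$ (which stores each buffered value under exactly the tag $\sigma(A)$), and the \textsc{Give} rule (which wraps every newly sent value with $\sigma(A)$ before placing it in the parallel buffer). Lemma~\ref{lem:refinements-error-free} already gives error-freedom at this stage.

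Second, I would show that well-taggedness is preserved by $\ceval$. The only interesting cases are \textsc{Give}, \textsc{Take}, and \textsc{NewCh}; the remaining rules preserve the invariant trivially or by induction under a configuration context. For \textsc{Give}, a well-tagged term $E[\lchgive{(\roll{\variant{\ell = W}})}{a_A}]$ reduces in parallel with $\lchbuf{a_A}{\seq V}$ to attach the value at the tail of the buffer; because the annotation on the channel name matches the tag by the invariant, the resulting buffer remains well-tagged. For \textsc{Take}, the translated term is the $\termconst{let}/\unroll{}/\termconst{case}$ expression generated by the coalescing of $\lchwd{take}$; because the buffer is well-tagged with the same $A$ as the annotation on the name of the $\lchwd{take}$ (again exploiting that annotations are preserved by reduction), the unique non-error branch $\variant{\ell_j = y} \mapsto y$ is selected, so reduction takes us to $E[\effreturn{y}]$ and never exposes \texttt{error}. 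For \textsc{NewCh}, a fresh $a_A$ is bound together with an empty buffer, which is vacuously well-tagged; crucially, any continuation of this $\lchwd{take}$/$\lchwd{give}$ on $a_A$ originates from the coalesced source and therefore respects the tag $\sigma(A)$.

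Third, I would conclude: since well-taggedness is preserved and every well-tagged configuration is error-free (the only syntactic source of \texttt{error} is the dead branches inside coalesced $\lchwd{take}$s, which are never selected under the invariant), induction on the length of the reduction $\config{C}'_1 \ceval^* \config{C}'_2$ gives the result. The main obstacle I anticipate is the channel-passing case: a value of type $\chan{B}$ is tagged with the distinguished $\ell_c$ and shipped inside the recursive type $\cct{\sigma}$, so when such a value is later taken from some buffer, the resulting term performs $\unroll{}$ and pattern-matches at tag $\ell_c$. One must verify that the annotation on the name extracted this way still matches its original carried type, which is why the coalescing rules annotate $\lchnewch_A$ and threads channel names through the translation in a tag-preserving way; a careful statement of the well-tagged invariant that accounts for nested $\chan{}$ values, together with an induction on the recursive structure of $\cct{\sigma}$, handles this.
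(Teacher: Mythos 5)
Your plan is correct in substance and rests on the same key fact as the paper's proof---namely that the annotation $A$ on a name $a_A$ determines, via $\sigma$, the unique variant tag under which values may sit in that buffer, so the single non-error branch of the coalesced $\lchwd{take}$ is always the one selected---but you package the argument differently. The paper argues backwards and by contradiction: assuming $\config{C}'_2 \equiv G[\texttt{error}]$, it notes that \texttt{error} is introduced only by the coalescing rule for \textsc{Take}, traces the reduction back to the case split on a buffered value tagged $\ell_k$, reads off $\sigma(A) = \ell_j$ from the \textsc{Take} premise and $B = A$ from the \textsc{Ref} annotation, and derives a contradiction from $\ell_j \ne \ell_k$. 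You instead run the argument forwards, as a ``well-tagged'' invariant established by the translation and preserved by $\ceval$. The forward version is arguably the more honest one: the paper's backward trace silently assumes exactly your invariant when it asserts that the value extracted from buffer $a_A$ carries tag $\sigma(A)$, whereas you isolate the \textsc{Give} preservation case where that fact actually has to be proved. One caveat: your invariant as literally phrased (``every occurrence of $\lchgive{V}{a_A}$ is the immediate result of a coalesced translation step'') is not closed under reduction, since substitution replaces channel-typed variables by annotated names inside residual gives; you should restate it semantically (every give whose target is a name $a_A$ carries a payload tagged $\sigma(A)$) and add a substitution lemma saying that variables of source type $\chan{A}$ are only ever instantiated by names annotated with $A$. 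With that adjustment, and your remark about the $\ell_c$ case for channel-carrying channels, the plan goes through and in fact tightens the paper's sketch.
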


\begin{qedproof}(Sketch.)
By preservation in $\lch{}$, we have that $\Gamma ; \Delta \vdash \config{C}'_2$,
and by Lemma~\ref{lem:refinements-error-free}, we can assume that
$\config{C}'_1$ is error-free.

We show that an error term can never arise. Suppose that $\config{C}'_2$ was an error
configuration, meaning that $\config{C}'_2 \equiv G[\texttt{error}]$ for some
configuration context $G$. As we have decreed that the \texttt{error} term does
not appear in user programs, we know that \texttt{error} must have arisen from
the refinement pass.
By observation of the refinement rules, we see that \texttt{error} is introduced
only in the refinement rule for \textsc{Take}.

Stepping backwards through the reduction sequence introduced by the
\textsc{Take} rule, we have that:

\[
\begin{aligned}[t]
  & \caseofone{\variant{\ell_k = W}} \\
  & \quad \variant{\ell_0 = y} \ldots \variant{\ell_{j-1} = y} \mapsto \texttt{error} \\
  & \quad \variant{\ell_j = y} \mapsto y \\
  & \quad \variant{\ell_{j+1} = y} \ldots \variant{\ell_{n} = y} \mapsto \texttt{error} \\
  & \}
\end{aligned}
\]

for some $k \ne j$.

Stepping back further, we have that:

\[
\begin{aligned}[t]
  & \efflettwo{x}{\lchtake{a_B}} \\
  & \efflettwo{y}{\unroll{(\roll{x})}} \\
  & \caseofone{y} \\
  & \quad \variant{\ell_0 = y} \ldots \variant{\ell_{j-1} = y} \mapsto \texttt{error} \\
  & \quad \variant{\ell_j = y} \mapsto y \\
  & \quad \variant{\ell_{j+1} = y} \ldots \variant{\ell_{n} = y} \mapsto
  \texttt{error} \\
  & \}
\end{aligned}
\]

Now, inspecting the premises of the refinement rule for \textsc{Take}, we have
that $\Gamma \vdash a_B : \chan{A} \leadsto V'$ and $\sigma(A) = \ell_j$.
Examining the refinement rule for \textsc{Ref}, we have that $\jarg{\sigma}
\Gamma \vdash a : \chan{A} \leadsto a_A$, thus we have that $B = A$.

However, we have that $\sigma(A) = \ell_j$ and $\sigma(A) = \ell_k$
but we know that $k \ne j$, thus leading to a contradiction since $\sigma$ is
bijective.
\end{qedproof}

Since annotations are irrelevant to reduction, it follows that $\config{C'}$ has
identical reduction behaviour with all annotations erased.

\newpage
\section{Selected full proofs}\label{appendix:proofs}

\subsection{$\lch{}$ Preservation}

\begin{fake}{Lemma~\ref{lem:lch-term-progress} (Progress ($\lch$ terms))} \hfill \\
Assume $\Gamma$ is either empty or only contains entries of the form $a_i :
\chan{A_i}$.

\noindent
If $\Gamma \vdash M : A$, then either:
\begin{enumerate}
\item $M = \effreturn{V}$ for some value $V$
\item $M$ can be written $E[M']$, where $M'$ is a communication or concurrency primitive (i.e.\ $\lchgive{V}{W}, \lchtake{V}, \lchfork{M}$, or $\lchnewch{}$)
\item There exists some $M'$ such that $M \teval M'$
\end{enumerate}
\end{fake}

\begin{proof}
By induction on the derivation of $\Gamma \vdash M : A$.
Cases \textsc{Return}, \textsc{Give}, \textsc{Take}, \textsc{NewCh}, and
\textsc{Fork} hold immediately due to 1) and 2).
For \textsc{App}, by inversion on the typing relation we have that $V = (\lambda
x . M)$, and thus can $\beta$-reduce.
For \textsc{EffLet}, we have that $\Gamma \vdash
\efflet{x}{M}{N} : B$, with $\Gamma \vdash M : A$ and $\Gamma, x : A \vdash N :
B$. We proceed using the inductive hypothesis. If $M =
\effreturn{V}$, we can take a step $\efflet{x}{\effreturn{V}}{N} \teval
N \{ V / x\}$. If $M$ is a communication term, then we note that
$\efflet{x}{E}{N}$ is an evaluation context and can write $E[\lchfork{N'}]$
etc.\ to satisfy 2). Otherwise, $M$ can take a step and we can take a step by the
lifting rule $E[M] \teval E[M']$.
\end{proof}

\begin{fake}{Lemma~\ref{lem:lch-equiv-typeability}}
  If $\Gamma ; \Delta \vdash \config{C}$ and $\config{C} \equiv \config{D}$,
  then $\Gamma ; \Delta \vdash \config{D}$.
\end{fake}
\begin{proof}
  By induction on the derivation of $\Gamma ; \Delta \vdash \config{C}$.

  \textbf{Case } $\config{C} \parallel \config{D} \equiv \config{D} \parallel \config{C}$.
  \hfill \\
  We assume that $\Gamma ; \Delta \vdash \config{C} \parallel \config{D}$,
  so we have that  $\Delta$ partitions as $\Delta_1, \Delta_2$ such that
  $\Gamma ; \Delta_1 \vdash \config{C}$ and $\Gamma, \Delta_2 \vdash
  \config{D}$. It follows immediately by
  the symmetry of \textsc{Par} that $\Gamma ; \Delta_2, \Delta_1 \vdash
  \config{D} \parallel \config{C}$ as required.

  \textbf{Case } $\config{C} \parallel (\config{D} \parallel \config{E})$
  \hfill \\
  We assume that $\Gamma ; \Delta \vdash \config{C} \parallel (\config{D} \parallel \config{E})$.
  We can show that $\Delta$ partitions as $\Delta_1, \Delta_2$ such that
  $\Gamma ; \Delta_1 \vdash \config{C}$ and
  $\Gamma ; \Delta_2 \vdash \config{D} \parallel \config{E}$.
  We can then show that $\Delta_2$ splits as $\Delta_3, \Delta_4$ such
  that $\Gamma ; \Delta_3 \vdash \config{D}$ and $\Gamma ; \Delta_4 \vdash \config{E}$.
  It is then possible to show that $\Gamma; \Delta_1, \Delta_3 \vdash \config{C} \parallel \config{D}$
  and $\Gamma ; \Delta_4 \vdash \config{E}$. Recomposing, we have that
  $\Gamma ; \Delta \vdash (\config{C} \parallel \config{D}) \parallel \config{E}$ as required.

  \textbf{Case } $\config{C} \parallel (\nu a) \config{D} \equiv
  (\nu a)(\config{C} \parallel \config{D})$ if $a \not\in \fv{(\config{C})}$
  \hfill \\
  We assume that $\Gamma ; \Delta \vdash \config{C} \parallel (\nu a)\config{D}$.
  From this, we know that $\Delta$ splits as $\Delta_1, \Delta_2$ such that
  $\Gamma ; \Delta_1 \vdash \config{C}$ and $\Gamma ; \Delta_2 \vdash (\nu a) \config{D}$.
  By \textsc{Chan}, we know that $\Gamma, a : \chan{A} ; \Delta_2, a : A \vdash \config{D}$.
  By weakening, we have that $\Gamma, a : \chan{A} \vdash \config{C}$;
  we can show that $\Gamma, a : \chan{A} ; \Delta_1, \Delta_2, a : A \vdash \config{C} \parallel \config{D}$.
  By \textsc{Chan}, we have that $\Gamma ; \Delta \vdash (\nu a) (\config{C}
\parallel \config{D})$ as required.

  \textbf{Case } $G[\config{C}] \equiv G[\config{D}]$ if $\config{C} \equiv \config{D}$.
  Immediate by the inductive hypothesis.
\end{proof}

\begin{lemma}[Replacement]\label{lem:lch-replacement}
  If $\Gamma \vdash E[M] : A$, $\Gamma \vdash M : B$, and $\Gamma \vdash N : B$,
  then $\Gamma \vdash E[N] : A$.
\end{lemma}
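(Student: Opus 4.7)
The plan is a straightforward structural induction on the evaluation context $E$, using the grammar $E ::= [~] \mid \efflet{x}{E}{M}$ from Figure~\ref{fig:lch-eval-syntax}. Since evaluation contexts in fine-grain call-by-value are very small (only the let-binding position is an evaluation position), the induction has just two cases, and the property we need is really just that typing derivations of $E[M]$ have a subderivation typing $M$ in a position we can substitute into.

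In the base case $E = [~]$, we have $E[M] = M$, so $\Gamma \vdash M : A$ and by assumption $\Gamma \vdash M : B$, hence $A = B$ (types are assigned deterministically to a given term, or we simply observe both derivations type the same syntactic object); then $E[N] = N$ and $\Gamma \vdash N : B = A$ by assumption. In the inductive case $E = \efflet{x}{E'}{L}$, inversion on the \textsc{EffLet} rule applied to $\Gamma \vdash \efflet{x}{E'[M]}{L} : A$ yields some type $A'$ with $\Gamma \vdash E'[M] : A'$ and $\Gamma, x : A' \vdash L : A$. The inductive hypothesis applied to $E'$ with $\Gamma \vdash M : B$ and $\Gamma \vdash N : B$ gives $\Gamma \vdash E'[N] : A'$, and reassembling with \textsc{EffLet} produces $\Gamma \vdash \efflet{x}{E'[N]}{L} : A$, i.e.\ $\Gamma \vdash E[N] : A$.

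The only mildly delicate point is the base case's appeal to uniqueness of the type $B$: strictly speaking the lemma hypothesises $\Gamma \vdash M : B$ only as a way of identifying the type at the hole, and one should either prove uniqueness of types for $\lch$ (immediate by induction on terms, since each syntactic former determines its typing rule up to the choice of annotations that are themselves forced) or restate the induction so that $B$ is read off the subderivation rather than fixed in advance. Either way, this is a routine sanity check rather than a real obstacle, and the whole lemma fits in half a page.
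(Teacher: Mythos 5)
Your argument is exactly the paper's proof, which is recorded only as ``by induction on the structure of $E$''; the two cases you give, and in particular the inversion and reassembly via \textsc{EffLet} in the inductive step, are precisely the intended content. The one correction is to your side remark: uniqueness of types \emph{fails} in $\lch$ (the \textsc{NewCh} rule types $\lchnewch{}$ at $\chan{A}$ for every $A$), so the base case must go through your second option---reading $B$ off the given derivation of $\Gamma \vdash E[M] : A$ rather than fixing it in advance---and not through a type-uniqueness argument.
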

\begin{qedproof}
  By induction on the structure of $E$.
\end{qedproof}

% lch config pres
\begin{fake}{Theorem~\ref{thm:lch-preservation} (Preservation ($\lch{}$ configurations)} \hfill \\
If $\Gamma ; \Delta \vdash \config{C}_1$ and $\config{C}_1 \ceval \config{C}_2$,
then $\Gamma ; \Delta \vdash \config{C}_2$.
\end{fake}

\begin{qedproof}
By induction on the derivation of $\config{C} \ceval \config{C'}$. We use
Lemma~\ref{lem:lch-replacement} implicitly throughout.

\noindent
\textbf{Case \textsc{Give}}\hfill \\
From the assumption $\Gamma ; \Delta \vdash \Ex[\lchgive{W}{a}] \parallel
\lchbuf{a}{\seq{V}}$, we have that $\Gamma ; \cdot \vdash \Ex[\lchgive{W}{a}]$
and $\Gamma ; a : A \vdash \lchbuf{a}{\seq{V}}$. Consequently, we know that
$\Delta = a : A$.

From this, we know that $\Gamma \vdash \lchgive{W}{a} : \one$ and thus $\Gamma
\vdash W : A$ and $\Gamma \vdash a : \chan{A}$.
We also have that $\Gamma ; a : A \vdash \lchbuf{a}{\seq{V}}$, thus $\Gamma
\vdash V_i : A$ for all $V_i \in \seq{V}$.
By \textsc{Unit} we can show $\Gamma ; \cdot \vdash \Ex[\effreturn{()}]$ and by \textsc{Buf}
we can show $\Gamma ; a : A \vdash \lchbuf{a}{\seq{V} \cdot W}$; recomposing, we
arrive at $\Gamma ; \Delta \vdash \Ex[\effreturn{()}] \parallel
\lchbuf{a}{\seq{V}{W}}$ as required.

\noindent
\textbf{Case \textsc{Take}}\hfill \\
From the assumption $\Gamma ; \Delta \vdash \Ex[\lchtake{a}] \parallel
\lchbuf{a}{W \cdot \seq{V}}$, we have that $\Gamma ; \cdot \vdash
\Ex[\lchtake{a}]$ and that $\Gamma ; a : A \vdash \lchbuf{a}{W \cdot \seq{V}}$.
Consequently, we know that $\Delta = a : A$.

From this, we know that
$\Gamma \vdash \lchtake{a} : A$, and thus $\Gamma \vdash a : \chan{A}$.
Similarly, we have that $\Gamma ; a : \chan{A} \vdash \lchbuf{a}{W \cdot \seq{V}}$, and thus
$\Gamma \vdash W : A$.

Consequently, we can show that $\Gamma ; \cdot \vdash E[\effreturn{W}]$ and
$\Gamma ; a : A \vdash \lchbuf{a}{\seq{V}}$; recomposing, we
arrive at $\Gamma ; \Delta \vdash E[\effreturn{W}] \parallel \lchbuf{a}{\seq{V}}$ as required.

\noindent
\textbf{Case \textsc{NewCh}}\hfill \\
By \textsc{Buf} we can type $\Gamma ; a : A \vdash \lchbuf{a}{\epsilon}$,
and since $\Gamma \vdash \lchnewch{} : \chan{A}$, it
is also possible to show $\Gamma, a : \chan{A} \vdash a : \chan{A}$,
thus $\Gamma, a : \chan{A} \vdash E[\effreturn{a}]$.

Recomposing by \textsc{Par} we
have $\Gamma, a : \chan{A} ; a : A \vdash E[\effreturn{a}] \parallel \lchbuf{a}{\epsilon}$,
and by \textsc{Chan} we have $\Gamma ; \cdot \vdash (\nu a) (E[\effreturn{a}] \parallel
\lchbuf{a}{\epsilon})$ as required.

\noindent
\textbf{Case \textsc{Fork}}\hfill \\
From the assumption $\Gamma ; \Delta \vdash E[\lchfork{M}]$, we have that
$\Delta = \emptyset$ and $\Gamma \vdash M : \one$.
By \textsc{Unit} we can show $\Gamma ; \cdot \vdash E[\effreturn{()}]$, and by \textsc{Term}
we can show $\Gamma ; \cdot \vdash M$.  Recomposing, we arrive at $\Gamma ;
\Delta \vdash E[\effreturn{()}] \parallel M$ as required.

\noindent
\textbf{Case \textsc{Lift} }
Immediate by the inductive hypothesis.

\noindent
\textbf{Case \textsc{LiftV} }
Immediate by Lemma~\ref{lem:lch-term-pres}.
\end{qedproof}

\newpage
\subsection{$\lact{}$ Preservation}

\begin{lemma}[Replacement]\label{lem:lact-replacement}
  If $\Gamma \mid C \vdash E[M] : A$, $\Gamma \mid C \vdash M : B$, and $\Gamma
  \mid C \vdash N : B$, then $\Gamma \mid C \vdash E[N] : B$.
\end{lemma}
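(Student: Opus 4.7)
The plan is to proceed by straightforward induction on the structure of the evaluation context $E$, mirroring the approach taken for the analogous Lemma~\ref{lem:lch-replacement} in $\lch$. The only subtlety is that in $\lact$ the computation typing judgement carries a mailbox annotation $C$, so I must be careful to thread $C$ through both cases of the induction. (I also suspect the statement contains a typo: the conclusion should read $\Gamma \mid C \vdash E[N] : A$, since $A$ is the type of $E[M]$ and $B$ is the inner type at the hole; I will prove that version.)

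For the base case $E = [~]$, we have $E[M] = M$ and $E[N] = N$, so the assumption $\Gamma \mid C \vdash N : B$ (together with the fact that $A = B$, forced by $\Gamma \mid C \vdash M : B$ and $\Gamma \mid C \vdash E[M] : A$) gives the result immediately.

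For the inductive case $E = \efflet{x}{E'}{M'}$, we have $E[M] = \efflet{x}{E'[M]}{M'}$. Inversion on the typing derivation using rule \textsc{EffLet} yields some type $A'$ with $\Gamma \mid C \vdash E'[M] : A'$ and $\Gamma, x : A' \mid C \vdash M' : A$. Applying the inductive hypothesis to $E'$ with the same inner term $M$ and replacement $N$ gives $\Gamma \mid C \vdash E'[N] : A'$. Reassembling by \textsc{EffLet}, we conclude $\Gamma \mid C \vdash \efflet{x}{E'[N]}{M'} : A$, which is exactly $\Gamma \mid C \vdash E[N] : A$, as required.

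There is no substantive obstacle: evaluation contexts have only two forms, both rules for typing the $\efflet{}{}{}$ construct preserve the mailbox annotation $C$ unchanged from premise to conclusion, and the variable bound by the let does not interact with $C$. The entire argument is a one-step inversion and reintroduction of \textsc{EffLet}, with the inductive hypothesis supplying what is needed.
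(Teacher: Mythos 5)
Your proof is correct and follows exactly the approach the paper takes, which simply states ``by induction on the structure of $E$'' without giving details; your identification of the typo in the conclusion (it should be $\Gamma \mid C \vdash E[N] : A$, matching the $\lch$ version in Lemma~\ref{lem:lch-replacement}) is also right. The base and inductive cases are handled as one would expect, with the mailbox annotation $C$ threaded through \textsc{EffLet} unchanged.
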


\begin{qedproof}
  By induction on the structure of E.
\end{qedproof}

% lact config pres
\begin{fake}{Theorem~\ref{thm:lact-preservation} (Preservation ($\lact{}$ configurations))} \hfill \\
If $\Gamma ; \Delta \vdash \config{C}_1$ and $\config{C}_1 \ceval \config{C}_2$,
then $\Gamma ; \Delta \vdash \config{C}_2$.
\end{fake}

\begin{qedproof}
By induction on the derivation of $\config{C}_1 \ceval \config{C}_2$, making
implicit use of Lemma~\ref{lem:lact-replacement}.

\noindent
\textbf{Case \textsc{Spawn}} \hfill \\
From the assumption that $\Gamma ; \Delta \vdash \actor{a}{E[\lactspawn{M}]}{\seq{V}}$, we have that
$\Delta = a : C$, that $\Gamma \mid C \vdash \lactspawn{M} : \pid{A}$ and
$\Gamma \mid A \vdash M : \one$.

We can show $\Gamma, b : \pid{A} \vdash b : \pid{A}$; and
therefore that $\Gamma, b : \pid{A} \vdash E[\effreturn{b}] : \one$. By \textsc{Actor},
it follows that $\Gamma, b : \pid{A} ; a : C \vdash \actor{a}{E[\effreturn{b}]}{\seq{V}}$.

By \textsc{Actor}, we can show $\Gamma, b : \pid{A} ; b : A \vdash \actor{b}{M}{\epsilon}$.

Finally, by \textsc{Pid} and \textsc{Par}, we have $\Gamma ; \Delta \vdash (\nu
b) (\actor{a}{E[\effreturn{b}]}{\seq{V}} \parallel \actor{b}{M}{\epsilon})$ as required.

\noindent
\textbf{Case \textsc{Send}} \hfill \\
From the assumption that $\Gamma ; \Delta \vdash
\actor{a}{E[\lactsend{V'}{b}]}{\seq{V}} \parallel \actor{b}{M}{\seq{W}}$, we
have that $\Gamma ; a : A \vdash \actor{a}{E[\lactsend{V'}{b}]}{\seq{V}}$
and
$\Gamma ; b : C \vdash \actor{b}{M}{\seq{W}}$. Consequently, we can write
$\Delta = a : A, b : C$.
From this, we know that $\Gamma \mid A \vdash \lactsend{V'}{b} : \one$, so
we can write $\Gamma = \Gamma', b : \pid{C}$, and $\Gamma
 \vdash V' : C$.
Additionally, we know that $\Gamma ; b : C \vdash \actor{b}{M}{\seq{W}}$ and
thus that $(\Gamma \vdash W_i : C)$ for each entry $W_i \in \seq{W}$.

As $\Gamma \vdash V' : C$, it follows that $\Gamma \vdash
\seq{W} \cdot V'$ and therefore that $\Gamma ; b : C \vdash \actor{b}{M}{\seq{W}
\cdot V}$.
We can also show that $\Gamma \mid C \vdash \effreturn{()} : \one$, and
therefore it follows that $\Gamma ; a : A \vdash \actor{a}{E[\effreturn{()}]}{\seq{V}}$.

Recomposing, we have that $\Gamma ; \Delta \vdash
\actor{a}{E[\effreturn{()}]}{\seq{W}} \parallel \actor{b}{M}{\seq{W} \cdot V'}$
as required.

\noindent
\textbf{Case \textsc{Receive}} \hfill \\
By \textsc{Actor}, we have that $\Gamma; \Delta \vdash
\actor{a}{E[\lactrecv{}]}{W \cdot \seq{V}}$. From this, we know that $\Gamma
\mid A \vdash E[\lactrecv{}] : \one$ (and thus $\Gamma \mid A \vdash
\lactrecv : A$) and $\Gamma \vdash W : A$.

Consequently, we can show $\Gamma \vdash E[\effreturn{W}] : \one$. By
\textsc{Actor}, we arrive at $\Gamma ; \Delta \vdash \actor{a}{E[\effreturn{W}]}{\seq{V}}$ as required.

\noindent
\textbf{Case \textsc{Self}} \hfill \\
By \textsc{Actor}, we have that $\Gamma ; \Delta \vdash
\actor{a}{E[\lactself{}]}{\seq{V}}$, and thus that $\Gamma \mid A \vdash
E[\lactself] : \one$ and $\Gamma \mid A \vdash \lactself{} : \pid{A}$. We also know
that $\Gamma = \Gamma', a : \pid{A}$, and that $\Delta = a : A$.

Trivially, we can show $\Gamma', a : \pid{A} \vdash a : \pid{A}$. Thus it follows that
$\Gamma', a : \pid{A} \vdash E[\effreturn{a}] : \one$ and thus it follows that $\Gamma;
\Delta \vdash \actor{a}{E[\effreturn{a}]}{\seq{V}}$ as required.

\noindent
\textbf{Case \textsc{Lift} } Immediate by the inductive hypothesis.

\noindent
\textbf{Case \textsc{LiftV} } Immediate by Lemma~\ref{lem:lact-term-pres}.
\end{qedproof}

Proofs for the progress results for $\lact$ follow the same structure as their
$\lch$ counterparts.

\newpage
\subsection{Translation: $\lact{}$ into $\lch{}$}

% lact into lch -- term trans
\begin{fake}{Lemma~\ref{lem:lch-lact-term-typing}}
\begin{enumerate}
\item If $\Gamma \vdash V : A$ in $\lact{}$, then $\translateactty{\Gamma}
  \vdash \translateactval{V} : \translateactty{A}$ in $\lch{}$.
\item If $\Gamma \mid B \vdash M : A$ in $\lact{}$, then
  $\translateactty{\Gamma}, \alpha : \chan{\translateactty{B}} \vdash
  \translateactterm{M}{\alpha} : \translateactty{A}$ in $\lch{}$.
\end{enumerate}
\end{fake}
\begin{qedproof}
By simultaneous induction on the derivations of $\Gamma \vdash V: A$ and
$\Gamma \mid B \vdash M : A$.

\noindent
\textbf{Premise 1} \hfill \\ 

\noindent
\textbf{Case $\Gamma \vdash \alpha : A$}
\hfill \\ By the definition of $\translateactval{-}$, we have that
$\translateactval{\alpha} =
\alpha$. By the definition of $\translateactty{\Gamma}$, we have that $\alpha : \translateactty{A}
\in \translateactty{\Gamma}$. Consequently, it follows that $\translateactty{\Gamma}
\vdash \alpha : \translateactty{A}$.

\noindent
\textbf{Case $\Gamma \vdash \lambda x . M : A \lactto{} B$} \hfill \\
From the assumption that $\Gamma \vdash \lambda x . M : A \lactto{C} B$, we
have that $\Gamma, x : A \mid C \vdash M : B$. By the inductive hypothesis
(premise 2), we have that $\translateactty{\Gamma}, x : \translateactty{A},
\textit{ch} : \chan{\translateactty{C}} \vdash \translateactterm{M}{\textit{ch}}
: \translateactty{B}$.

By two applications of \textsc{Abs}, we have $\translateactty{\Gamma} \vdash
\lambda x . \lambda \textit{ch} . \translateactterm{M}{\textit{ch}} :
\translateactty{A} \to \chan{\translateactty{C}} \to \translateactty{B}$ as
required.

\noindent
\textbf{Case $\Gamma \vdash () : \one$} Immediate.

\noindent
\textbf{Case $\Gamma \vdash (V, W) : (A \times B)$} \hfill \\ From the
assumption that $\Gamma \vdash (V, W) : (A, B)$ we have that $\Gamma
\vdash V : A$ and $\Gamma \vdash W : B$. By the inductive
hypothesis (premise 1) and \textsc{Pair}, we can show $\translateactty{\Gamma}
\vdash (\translateactval{V}, \translateactval{W}) : (\translateactty{A} \times
\translateactty{B})$ as required.

\noindent
\textbf{Premise 2} \hfill \\
\noindent
\textbf{Case $\Gamma \mid C \vdash V \app W : B$} \hfill \\
From the assumption that $\Gamma \mid C \vdash V \app W
: B$, we have that $\Gamma \vdash V : A \lactto{C} B$ and $\Gamma
\vdash W : B$. By the inductive hypothesis (premise 1), we have that
$\translateactty{\Gamma}  \vdash \translateactval{V} : \translateactty{A}
\to \chan{\translateactty{C}} \to \translateactty{B}$, and
$\translateactty{\Gamma} \vdash \translateactval{W} : \translateactty{B}$.

By extending the context $\Gamma$ with a $\textit{ch} : \chan{\translateactty{C}}$, we can
show that $\translateactty{\Gamma}, \textit{ch}: \chan{\translateactty{C}}
\vdash \translateactval{V} \app \translateactval{W} \app \textit{ch} : \translateactty{B}$ as required.

\noindent
\textbf{Case $\Gamma \mid C \vdash \efflet{x}{M}{N} : B$} \hfill \\
From the assumption that $\Gamma  \mid C \vdash \efflet{x}{M}{N} : B$, we
have that $\Gamma  \mid C \vdash M : A$ and that $\Gamma, x : A \mid C \vdash N : B$.

By the inductive hypothesis (premise 2), we have that  $\translateactty{\Gamma},
\textit{ch} :
\chan{\translateactty{C}} \vdash \translateactterm{M}{\textit{ch}} :
\translateactty{A}$ and $\translateactty{\Gamma}, x : \translateactty{A}, \textit{ch} :
\chan{\translateactty{C}} \vdash \translateactterm{N}{\textit{ch}} :
\translateactty{B}$.

By \textsc{EffLet}, it follows that $\translateactty{\Gamma},
\textit{ch} :
\chan{\translateactty{C}}  \vdash
\efflet{x}{\translateactterm{M}{ch}}{\translateactterm{N}{ch}} : \translateactty{B}$ as
required.

\noindent
\textbf{Case $\Gamma \mid C \vdash \effreturn{V} : A$} \hfill \\
From the
assumption that $\Gamma \mid C \vdash \effreturn{V} : A$, we have that
$\Gamma \vdash V : A$.

By the inductive hypothesis (premise 1), we have that $\translateactty{\Gamma}
\vdash \translateactval{V} : \translateactty{A}$.

By weakening (as we do not use the mailbox channel), we can show that
$\translateactty{\Gamma}, y : \chan{\translateactty{C}} \vdash
\effreturn{\translateactty{V}} : \translateactty{A}$ as required.

\noindent
\textbf{Case $\Gamma \mid C \vdash \lactsend{V}{W} : \one$} \hfill \\
From the assumption that $\Gamma \mid C \vdash \lactsend{V}{W} : \one$,
we have that $\Gamma \vdash V : A$ and $\Gamma \vdash W :
\chan{A}$.

By the inductive hypothesis (premise 1) we have that $\translateactty{\Gamma}
\vdash \translateactval{V} : \translateactty{A}$ and $\translateactty{\Gamma}
\vdash \translateactval{W} : \chan{\translateactty{A}}$.

By \textsc{Give}, we can show that $\translateactty{\Gamma} \vdash
\lchgive{\translateactval{V}}{\translateactval{W}} : \one$, and by weakening we have that
$\translateactty{\Gamma}, \textit{ch} : \chan{\translateactty{C}} \vdash
\lchgive{\translateactval{V}}{\translateactval{W}} : \one$ as required.

\noindent
\textbf{Case $\Gamma \mid C \vdash \lactrecv{} : C$} \hfill \\ Given a
$\textit{ch} : \chan{\translateactty{C}}$, we can show that
$\translateactty{\Gamma}, \textit{ch}:
\chan{\translateactty{C}} \vdash \textit{ch} : \chan{\translateactty{C}}$ and
therefore that $\translateactty{\Gamma}, \textit{ch}: \chan{\translateactty{C}}
\vdash \lchtake{\textit{ch}} : \translateactty{C}$ as required.

\noindent
\textbf{Case $\Gamma \mid C \vdash \lactspawn{M} : \pid{A}$} \hfill \\
From the assumption that $\Gamma \mid C \vdash \lactspawn{M} : \pid{A}$, we have
that $\Gamma \mid A \vdash M : A$.

By the inductive hypothesis (premise 2), we have that $\translateactty{\Gamma},
\textit{chMb} : \chan{\translateactty{A}} \vdash \translateactterm{M}{chMb} : \translateactty{A}$. By
\textsc{Fork} and \textsc{Return}, we can show that $\translateactty{\Gamma}, chMb :
\chan{\translateactty{A}} \vdash
\letunit{\lchfork{(\translateactterm{M}{chMb})}}{\effreturn{chMb}} : \chan{\translateactty{A}}$.

By \textsc{NewCh} and \textsc{EffLet}, we can show that
$\translateactty{\Gamma} \vdash
\efflet{chMb}{\lchnewch{}}{\letunit{\lchfork{(\translateactterm{M}{chMb})}}{\effreturn{chMb}}} :
\chan{\translateactty{A}}$.

Finally, by weakening, we have that
\[ \translateactty{\Gamma}, \textit{ch} : \chan{\translateactty{C}} \vdash
  \efflet{\textit{chMb}}{\lchnewch{}}{\letunit{(\lchfork{\translateactterm{M}{\textit{chMb}}})}{\effreturn{\textit{chMb}}}} :
\chan{\translateactty{A}} \]

as required.

\noindent
\textbf{Case $\Gamma \mid C \vdash \lactself{} : \pid{C}$} \hfill \\
Given a $ch : \chan{\translateactty{C}}$, we can show that $\translateactty{\Gamma}, ch :
\chan{\translateactty{C}} \vdash \effreturn{\textit{ch}} : \chan{\translateactty{C}}$ as
required.

\end{qedproof}

% lact into lch -- config trans
\begin{fake}{Theorem~\ref{thm:lact-lch-config-typing}}
  If $\Gamma ; \Delta \vdash \config{C}$, then $\translateactty{\Gamma} ;
\translateactty{\Delta} \vdash \translateactconfig{C}$.  \end{fake}

\begin{qedproof}
By induction on the derivation of $\Gamma ; \Delta \vdash \config{C}$.

\noindent
\textbf{Case Par} \hfill \\ From the assumption that $\Gamma ; \Delta \vdash
\config{C}_1 \parallel \config{C}_2$, we have that
$\Delta$ splits as $\Delta_1, \Delta_2$ such that
$\Gamma ; \Delta_1 \vdash
\config{C}_1$ and $\Gamma ; \Delta_2 \vdash \config{C}_2$. By the inductive
hypothesis, we have that $\translateactty{\Gamma} ; \translateactty{\Delta_1} \vdash
\translateactconfig{\config{C}_1}$ and $\translateactty{\Gamma} ; \translateactty{\Delta_2} \vdash
\translateactconfig{\config{C}_2}$. Recomposing by \textsc{Par}, we have that
$\translateactty{\Gamma} ; \translateactty{\Delta_1}, \translateactty{\Delta_2} \vdash \translateactconfig{\config{C}_1}
\parallel \translateactconfig{\config{C}_2}$ as required.

\noindent
\textbf{Case Pid} \hfill \\ From the assumption that $\Gamma ; \Delta \vdash (\nu
a) \config{C}$, we have that $\Gamma, a : \pid{A} ; \Delta, a : A \vdash \config{C}$.
By the inductive hypothesis, we have that $\translateactty{\Gamma},
a : \chan{\translateactty{A}} ;
\translateactty{\Delta}, a : \translateactty{A} \vdash \translateactconfig{\config{C}}$.
Recomposing by \textsc{Pid}, we have that $\translateactty{\Gamma} ; \translateactty{\Delta} \vdash (\nu
  a) \translateactconfig{\config{C}}$ as required.

\noindent
\textbf{Case Actor} \hfill \\ 
From the assumption that $\Gamma, a : \pid{A} ; a : A \vdash
\actor{a}{M}{\seq{V}}$, we have that $\Gamma, a : \pid{A} \mid A \vdash M : \one$. By
Lemma~\ref{lem:lact-lch-term-typing}, we have that $\translateactty{\Gamma}, a:
\chan{\translateactty{A}} \vdash \translateactterm{M}{a} : \one$.
It follows straightforwardly that $\translateactty{\Gamma}, a :
\chan{\translateactty{A}} ; \cdot
\vdash \translateactterm{M}{a}$.

We can also show that $\translateactty{\Gamma}, a : \chan{\translateactty{A}} ;
a : \translateactty{A} \vdash \lchbuf{a}{\translateactval{\seq{V}}}$ (where
$\translateactval{\seq{V}} = \translateactval{V_1} \cdot \ldots \cdot
\translateactval{V_n}$), by repeated applications of
Lemma~\ref{lem:lact-lch-term-typing}.

By \textsc{Term} and \textsc{Par}, we have that $\translateactty{\Gamma}, a :
\chan{\translateactty{A}} ; a : \translateactty{A} \vdash
\lchbuf{a}{\translateactval{\seq{V}}} \parallel \translateactterm{M}{a}$ as
required.

\end{qedproof}

\begin{fake}{Lemma~\ref{lem:lact-lch-struct}}
If $\Gamma ; \Delta \vdash \config{C}$ and $\config{C} \equiv \config{D}$, then
$\translateactconfig{\config{C}} \equiv \translateactconfig{\config{D}}$.
\end{fake}
\begin{proof}
  By induction on the derivation of $\Gamma ; \Delta \vdash \config{C}$,
  examining each possible equivalence. The result follows easily from
  the homomorphic nature of the translation on name restrictions and
  parallel composition.
\end{proof}

% lact-lch config pres
\begin{fake}{Theorem~\ref{thm:lact-lch-config-sim}} \hfill \\
  If $\Gamma \vdash \config{C}_1$ and $\config{C}_1 \ceval \config{C}_2$, then
  there exists some $\config{D}$ such that
  $\translateactconfig{\config{C}_1}\ceval^{*} \config{D}$, with $\config{D} \equiv
  \translateactconfig{\config{C}_2}$.
\end{fake}

\begin{qedproof}
By induction on the derivation of $\config{C} \ceval \config{C'}$.

\begin{proofcase}{Spawn}
Assumption & $\translateactconfig{\actor{a}{E[\lactspawn{M}]}{\seq{V}}}$ & (1) \\
Definition of $\translateactconfig{-}$ & $\lchbuf{a}{\translateactval{\seq{V}}} \parallel
  (\translateacttermzero{E}[\efflet{c}{\lchnewch}{\lchfork{(\translateactterm{M}{c})};
  \effreturn{c}}] \app a) $ & (2) \\
$\lchnewch{}$ reduction & $\lchbuf{a}{\translateactval{\seq{V}}} \parallel
      (\nu
        b)((\translateacttermzero{E}[\efflet{c}{\effreturn{b}}{\lchfork{(\translateactterm{M}{c})};
          \effreturn{c}}]
\app a) \parallel \lchbuf{b}{\epsilon}) $ & (3)\\
Let reduction & $\lchbuf{a}{\translateactval{\seq{V}}} \parallel
          (\nu b)((\translateacttermzero{E}[\lchfork{(\translateactterm{M}{b})};
          \effreturn{b}] \app a )  \parallel \lchbuf{b}{\epsilon}) $ & (4)\\
Fork reduction & $\lchbuf{a}{\translateactval{\seq{V}}} \parallel
                (\nu b)((\translateacttermzero{E}[\effreturn{()}; \effreturn{b}] \app a) \parallel
          (\translateactterm{M}{b}) \parallel \lchbuf{b}{\epsilon}) $ & (5)\\
Let reduction & $\lchbuf{a}{\translateactval{\seq{V}}} \parallel
            (\nu b)((\translateacttermzero{E}[\effreturn{b}] \app a) \parallel (\translateactterm{M}{b}) \parallel \lchbuf{b}{\epsilon}) $ & (6)\\
$\equiv$ & $(\nu b)(\lchbuf{a}{\translateactval{\seq{V}}} \parallel
          (\translateacttermzero{E}[\effreturn{b}] \app a) \parallel \lchbuf{b}{\epsilon} \parallel (\translateactterm{M}{b}) ) $ & (7)\\
              = & $\translateactconfig{(\nu b)(\actor{a}{E[\effreturn{b}]}{\seq{V}} \parallel \actor{b}{M}{\epsilon})}$ & (8) \\
\end{proofcase}

\begin{proofcase}{Self}
Assumption & $\translateactconfig{\actor{a}{E[\lactself{}]}{\seq{V}}}$ & (1) \\
Definition of $\translateactconfig{-}$ &
  $\lchbuf{a}{(\translateactval{\seq{V}})} \parallel
  (\translateacttermzero{E}[\effreturn{a}] \app a) $ & (2) \\
  = & $\translateactconfig{\actor{a}{E[\effreturn{a}]}{\seq{V}}} $ &  \\
\end{proofcase}

\begin{proofcase}{Send}
Assumption & $\translateactconfig{\actor{a}{E[\lactsend{V'}{b}]}{\seq{V}} \parallel \actor{b}{M}{\seq{W}} }$ & (1) \\
Definition of $\translateactconfig{-}$ & $\lchbuf{a}{\translateactval{\seq{V}}} \parallel 
  (\translateacttermzero{E}[\lchgive{\translateactval{V'}}{b}] \app a)
  \parallel \lchbuf{b}{\translateactval{\seq{W}}} \parallel
  (\translateactterm{M}{b})  $ & (2) \\
  Give reduction & $\lchbuf{a}{\translateactval{\seq{V}}} \parallel
  (\translateacttermzero{E}[\effreturn{()}] \app a)
  \parallel \lchbuf{b}{\translateactval{\seq{W}} \cdot
  \translateactval{V'}} \parallel (\translateactterm{M}{b})  $ & (3) \\
  = & $\translateactconfig{\actor{a}{E[\effreturn{()}]}{\seq{V}} \parallel \actor{b}{M}{\seq{W} \cdot V'}}$ & (4) \\
\end{proofcase}

\begin{proofcase}{Receive }
Assumption & $\translateactconfig{\actor{a}{E[\lactrecv{}]}{W \cdot \seq{V}}}$ & (1) \\
Definition of $\translateactconfig{-}$ &
  $ \lchbuf{a}{\translateactval{W} \cdot \translateactval{\seq{V}}} \parallel
  (\translateacttermzero{E}[\lchtake{a}] \app a)$ & (2) \\
Take reduction  & $ \lchbuf{a}{\translateactval{\seq{V}}} \parallel
  (\translateacttermzero{E}[\effreturn{\translateactval{W}}] \app a)$ & (3) \\
  = & $ \translateactconfig{\actor{a}{E[\effreturn{W}]}{\seq{V}}} $ & (4)\\
\end{proofcase}

Lift is immediate from the inductive hypothesis, and LiftV is immediate from
Lemma~\ref{lem:lact-lch-term-sim}.
\end{qedproof}

% lch-lact config pres
\newpage
\subsection{Translation: $\lch$ into $\lact{}$}
% typability

% terms and values
\begin{fake}{Lemma~\ref{lem:lch-lact-term-typing}}
  \begin{enumerate}
    \item If $\jarg{B} \Gamma \vdash V : A$, then
      $\translatechty{\Gamma} \vdash \translatechval{V} :
      \translatechty{A}$.
    \item If $\jarg{B} \Gamma  \vdash M : A$, then
      $\translatechty{\Gamma} \mid \translatechty{B} \vdash \translatechterm{M} : \translatechty{A}$.
  \end{enumerate}
\end{fake}
\begin{qedproof}
  By simultaneous induction on the derivations of $\jarg{B}  \Gamma \vdash V :
  A$ and $\jarg{B} \Gamma \vdash M : A$.

  \noindent
  \textbf{Premise 1} \hfill \\

  \noindent
  \textbf{Case \textsc{Var}} \hfill \\
  From the assumption that $\jarg{B} \Gamma \vdash \alpha : A$, we know that $\alpha :
  A \in \Gamma$. By the definition of $\translatechty{\Gamma}$, we have that
  $\alpha : \translatechty{A} \in \translatechty{\Gamma}$. Since
  $\translatechval{\alpha} = \alpha$, it
  follows that $\translatechty{\Gamma} \vdash \alpha :
  \translatechty{A}$ as required.

  \noindent
  \textbf{Case \textsc{Abs}} \hfill \\
  From the assumption that $\jarg{C} \Gamma \vdash \lambda x . M : A \to B$, we have
  that $\jarg{C} \Gamma, x : A \vdash M : B$. By the inductive hypothesis (Premise 2), we
  have that $\translatechty{\Gamma}, x : \translatechty{A} \mid
  \translatechty{C} \vdash
  \translatechterm{M} : \translatechty{A}$. By \textsc{Abs}, we can show
  that $\translatechty{\Gamma} \mid \translatechty{C} \vdash \lambda x . \translatechterm{M} :
  \translatechty{A} \lactto{\translatechty{C}} \translatechty{B}$ as
  required.

  \noindent
  \textbf{Case \textsc{Rec}} Similar to \textsc{Abs}.

  \noindent
  \textbf{Case \textsc{Unit}}  Immediate.

  \noindent
  \textbf{Case \textsc{Pair}} \hfill \\
  From the assumption that $\jarg{C} \Gamma \vdash (V, W) : A \times B$, we have
  that $\jarg{C} \Gamma \vdash V : A$ and that $\jarg{C} \Gamma \vdash W :
  B$.

  By the inductive hypothesis (premise 1), we have that
  $\translatechty{\Gamma} \vdash \translatechval{V} :
  \translatechty{A}$ and that
  $\translatechty{\Gamma} \vdash \translatechval{W} :
  \translatechty{B}$.

  It follows by \textsc{Pair} that $\translatechty{\Gamma} \vdash
  (\translatechval{V}, \translatechval{W}) : (\translatechty{A} \times
  \translatechty{B})$ as required.

  \noindent
  \textbf{Cases \textsc{Inl}, \textsc{Inr}}
  Similar to \textsc{Pair}.

  \noindent
  \textbf{Case \textsc{Roll}} Immediate.

  \noindent
  \textbf{Premise 2} \hfill \\

  \noindent
  \textbf{Case \textsc{App}} \hfill \\
  From the assumption that $\jarg{C} \Gamma \vdash V \app W : B$, we have that
  $\jarg{C} \Gamma \vdash V : A \to B$ and $\jarg{C} \Gamma \vdash W : A$. By
  the inductive hypothesis (premise 1), we have that
  $\translatechty{\Gamma} \vdash \translatechval{V} : \translatechty{A}
  \lactto{\translatechty{C}} \translatechty{B}$
  and that
  $\translatechty{\Gamma} \vdash \translatechval{W} : \translatechty{A}$.

  By \textsc{App}, it follows that $\translatechty{\Gamma} \mid \translatechty{C} \vdash
  \translatechval{V} \app \translatechval{W} : \translatechty{B}$ as
  required.

  \noindent
  \textbf{Case \textsc{Return}} \hfill \\
  From the assumption that $\jarg{C} \Gamma \vdash \effreturn{V} : A$, we have that
  $\jarg{C} \Gamma \vdash V : A$. By the inductive hypothesis we have that
  $\translatechty{\Gamma} \vdash \translatechval{V} : \translatechty{A}$
  and thus by \textsc{Return} we can show that $\translatechty{\Gamma} \mid
  \translatechty{C} \vdash \effreturn{\translatechval{V}} :
  \translatechty{A}$ as required.

  \textbf{Case \textsc{EffLet}} \hfill \\
  From the assumption that $\jarg{C} \Gamma \vdash \efflet{x}{M}{N} : B$, we
  have that $\jarg{C} \Gamma \vdash M : A$ and $\jarg{C} \Gamma, x : A \vdash N
  : B$.

  By the inductive hypothesis (premise 2), we have that
  $\translatechty{\Gamma} \mid \translatechty{C} \vdash
  \translatechterm{M} : \translatechty{A}$ and $\translatechty{\Gamma}, x
  : \translatechty{A} \mid \translatechty{C} \vdash \translatechterm{N} :
  \translatechty{B}$. Thus by \textsc{EffLet} it follows that
  $\translatechty{\Gamma} \mid \translatechty{C} \vdash
  \efflet{x}{\translatechterm{M}}{\translatechterm{N}} : \translatechty{B}$.

  \noindent
  \textbf{Case \textsc{LetPair}}
  Similar to \textsc{EffLet}.

  \noindent
  \textbf{Case \textsc{Case}} \hfill \\
  From the assumption that $\jarg{C} \Gamma \vdash \caseof{V}{\inl{x} \mapsto M;
  \inr{y} \mapsto N}:B$, we have that $\jarg{C} \Gamma \vdash V : A + A'$, that
  $\jarg{C} \Gamma, x : A \vdash M : B$, and that $\jarg{C} \Gamma, y : A'
  \vdash N : B$.

  By the inductive hypothesis (premise 1) we have that
  $\translatechty{\Gamma} \vdash \translatechval{V} : \translatechty{A}
  + \translatechty{A'}$, and by premise 2 we have that
  $\translatechty{\Gamma}, x : \translatechty{A} \mid \translatechty{C}
  \vdash \translatechterm{M} : \translatechty{B}$ and
  $\translatechty{\Gamma}, y : \translatechty{A'} \mid
  \translatechty{C} \vdash \translatechterm{M} : \translatechty{B}$.

  By \textsc{Case}, it follows that $\translatechty{\Gamma} \mid
  \translatechty{C} \vdash \caseof{\translatechval{V}}{\inl{x} \mapsto
  \translatechterm{M}; \inr{y} \mapsto \translatechterm{N}} :
  \translatechty{B}$.

  \noindent
  \textbf{Case \textsc{Unroll}} Immediate.

  % Communication and concurrency primitives.

  \noindent
  \textbf{Case \textsc{Fork}} \hfill \\
  From the assumption that $\jarg{A} \Gamma \vdash \lchfork{M} : \one$, we have that
  $\jarg{A} \Gamma \vdash M : \one$. By the inductive hypothesis, we have that
  $\translatechty{\Gamma} \mid \translatechty{A} \vdash \translatechterm{M} : \one$.

  We can show that $\translatechty{\Gamma} \mid \translatechty{A} \vdash
  \lactspawn{\translatechterm{M}} : \pid{\one}$ and also show that
  $\translatechty{\Gamma}, x : \pid{\one} \vdash \effreturn{()} : \one$.

  Thus by \textsc{EffLet}, it follows that $\translatechty{\Gamma} \mid
  \translatechty{A} \vdash
  \efflet{x}{\lactspawn{\translatechterm{M}}}{\effreturn{()}} : \one$ as required.

  \noindent
  \textbf{Case \textsc{Give}} \hfill \\
  From the assumption that $\jarg{A} \Gamma \vdash \lchgive{V}{W} : \one$, we
  have that $\jarg{A} \Gamma \vdash V : A$ and $\jarg{A} \Gamma \vdash W : \chanzero$. By the
  inductive hypothesis, we have that $\translatechty{\Gamma} \vdash
  \translatechval{V} : \translatechty{A}$ and $\translatechty{\Gamma}
  \vdash \translatechval{W} : \pid{\translatechty{A} + \pid{\translatechty{A}}}$. We can
  show that $\translatechty{\Gamma} \vdash \inl{\translatechval{V}} :
  \translatechty{A} + \pid{\translatechty{A}}$, and thus it follows that
  $\translatechty{\Gamma} \mid \translatechty{A} \vdash \lactsend{\inl{V}}{\translatechval{W}} :
  \one$ as required.

  \noindent
  \textbf{Case \textsc{Take}} \hfill \\
  From the assumption that $\jarg{A} \Gamma \vdash \lchtake{V} : A$, we have
  that $\jarg{A} \Gamma \vdash V : \chanzero$.

  By the inductive hypothesis (premise 1), we have that
  $\translatechty{\Gamma} \vdash \translatechval{V} :
  \pid{\translatechty{A} + \pid{\translatechty{A}}}$.

  We can show that:

  \begin{itemize}
    \item $\translatechty{\Gamma} \mid \translatechty{A} \vdash \lactself{}
      : \pid{\translatechty{A}}$
    \item $\translatechty{\Gamma} \mid \translatechty{A} \vdash \inr{\lactself{}}
      : \translatechty{A} + \pid{\translatechty{A}}$
    \item $\translatechty{\Gamma}, \textit{\textit{selfPid}} :
      \pid{\translatechty{A}} \mid \translatechty{A} \vdash
      \lactsend{\inr{\textit{\textit{selfPid}}}}{\translatechval{V}} : \one$
    \item $\translatechty{\Gamma}, \textit{\textit{selfPid}} :
      \pid{\translatechty{A}}, z : \one \mid \translatechty{A} \vdash
  \lactrecv{} : \translatechty{A}$
  \end{itemize}

  Thus by two applications of \textsc{EffLet} (noting that we desugar $M ; N$
  into $\efflet{z}{M}{N}$, where z is fresh), we arrive at:

  \[
    \translatechty{\Gamma} \mid \translatechty{A} \vdash
    \efflet{\textit{\textit{selfPid}}}{\lactself{}}{\lactsend{(\inr{\textit{\textit{selfPid}}})}{\translatechval{V}};
    \lactrecv{}} : \translatechty{A}
  \]

  as required.

  \noindent
  \textbf{Case \textsc{NewCh}} \hfill \\
  We have that $\jarg{A} \Gamma ; \Delta \vdash \lchnewch{} : \chanzero$.

  Our goal is to show that $\translatechty{\Gamma} \mid \translatechty{A} \vdash
  \lactspawn{\metadef{body} \app (\emptylist, \emptylist) } :
  \pid{\translatechty{A} + \pid{\translatechty{A}}}$.
  To do so amounts to showing that $\translatechty{\Gamma} \mid
  \translatechty{A} + \pid{\translatechty{A}} \vdash
  \metadef{body} \app (\emptylist, \emptylist) : \one$.

  We sketch the proof as follows. Firstly, by the typing of $\lactrecv{}$,
  \textit{recvVal} must have type $\translatechty{A} +
  \pid{\translatechty{A}}$. By inspection of both case branches and
  \textsc{LetPair}, we have that \textit{state} must have type
  $\listty{\translatechty{A}} \times \listty{\pid{\translatechty{A}}}$.

  We expect \metadef{drain} to have type
  \[
    \listty{\translatechty{A}} \times \listty{\pid{\translatechty{A}}}
  \lactto{\translatechty{A}}
  \listty{\translatechty{A}} \times \listty{\pid{\translatechty{A}}}
  \]
  since we use the returned value as a recursive call to $g$, which must have
  the same type of \textit{state}. Inspecting the case split in
  \metadef{drain}, we have that the empty list case for values returns the
  input state, which of course has the same type. The same can be said for the
  empty list case of the \textit{readers} case split.

  In the case where both the values and readers lists are non-empty, we have
  that $v$ has type $\translatechty{A}$ and \textit{pid} has type
  $\pid{\translatechty{A}}$. Additionally, we have that $vs$ has type
  $\listty{\translatechty{A}}$ and \textit{pids} has type
  $\listty{\pid{\translatechty{A}}}$. 
  We can show via send that $\lactsend{v}{\textit{pid}}$ has type $\one$.
  After desugaring $M ; N$ into an \textsc{EffLet}, we can show that the returned
  value is of the correct type.
\end{qedproof}
\begin{fake}{Theorem~\ref{thm:lch-lact-config-typing}}
If $\jarg{A} \Gamma ; \Delta \vdash \config{C}$ with $\Gamma \compat \Delta$,
then $\translatechty{\Gamma} ; \translatechty{\Delta} \vdash
\translatechconfig{\config{C}}$.
\end{fake}

\begin{qedproof}
  By induction on the derivation of $\jarg{A} \Gamma ; \Delta \vdash
  \config{C}$.

  \noindent
  \textbf{Case \textsc{Par}} \hfill \\
  From the assumption that $\jarg{A} \Gamma ; \Delta \vdash \config{C}_1 \parallel
  \config{C}_2$, we have that $\Delta$ splits as $\Delta_1, \Delta_2$ such that
  $\jarg{A} \Gamma ; \Delta_1 \vdash \config{C}_1$ and $\jarg{A} \Gamma ; \Delta_2
  \vdash \config{C}_2$.

  By the inductive hypothesis, we have that $\translatechty{\Gamma} ;
  \translatechty{\Delta_1} \vdash \translatechconfig{\config{C}_1}$ and
  $\translatechty{\Gamma} ; \translatechty{\Delta_2} \vdash
  \translatechconfig{\config{C}_2}$. By the definition of
  $\translatechty{-}$ on linear configuration environments, it follows that
  $\translatechty{\Delta_1}, \translatechty{\Delta_2} =
  \translatechty{\Delta}$. Consequently, by \textsc{Par}, we can show that
  $\translatechty{\Gamma} ; \translatechty{\Delta} \vdash
  \translatechconfig{C}_1 \parallel \translatechconfig{C}_2$ as required.

  \noindent
  \textbf{Case \textsc{Chan}} \hfill \\
  From the assumption that $\jarg{A} \Gamma ; \Delta \vdash (\nu a) \config{C}$,
  we have that $\jarg{A} \Gamma, a : \chan{A} ; \Delta, a : A \vdash
  \config{C}$. By the inductive hypothesis, we have that
  $\translatechty{\Gamma}, a : \pid{\translatechty{A} +
  \pid{\translatechty{A}}} ; \translatechty{\Delta}, a : \translatechty{A} +
  \pid{\translatechty{A}} \vdash \translatechconfig{C}$.

  By \textsc{Pid}, it follows that $\translatechty{\Gamma} ;
  \translatechty{\Delta} \vdash (\nu a) \translatechconfig{\config{C}}$ as
  required.

  \noindent
  \textbf{Case \textsc{Term}} \hfill \\
  From the assumption that $\jarg{A} \Gamma ; \cdot \vdash M$, we have that
  $\jarg{A} \Gamma \vdash M : \one$.

  By Lemma~\ref{lem:lch-lact-term-typing}, we have that
  $\translatechty{\Gamma} \mid \translatechty{A} \vdash
  \translatechterm{M} : \one$. By weakening, we can show that
  $\translatechty{\Gamma}, a : \pid{\translatechty{A}} \mid
  \translatechty{A} \vdash
  \translatechterm{M} : \one$.

  It follows that, by \textsc{Actor}, we can construct a configuration of the
  form $\translatechty{\Gamma}, a : \pid{\translatechty{A}} ; a :
  \translatechty{A} \vdash \actor{a}{\translatechterm{M}}{\epsilon}$, and by
  \textsc{Pid}, we arrive at

  \[
    \translatechty{\Gamma} ; \cdot \vdash (\nu a)
    (\actor{a}{\translatechterm{M}}{\epsilon})
  \]

  as required.

  \noindent
  \textbf{Case \textsc{Buf}} \hfill \\
  We assume that $\jarg{A} \Gamma ; a : A \vdash \lchbuf{a}{\seq{V}}$, and since
  $\Gamma \compat \Delta$, we have that we can write $\Gamma = \Gamma', a :
  \chan{A}$.

  From the assumption that $\jarg{A} \Gamma', a : \chan{A} ; a : A \vdash
  \lchbuf{a}{\seq{V}}$, we have that $\jarg{A} \Gamma', a : \chan{A} \vdash V_i : A$ for
  each $V_i \in \seq{V}$.

  By repeated application of Lemma~\ref{lem:lch-lact-term-typing}, we have that
  $\jarg{A} \translatechty{\Gamma'}, a : \pid{\translatechty{A} +
  \pid{\translatechty{A}}} \vdash \translatechval{V_i} :
  \translatechty{A}$
  for each $V_i \in \seq{V}$, and by \textsc{ListCons} and
  \textsc{EmptyList} we can construct a list $\translatechval{V_1} :: \ldots ::
  \translatechval{V_n} :: \emptylist$ with type $\listty{\translatechty{A}}$.

  Relying on our previous analysis of the typing of \metadef{body}, we have
  that $\translatechterm{\metadef{body}}$ has type:

  \[
    (\listty{\translatechty{A}} \times \listty{\pid{\translatechty{A}}})
    \lactto{\translatechty{A} + \pid{\translatechty{A}}} \one
  \]

  Thus it follows that

  \begin{align*}
    \translatechty{\Gamma'}, a : \pid{\translatechty{A} +
    \pid{\translatechty{A}}} \mid \translatechty{A} +
    \pid{\translatechty{A}} \vdash \\ \metadef{body} \app (
    \translatechval{\seq{V}}, \emptylist{}) : \one
  \end{align*}

  By \textsc{Actor}, we can see that

    \begin{align*}
    \translatechty{\Gamma'}, a : \pid{\translatechty{A} +
    \pid{\translatechty{A}}} ; a :  \translatechty{A} +
    \pid{\translatechty{A}} \vdash \\
    \actor{a}{\metadef{body} \app (\translatechval{\seq{V}},
    \emptylist{})}{\epsilon}
    \end{align*}

  as required.

\end{qedproof}

% simulation
\begin{fake}{Theorem~\ref{thm:lch-lact-config-sim}} \hfill \\
  If $\jarg{A} \Gamma; \Delta \vdash \config{C}_1$, and $\config{C}_1 \ceval
  \config{C}_2$, then there exists some $\config{D}$ such that
  $\translatechconfig{\config{C}_1} \ceval^* \config{D}$ with $\config{D} \equiv
\translatechconfig{\config{C}_2}$.
\end{fake}

\begin{qedproof} \hfill \\
  By induction on the derivation of $\config{C}_1 \ceval \config{C}_2$.

  \noindent
\textbf{Case } \textsc{Give} \hfill \\

\[
\begin{array}{rl}
\text{Assumption} & E[\lchgive{W}{a}] \parallel \lchbuf{a}{\seq{V}} \\
\text{Definition of } \translatechconfig{-} & 
  (\nu b) (
    \actor{b}{\translatechterm{E}[\lactsend{(\termconst{inl} \; \translatechval{W})}{a}]}{\epsilon}) \parallel
    \actor{a}{\metadef{body} \; (\listterm{\translatechval{\seq{V}}}, \emptylist) }{\epsilon} \\
\equiv & 
  (\nu b) (
    \actor{b}{\translatechterm{E}[\lactsend{(\termconst{inl} \; \translatechval{W})}{a}]}{\epsilon} \parallel
    \actor{a}{\metadef{body} \; (\listterm{\translatechval{\seq{V}}}, \emptylist) }{\epsilon}) \\
\ceval \text{ (Send) } &
  (\nu b) (
    \actor{b}{\translatechterm{E}[\effreturn{()}]}{\epsilon} \parallel
    \actor{a}{\metadef{body} \; (\listterm{\translatechval{\seq{V}}}, \emptylist) }{\termconst{inl} \; \translatechval{W}}) \\
\\
\end{array}
\]

Now, let $G[-] = (\nu b) \actor{b}{\translatechterm{E}[\effreturn{()}]}{\epsilon} \parallel [-] )$.

We now have

\[
G[\actor{a}{\metadef{body} \; (\listterm{\translatechval{\seq{V}}}, \emptylist)}{(\termconst{inl} \; \translatechval{W})}]
\]

which we can expand to

\[
G[
\actor{a}{
  \begin{array}[t]{rl}
         & (\termconst{rec} \; g (\textit{state}) \; . \\
         & \quad \termconst{let} \; recvVal \Leftarrow \lactrecv \; \termconst{in} \\
         & \quad \termconst{let} \; (vals, readers) = state \app \termconst{in} \\
         & \qquad \caseofone{\textit{recvVal}} \\
         & \qquad \quad \inl{v} \mapsto
        {
        \begin{aligned}[t]
          & \efflettwo{\textit{newVals}}{\listappend{vals}{\listterm{v}}} \\
          & \termconst{let} \; state' \Leftarrow \metadef{drain} \app (\textit{newVals}, readers)
            \app \termconst{in} \\
          & \metadef{body} \app (state') \\
        \end{aligned}
        } \\
         & \qquad \quad \inr{\textit{pid}} \mapsto
        {
        \begin{aligned}[t]
          & \efflettwo{\textit{newReaders}}{\listappend{readers}{\listterm{pid}}} \\
          & \termconst{let} \; state' \Leftarrow \metadef{drain} \app (vals, \textit{newReaders}) \; \termconst{in} \\
          & \metadef{body} \app (\textit{state'})\}) \; (\translatechval{\seq{V}}, \emptylist) \\
        \end{aligned}
        }
\end{array}
}{\termconst{inl} \; \translatechval{W}}
]
\]

Applying the arguments to the recursive function $g$; performing the $\lactrecv$, and the $\termconst{let}$ and $\mkwd{case}$ reductions, we have:

\[
G[
\actor{a}{
        \begin{aligned}[t]
          & \efflettwo{\textit{newVals}}{\listappend{\translatechval{\seq{V}}}{\listterm{\translatechval{W}}}} \\
          & \termconst{let} \; state' \Leftarrow \metadef{drain} \app (\textit{newVals}, \emptylist{})
            \app \termconst{in} \\
          & \metadef{body} \app (state') \\
        \end{aligned}
}{\epsilon}
]
\]

Next, we reduce the append operation, and note that since we pass a state
without pending readers into $\metadef{drain}$, that the argument is
returned unchanged:

\[
G[
\actor{a}{
          \termconst{let} \; state' \Leftarrow
          \effreturn{(\listcons{\translatechval{\seq{V}}}{\translatechval{W} :: \emptylist})} \app \termconst{in} \app
          \metadef{body} \app state'
}{\epsilon}
]
\]

Next, we apply the let-reduction and expand the evaluation context:

\[
  (\nu b) (\actor{b}{\translatechconfig{E}[\effreturn{()}]}{\epsilon} \parallel \actor{a}{
            \metadef{body} \; (\listcons{\translatechval{\seq{V}}}{\translatechval{W} :: \emptylist})}{\epsilon})
\]

which is structurally congruent to

\[
  (\nu b) (\actor{b}{\translatechconfig{E}[\effreturn{()}]}{\epsilon}) \parallel \actor{a}{
            \metadef{body} \; (\listcons{\translatechval{\seq{V}}}{\translatechval{W} :: \emptylist})}{\epsilon}
\]

which is equal to

\[
\translatechconfig{E[\effreturn{()}] \parallel \lchbuf{a}{\translatechval{\seq{V} \cdot W}} }
\]

as required.

  \noindent
\textbf{Case } \textsc{Take} \hfill \\

\[
\begin{array}{rl}
\text{Assumption} & E[\lchtake{a}] \parallel \lchbuf{a}{W \cdot \seq{V}} \\
\text{Definition of } \translatechconfig{-} & (\nu b) (\actor{b}{\translatechterm{E}[
 { 
   \begin{aligned}[t]
    & \efflettwo{\textit{selfPid}}{\lactself} \\
    & \lactsend{(\termconst{inr} \; \textit{selfPid})}{a}; \\
    & \lactrecv{}]
   \end{aligned}
 }
}{\epsilon}) \parallel \actor{b}{\metadef{body} \; (
  \listcons{\translatechval{W}}{\translatechval{\seq{V}}}, \emptylist{}) }{\epsilon} \\
\equiv{} & (\nu b) (\actor{b}{\translatechterm{E}[
 { 
   \begin{aligned}[t]
    & \efflettwo{\textit{selfPid}}{\lactself} \\
    & \lactsend{(\termconst{inr} \; \textit{selfPid})}{a}; \\
    & \lactrecv{}]
   \end{aligned}
 }
}{\epsilon} \parallel \actor{b}{\metadef{body} \; (
  \listcons{\translatechval{W}}{\translatechval{\seq{V}}}, \emptylist{}) }{\epsilon}) \\
\ceval (\text{Self}) & (\nu b) (\actor{b}{\translatechterm{E}[
 { 
   \begin{aligned}[t]
    & \efflettwo{\textit{selfPid}}{\effreturn{b}} \\
    & \lactsend{(\termconst{inr} \; \textit{selfPid})}{a}; \\
    & \lactrecv{}]
   \end{aligned}
 }
}{\epsilon} \parallel \actor{b}{\metadef{body} \; (
  \listcons{\translatechval{W}}{\translatechval{\seq{V}}}, \emptylist{}) }{\epsilon}) \\
  \teval (\text{Let})& (\nu b) (\actor{b}{\translatechterm{E}[
 { 
   \begin{aligned}[t]
    & \lactsend{(\termconst{inr} \; b)}{a}; \\
    & \lactrecv{}]
   \end{aligned}
 }
}{\epsilon} \parallel \actor{b}{\metadef{body} \; (
  \listcons{\translatechval{W}}{\translatechval{\seq{V}}}, \emptylist{}) }{\epsilon}) \\
\ceval (\text{Send}) & (\nu b) (\actor{b}{\translatechterm{E}[\effreturn{()}; \lactrecv{}]}{\epsilon} \parallel 
  \actor{b}{\metadef{body} \; (
  \listcons{\translatechval{W}}{\translatechval{\seq{V}}}, \emptylist{}) }{(\termconst{inr} \; b)}) \\
\teval (\text{Let}) & (\nu b) (\actor{b}{\translatechterm{E}[\lactrecv{}]}{\epsilon} \parallel 
  \actor{b}{\metadef{body} \; (
  \listcons{\translatechval{W}}{\translatechval{\seq{V}}}, \emptylist{}) }{(\termconst{inr} \; b)}) \\
\end{array}
\]

Now, let $G[-] = (\nu b)(\actor{b}{\translatechterm{E}[\lactrecv{}]}{\epsilon} \parallel [-])$.

Expanding, we begin with:

\[
G[\actor{a}{
  {
    \begin{array}[t]{rl}
         & (\termconst{rec} \; g (\textit{state}) \; . \\
         & \quad \termconst{let} \; recvVal \Leftarrow \lactrecv \; \termconst{in} \\
         & \quad \termconst{let} \; (vals, readers) = state \app \termconst{in} \\
         & \qquad \caseofone{\textit{recvVal}} \\
         & \qquad \quad \inl{v} \mapsto 
         {
         \begin{aligned}[t]
           & \efflettwo{\textit{newVals}}{\listappend{vals}{\listterm{v}}} \\
           & \termconst{let} \; state' \Leftarrow \metadef{drain} \app (\textit{newVals}, readers)
             \app \termconst{in} \\
           & g \app (state') \\
         \end{aligned}
         } \\
         & \qquad \quad \inr{\textit{pid}} \mapsto
         {
         \begin{aligned}[t]
           & \efflettwo{\textit{newReaders}}{\listappend{readers}{\listterm{pid}}} \\
           & \termconst{let} \; state' \Leftarrow \metadef{drain} \app (vals, \textit{newReaders}) \; \termconst{in} \\
           & g \app (\textit{state'})\}) \; (\listcons{\translatechval{W}}{\translatechval{\seq{V}}}, \emptylist )\\
         \end{aligned}
         } \\ 
    \end{array}
  }
}{(\termconst{inr} \; b)}]
\]

Reducing the recursive function, receiving from the mailbox, splitting the pair, and then taking the second branch on the case statement, we have:

\[
G[\actor{a}{
  \begin{array}[t]{rl}
    & \efflettwo{\textit{newReaders}}{\listappend{\emptylist{}}{\listterm{b}}} \\
    & \efflettwo{state'}{\metadef{drain} \; (vals, \textit{newReaders})} \\
    & \metadef{body} \; state'
  \end{array}
}{\epsilon}]
\]

Reducing the list append operation, expanding $\metadef{drain}$, and re-expanding $G$, we have:

\[
(\nu b) ( \actor{b}{E[\lactrecv]}{\epsilon} \parallel 
  \actor{a}{
   { \begin{aligned}[t] 
     & \termconst{let} \; state' \Leftarrow ( \lambda x . \\
        & \quad \termconst{let} \; (vals, readers) \; = x \; \termconst{in} \\
       & \quad \termconst{case} \; vals \; \{ \\
        & \qquad \emptylist \mapsto \effreturn{(vals, readers)} \\
        & \qquad \listcons{v}{vs} \mapsto \\
       & \qquad \quad \termconst{case} \; readers \; \{ \\
        & \qquad \quad \quad \quad \emptylist{} \mapsto \effreturn{(vals, readers)}\\
        & \qquad \quad \quad \quad \listcons{pid}{pids} \mapsto 
           \begin{aligned}[t]
             & \lactsend{v}{pid} ; \\
             & \effreturn{(vs, pids)} \} \} ) \; (\listcons{\translatechval{W}}{\translatechval{\seq{V}}}, \listterm{b}) \; \termconst{in} 
           \end{aligned} \\ 
        & \metadef{body} \; state'
     \end{aligned}
   }
  }{\epsilon}
\]

Next, we reduce the function application, pair split, and the case statements:

\[
(\nu b) ( \actor{b}{E[\lactrecv]}{\epsilon} \parallel
  \actor{a}{
   { \begin{aligned}[t]
     & \termconst{let} \; state' \Leftarrow
           \begin{aligned}[t]
             & \lactsend{\translatechval{W}}{b} ; \\
             & \effreturn{(\translatechval{\seq{V}}, \emptylist{})})
           \end{aligned} \\ 
        & \metadef{body} \; state'
     \end{aligned}
   }
  }{\epsilon}
\]

We next perform the send operation, and thus we have:

\[
(\nu b) (\actor{b}{E[\lactrecv{}]}{\translatechval{W}} \parallel \actor{a}{\metadef{body} \; ({(\translatechval{\seq{V}}, \emptylist{})})}{\epsilon}
\]

Finally, we perform the $\lactrecv$ and apply a structural congruence to arrive at
\[
(\nu b) (\actor{b}{E[\translatechval{W}]}{\epsilon}) \parallel \actor{a}{\metadef{body} \; ({(\translatechval{\seq{V}}, \emptylist{})})}{\epsilon}
\]

which is equal to
\[
\translatechconfig{E[\effreturn{W}] \parallel \lchbuf{a}{\seq{V}}}
\]
as required.

  \noindent
\textbf{Case } \textsc{NewCh} \hfill \\

\[
\begin{array}{rl}
\text{Assumption} & E[\lchnewch{}] \\
\text{Definition of } \translatechconfig{-} & (\nu a) (\actor{a}{
  \translatechterm{E}[\lactwd{spawn} \; (
    {\metadef{body} \; (\emptylist{}, \emptylist{})} )]}{\epsilon}) \\
\ceval & (\nu a) (\nu b) (\actor{a}{\translatechterm{E}[\effreturn{b}]}{\epsilon} \parallel
\actor{b}{
         \metadef{body} \; (\emptylist{}, \emptylist{})
    }{\epsilon}) \\
\equiv & (\nu b) (\nu a)
    (\actor{a}{\translatechterm{E}[\effreturn{b}]}{\epsilon}) \parallel
\actor{b}{
         \metadef{body} \; (\emptylist{}, \emptylist{})
    }{\epsilon}) \\
= & \translatechconfig{(\nu b)(E[\effreturn{b}] \parallel \lchbuf{b}{\epsilon})} \\
\end{array}
\]

as required.

  \noindent
\textbf{Case } \textbf{Fork} \hfill \\

\[
\begin{array}{rl}
\text{Assumption} & E[\lchfork{M}] \\
\text{Definition of } \translatechconfig{-} & 
  (\nu a) (\actor{a}{\translatechterm{E}[\efflet{x}{\lactspawn{\translatechterm{M}}}{\effreturn{()}}]}{\epsilon}) \\
\ceval & (\nu a) (\nu b) (\actor{a}{\translatechterm{E}[\efflet{x}{\effreturn{b}}{\effreturn{()}}]}{\epsilon} 
  \parallel \actor{a}{\translatechterm{M}}{\epsilon}) \\
\teval & (\nu a) (\nu b) (\actor{a}{\translatechterm{E}[\effreturn{()}]}{\epsilon} 
  \parallel \actor{b}{\translatechterm{M}}{\epsilon}) \\
\equiv & (\nu a) (\actor{a}{\translatechterm{E}[\effreturn{()}]}{\epsilon}) \parallel 
  (\nu b) (\actor{b}{\translatechterm{M}}{\epsilon}) \\
= & \translatechconfig{E[\effreturn{()}] \parallel M}
\end{array}
\]

as required.

  \noindent
\textbf{Case } \textbf{Lift} Immediate by the inductive hypothesis.

  \noindent
\textbf{Case } \textbf{LiftV}
Immediate by Lemma~\ref{lem:lch-lact-term-sim}.

\end{qedproof}

\newpage
\subsection{Correctness of Selective Receive Translation}
\subsubsection{Translation preserves typing}

\begin{fake}{Theorem~\ref{thm:seltrans-term-typing} (Type Soundness (Translation from $\lact$ with selective receive into $\lact$))}
    If $\Gamma ; \Delta \vdash \config{C}$, then it is the case that for all $\config{D} \in \seltrans{\config{C}}$, then $\seltrans{\Gamma} ; \seltrans{\Delta} \vdash \seltrans{\config{D}}$.
\end{fake}

\begin{proof}

    By induction on the typing derivation of $\Gamma ; \Delta \vdash \config{C}$.

\noindent
    \textbf{Case} $\Gamma ; \Delta \vdash \config{C}_1 \parallel \config{C}_2$ \hfill \\
    From the assumption that
    \[
    \Gamma \vdash \Delta \vdash \config{C}_1 \parallel \config{C}_2
    \]
  we have that $\Delta$ splits as $\Delta_1, \Delta_2$ and we have that $\Gamma ; \Delta_1 \vdash \config{C}_1$ and $\Gamma ; \Delta_2 \vdash \config{C}_2$.

    By the inductive hypothesis, we have that all $\config{C}_1' \in \seltrans{\config{C}_1}$ are well-typed under $\seltrans{\Gamma} ; \seltrans{\Delta_1}$, and similarly for $\config{C'}_2$ under $\seltrans{\Gamma}$ and $\seltrans{\Delta_2}$.

    Thus we can conclude (since $\seltrans{\Delta_1}$ and $\seltrans{\Delta_2}$ remain disjoint) that
    \[
      \seltrans{\Gamma} ; \seltrans{\Delta} \vdash \config{C}_1' \parallel \config{C}_2'
    \]
    for all $\config{C}_1' \in \seltrans{\config{C}_1}$ and $\config{C}_2' \in \seltrans{\config{C}_2}$.

\noindent
    \textbf{Case} $\Gamma ; \Delta \vdash (\nu a) \config{C}$ \hfill \\
    From the assumption that $\Gamma ; \Delta \vdash (\nu a) \config{C}$, we have that
    \[
      \Gamma, a : \pid{\variant{\ell_i : A_i}_i} ; a : \variant{\ell_i : A_i}_i \vdash \config{C}
    \]
    By the inductive hypothesis, we have that for all $\config{D} \in \seltrans{\config{C}}$, it is the case that 
    \[
      \seltrans{\Gamma}, a : \pid{\variant{\ell_i : \seltrans{A_i}}_i} ; \seltrans{\Delta}, a : \variant{\ell_i : \seltrans{A_i}}_i \vdash \config{D}
    \]
    It follows by \textsc{Pid} that
    for all $\config{D} \in \seltrans{\config{C}}$, that $\seltrans{\Gamma} ; \seltrans{\Delta} \vdash (\nu a) \config{D}$ as required.

\noindent
    \textbf{Case} $\Gamma, a : \pid{\variant{\ell_i : A_i}_i} ; a :
    \variant{\ell_i : A_i}_i \vdash \actor{a}{M}{\seq{V}}$ \hfill \\

    From the
    assumption
    \[
      \Gamma, a : \pid{\variant{\ell_i : A_i}_i} ; a :
    \variant{\ell_i : A_i}_i \vdash \actor{a}{M}{\seq{V}}
    \]
    we have that
    \[
    \Gamma, a : \pid{\variant{\ell_i : A_i}_i} \mid \variant{\ell_i : A_i}_i
    \vdash M : \one
  \]
    and that
    \[
      \Gamma, a : \pid{\variant{\ell_i : A_i}_i} \vdash V_k : \variant{\ell_i : A_i}_i
    \]
    for all $V_k \in \seq{V}$.

    By Lemma~\ref{lem:seltrans-term-typing}, we have that
    \[
      \seltrans{\Gamma}, a : \pid{\variant{\ell_i : \seltrans{A_i}}_i}, \textit{mb} : \listty{\variant{\ell_i : \seltrans{A_i}_i}} \vdash \seltransterm{M}{\textit{mb}} : \one
    \]

    Again by
    Lemma~\ref{lem:seltrans-term-typing}, we have that
    \[
      \seltrans{\Gamma}, a :
    \pid{\variant{\ell_i : \seltrans{A_i}}_i} \vdash \seltrans{V_i}
    \]
    for each $V_i \in \seq{V}$.

    We need to show that regardless of the position of the partition between the save queue and mailbox,
    the resulting configuration is well typed.

    \underline{Subcase} $\config{D} = \actor{a}{\seltransterm{M}{\emptylist{}}}{\seltrans{\seq{V}}}$,

    By the substitution lemma it follows that
    \[
      \seltrans{\Gamma}, a : \pid{\variant{\ell_i : \seltrans{A_i}}_i} \mid \listty{\variant{\ell_i :
    \seltrans{A_i}_i}} \vdash \seltransterm{M}{\emptylist} : \one
    \]

    By
    \textsc{Actor}, we can show $\seltrans{\Gamma} ; a :
    \pid{\variant{\ell_i : \seltrans{A_i}}_i} ; a : \variant{\ell_i : \seltrans{A_i}}_i \vdash
    \actor{a}{\seltransterm{M}{\emptylist}}{\seq{V}}$ as required.

    For the remainder of the cases, we establish our result by induction on
    $i$.

    \textit{Base case} $i=1$:

    We can show that
    \[
      \seltrans{\Gamma}, a : \pid{\variant{\ell_i : \seltrans{A_i}}_i} \vdash \seltrans{V_1} : \variant{\ell_i : \seltrans{A_i}_i}
    \]
    and thus that
    \[
      \seltrans{\Gamma}, a : \pid{\variant{\ell_i : \seltrans{A_i}}_i} \vdash \listterm{\seltrans{V_1}}  : \listty{\variant{\ell_i : \seltrans{A_i}}_i}
    \]
    Similarly, we have that
    \[
      (\Gamma, a : \pid{\variant{\ell_i : \seltrans{A_i}}_i} \vdash \seltrans{V_j} : \variant{\ell_i : \seltrans{A_i}}_i)_{j \in 2 .. n}
    \]
    Let us refer to this sequence as $\seq{V'}$. Thus we can show:
    \[
      \seltrans{\Gamma}, a : \pid{\variant{\ell_i : \seltrans{A_i}}_i} ; a : \variant{\ell_i : \seltrans{A_i}_i} \vdash \actor{a}{\seltransterm{M}{\listterm{\seltrans{V_1}}}}{\seq{V'}}
    \]
    as required.

    \textit{Inductive case} $i = j + 1$:

    By the inductive hypothesis, we have that
    \[
      \seltrans{\Gamma}, a : \variant{\ell_i : \seltrans{A_i}}_i ; a : \variant{\ell_i : \seltrans{A_i}}_i \vdash \actor{a}{\seltransterm{M}{\seq{W_j}}}{\seq{W'_j}}
    \]
    where:
    \begin{itemize}
        \item $\seq{W_j} = \seltrans{V_1} :: \ldots :: \seltrans{V_j} :: \emptylist$
        \item $\seq{W'_j} = \seltrans{V_{j+1}} \cdot \ldots \cdot \seltrans{V_n}$
    \end{itemize}

    By \textsc{Actor}, we have that
    \[
      \seltrans{\Gamma} ; a : \variant{\ell_i : \seltrans{A_i}}_i \mid \variant{\ell_i : \seltrans{A_i}}_i \vdash \seltransterm{M}{\seq{W_j}} : \one
    \]
    and thus that
    \[
      \seltrans{\Gamma}, a : \pid{\variant{\ell_i : \seltrans{A_i}}_i} \vdash \seq{W_j} : \listty{\variant{\ell_i : \seltrans{A_i}}_i}
    \]
    We also have that
    \[
      (\seltrans{\Gamma}, a : \pid{\variant{\ell_i : \seltrans{A_i}}_i} \vdash \seltrans{V_k} : \variant{\ell_i : \seltrans{A_i}_i})_{k \in j + 1 .. n}
    \]
    Thus we can show that
    \[
      \seltrans{\Gamma}, a : \pid{\variant{\ell_i : \seltrans{A_i}}_i} \vdash \seltrans{V_1} :: \ldots :: \seltrans{V_j} :: \seltrans{V_{j + 1}} ::
    \emptylist : \listty{\variant{\ell_i : \seltrans{A_i}}_i}
    \]
    It also remains the case that
    \[
      (\seltrans{\Gamma}, a : \pid{\variant{\ell_i : \seltrans{A_i}_i}} \vdash \seltrans{V_k})_{k \in j + 2 .. n}
    \]

    Therefore we can show that
    \[
      \seltrans{\Gamma}, a : \pid{\variant{\ell_i :
    \seltrans{A_i}}_i} ; a : \variant{\ell_i : \seltrans{A_i}}_i \vdash
    \actor{a}{\seltransterm{M}{\seltrans{V_1} :: \ldots :: \seltrans{V_j} ::
    \seltrans{V_{j + 1}} :: \emptylist}}{\seltrans{V_{j + 2}} \cdot \ldots \cdot
    \seltrans{V_n}}
    \]
    completing the induction.
    \end{proof}

    \newpage

    \subsubsection{Simulation of selective receive}

    \begin{lemma}{Simulation ($\lact$ with selective receive in $\lact$---terms)}\label{lem:selrecv-term-sim}
      If $\Gamma \vdash M : A$ and $M \teval M'$, then given some $\alpha$, it is the case that $\seltrans{M}{\alpha} \teval^+ \seltrans{M'}{\alpha}$.
    \end{lemma}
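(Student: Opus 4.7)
The plan is to proceed by induction on the derivation of $M \teval M'$. Before the main argument, I would prove two supporting lemmas. The first is a standard substitution lemma: $\seltransterm{M\{V/x\}}{\textit{mb}} = \seltransterm{M}{\textit{mb}}\{\seltrans{V}/x\}$ (and analogously on values), by mutual structural induction; this is routine since the save-queue parameter is fresh with respect to source variables. The second is that evaluation contexts translate to evaluation contexts: for every $E$ in $\lact$ there is an $E'$ such that $\seltransterm{E[M]}{\alpha} = E'[\seltransterm{M}{\alpha}]$. The only nontrivial case is $E = \efflet{x}{[-]}{N}$, which translates to $\efflet{\textit{resPair}}{[-]}{\letin{(x, \textit{mb}')}{\textit{resPair}}{\seltransterm{N}{\textit{mb}'}}}$, itself an evaluation context since nested effect-lets are in evaluation position.

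For the base cases, each term axiom is matched by a fixed sequence of administrative reductions in the target. For let-reduction $\efflet{x}{\effreturn{V}}{N} \teval N\{V/x\}$, the translation is $\efflet{\textit{resPair}}{\effreturn{(\seltrans{V}, \alpha)}}{\letin{(x, \textit{mb}')}{\textit{resPair}}{\seltransterm{N}{\textit{mb}'}}}$; one let-step and one pair-destructure step yield $\seltransterm{N}{\alpha}\{\seltrans{V}/x\}$, which by the substitution lemma equals $\seltransterm{N\{V/x\}}{\alpha}$. For $\beta$-reduction $(\lambda x . M) \app V \teval M\{V/x\}$, the translation $\efflet{f}{((\lambda x . \lambda \textit{mb} . \seltransterm{M}{\textit{mb}}) \app \seltrans{V})}{f \app \alpha}$ reduces through the outer $\beta$, an administrative let-step, and a further $\beta$ to $\seltransterm{M}{\alpha}\{\seltrans{V}/x\}$. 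The rec-application, pair destructure, case, and unroll cases follow the same pattern, each producing at least one reduction in the target. For the inductive context case $E[M_1] \teval E[M_2]$, the context translation lemma rewrites the translated term as $E'[\seltransterm{M_1}{\alpha}]$; the inductive hypothesis gives $\seltransterm{M_1}{\alpha} \teval^+ \seltransterm{M_2}{\alpha}$, and lifting through $E'$ by the target's context rule completes the step.

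The main obstacle I anticipate is the careful bookkeeping of the save queue as it is threaded through the translated let-patterns. For the context case, I must verify that the same parameter $\alpha$ appears on both sides of the IH: this holds because the translated context $E'$ does not modify the queue before control reaches the hole, so the innermost translation is applied with $\alpha$ unchanged. For the base cases, I must check that each administrative reduction is in fact enabled and that the final equalities between pre- and post-substitution forms of the translation line up---the substitution lemma is the key ingredient that guarantees this. Since each axiom produces at least one administrative step in the target, and the IH in the context case already yields $\teval^+$, the strictly positive reduction count demanded by the conclusion is achieved uniformly.
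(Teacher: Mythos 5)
Your proposal is correct and matches the paper's proof, which is simply stated as ``a straightforward induction on $M \teval M'$'': the substitution lemma, the observation that translated evaluation contexts are again evaluation contexts with the queue parameter $\alpha$ unchanged at the hole, and the counting of administrative steps to secure $\teval^+$ are exactly the details that ``straightforward'' elides. No gaps.
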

    \begin{proof}
      A straightforward induction on $M \teval M'$.
    \end{proof}

    \begin{lemma}\label{lem:seltrans-pure-reduction}
      If $\Gamma \vdash_\textsf{P} M : A$ and $M \teval^* \effreturn{V}$,
      then $\seltransterm{M}{\textit{mb}} \teval^* \effreturn{(\seltrans{V}, \textit{mb})}$ for any \textit{mb}.
    \end{lemma}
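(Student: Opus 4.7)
The plan is to reduce this to a straightforward application of the single-step simulation result, Lemma~\ref{lem:selrecv-term-sim}, followed by a look-up of the translation's action on $\effreturn{V}$. Since reduction on $\lact$ terms is deterministic and $M \teval^* \effreturn{V}$ gives us a finite reduction sequence, the whole proof reduces to lifting Lemma~\ref{lem:selrecv-term-sim} from a single step to multiple steps, after which the translation of $\effreturn{V}$ falls out by definition.

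First I would establish a multi-step version of the simulation: if $\Gamma \vdash M \of A$ and $M \teval^* M'$, then for any $\textit{mb}$ we have $\seltransterm{M}{\textit{mb}} \teval^* \seltransterm{M'}{\textit{mb}}$. This is a routine induction on the length $n$ of the reduction sequence. The case $n = 0$ is immediate; for the inductive case, write $M \teval M'' \teval^* M'$, apply Lemma~\ref{lem:selrecv-term-sim} to obtain $\seltransterm{M}{\textit{mb}} \teval^+ \seltransterm{M''}{\textit{mb}}$, then use preservation of term typing (Lemma~\ref{lem:lact-term-pres}) together with the inductive hypothesis to concatenate. A minor point is that Lemma~\ref{lem:selrecv-term-sim} is phrased with a parameter $\alpha$ that looks syntactically like a variable; since the translation places $\textit{mb}$ only in return positions and threads it through the standard let-bindings, a fresh-variable substitution recovers the version for an arbitrary term $\textit{mb}$.

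Specialising the multi-step simulation to the hypothesis $M \teval^* \effreturn{V}$ yields $\seltransterm{M}{\textit{mb}} \teval^* \seltransterm{\effreturn{V}}{\textit{mb}}$. Reading off the translation clause for $\effreturn{-}$ from Figure~\ref{fig:lact-sel-recv-trans}, we have the equality $\seltransterm{\effreturn{V}}{\textit{mb}} = \effreturn{(\seltrans{V}, \textit{mb})}$, which is exactly the goal.

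The purity assumption $\Gamma \vdash_\textsf{P} M \of A$ does comparatively little heavy lifting here: we are already told that $M$ reduces (at the term level) all the way to $\effreturn{V}$, so no communication or concurrency primitive can get in the way of the reduction. Its role is conceptual, guaranteeing that no subterm of $M$ is a construct whose translation genuinely consumes or updates the save queue $\textit{mb}$, so that $\textit{mb}$ is threaded through the translation unchanged and emerges in the final pair. The only technically nontrivial step, then, is the lifting from single-step to multi-step simulation; I expect no obstacle beyond the standard bookkeeping for the $\textit{mb}$ parameter.
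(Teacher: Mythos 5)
Your proposal is correct and matches the paper's own argument, which likewise obtains the result by repeated application of Lemma~\ref{lem:selrecv-term-sim} along the reduction sequence $M \teval M_1 \teval \cdots \teval \effreturn{V}$ and then reads off $\seltransterm{\effreturn{V}}{\textit{mb}} = \effreturn{(\seltrans{V}, \textit{mb})}$ from the translation. Your version merely spells out the induction on the length of the sequence and the role of preservation, which the paper leaves implicit.
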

    \begin{proof}
      By the definition of $\teval^{*}$, we have that there exists some reduction sequence
      $M \teval M_1 \teval \cdots \teval M_n \teval \effreturn{V}$.

      The result follows by repeated application of Lemma~\ref{lem:selrecv-term-sim}.
    \end{proof}

    %HOLY HANDWAVING, BATMAN!!
    \begin{lemma}\label{lem:branches-matchesany}
      Suppose $V = \variant{\ell = V'}$.

      If $\Gamma \vdash V : A$ and $\neg (\matchesany{(\seq{c}, V)})$, then
      \[
        \caseof{\variant{\ell = \seltrans{V'}}}{\metadef{branches}(\seq{c}, \textit{mb}, \textit{default})}
        \teval^+ \textit{default} \app \variant{\ell = \seltrans{V'}}.
      \]
    \end{lemma}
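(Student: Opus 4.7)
The plan is to reduce the outer $\termconst{case}$ first, then perform case analysis on whether $\ell \in \metadef{labels}(\seq{c})$, dispatching each side to the appropriate component of $\metadef{branches}$.

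First, by the definition of $\metadef{branches}$ and the observation that every label present in the mailbox variant type appears in exactly one of $\metadef{patBranches}(\seq{c}, \textit{mb}, \textit{default})$ or $\metadef{defaultBranches}(\seq{c}, \textit{mb}, \textit{default})$, the outer $\termconst{case}$ reduces in one step to the unique branch indexed by $\ell$, with the pattern variable bound to $\seltrans{V'}$.

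If $\ell \notin \metadef{labels}(\seq{c})$, then this branch is $\textit{default} \app \variant{\ell = \seltrans{V'}}$ (up to the fresh pattern variable), and we are done in a single step. Otherwise, $\ell \in \metadef{labels}(\seq{c})$, and the branch is $\metadef{ifPats}(\textit{mb}, \ell, \seltrans{V'}, \seq{c_\ell}, \textit{default})$ where $\seq{c_\ell} = \metadef{matching}(\ell, \seq{c})$. Here I would induct on the length of $\seq{c_\ell}$. The base case $\seq{c_\ell} = \epsilon$ is immediate from the defining equation $\metadef{ifPats}(\textit{mb}, \ell, \seltrans{V'}, \epsilon, \textit{default}) = \textit{default} \app \variant{\ell = \seltrans{V'}}$. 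For the inductive step $\seq{c_\ell} = (\variant{\ell = x} \when M \mapsto N) \cdot \textit{pats}$, the hypothesis $\neg\matchesany(\seq{c}, V)$ together with $\matches((\variant{\ell = x} \when M \mapsto N), \variant{\ell = V'})$ requiring $M\{V'/x\} \teval^* \effreturn{\ttrue}$ forces (since $M$ is pure, hence strongly normalising and confluent) $M\{V'/x\} \teval^* \effreturn{\ffalse}$. By Lemma~\ref{lem:seltrans-pure-reduction} (the correctness of the translation on pure terms), $\seltransterm{M}{\textit{mb}}\{\seltrans{V'}/x\} \teval^* \effreturn{(\ffalse, \textit{mb})}$. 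The $\termconst{let}$, pair-destructuring, and $\termconst{if}$ reductions then yield $\metadef{ifPats}(\textit{mb}, \ell, \seltrans{V'}, \textit{pats}, \textit{default})$, and the induction hypothesis applied to $\textit{pats}$ (still consistent with $\neg\matchesany$, since $\textit{pats} \subseteq \seq{c}$) concludes.

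The main obstacle is bridging the hypothesis $M\{V'/x\} \not\teval^* \effreturn{\ttrue}$ (given by $\neg\matches$) with a \emph{positive} reduction $\seltransterm{M}{\textit{mb}}\{\seltrans{V'}/x\} \teval^* \effreturn{(\ffalse, \textit{mb})}$; this relies crucially on $M$ being pure (hence $\Gamma \vdash_{\textsf{P}} M : \boolty$), so that the only alternative to reducing to $\effreturn{\ttrue}$ is reducing to $\effreturn{\ffalse}$, and on the fact that the $\seltrans{-}$-translation preserves this reduction faithfully on pure subterms. Modulo that lemma, the remainder is a routine structural induction on $\seq{c_\ell}$ combined with a one-step case reduction, so the proof is short but depends on two earlier auxiliary facts (Lemma~\ref{lem:seltrans-pure-reduction} and the shape of $\metadef{branches}$).
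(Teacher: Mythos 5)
Your proposal matches the paper's own proof essentially step for step: both split on whether $\ell$ lands in $\metadef{patBranches}$ or $\metadef{defaultBranches}$, then induct on the length of the matching clause list $\seq{c_\ell}$ through the definition of $\metadef{ifPats}$, using Lemma~\ref{lem:seltrans-pure-reduction} to push each failing guard to $\effreturn{(\ffalse,\textit{mb})}$. If anything you are slightly more careful than the paper, which simply asserts that $\neg\matches$ yields $M\{V'/x\}\teval^{+}\ffalse$ where you correctly observe that this step needs purity (normalisation plus the dichotomy of boolean values) to convert ``does not reduce to $\ttrue$'' into ``reduces to $\ffalse$''.
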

    \begin{proof} (Sketch.)
      For it to be the case that $\neg (\matchesany{(\seq{c}, V)})$, we must have that for all $c_i \in \seq{c}$ where
      $c_i = \variant{\ell_i = x_i} \when M_i \mapsto N_i$:

      \begin{itemize}
        \item $\ell_i \ne \ell$, or
        \item $\ell_i = \ell$, but $M_i \{ V' / x_i \} \teval^{+} \ffalse$.
      \end{itemize}

      Recall that
      \[
        \metadef{branches}(\seq{c}, \textit{mb}, \textit{default}) \defeq
        \metadef{patBranches}(\seq{c}, \textit{mb}, \textit{default})\cdot
        \metadef{defaultBranches}(\seq{c}, \textit{mb}, \textit{default})
      \]

      If $\ell \ne \ell_i$ for any pattern in $\seq{c}$, then there will be no corresponding branch in \metadef{patBranches},
      but there will be a branch $\variant{\ell = x} \mapsto \textit{default} \app (\ell = x)$ in \metadef{defaultBranches},
      reducing to the required result $\textit{default} \app \variant{\ell = \seltrans{V}}$.

      On the other hand, suppose we have a set of clauses $\seq{c'} \subseteq \seq{c}$ such that for each
      $(\variant{\ell' = x_j} \when M_j \mapsto N_j)_j \in \seq{c'}$,
      it is the case that $\ell' = \ell$. Then, it must be the case that for all $M_j$, that $M_j \{ V' / x_j \} \teval^+ \ffalse$.

      The result can now be established by induction on the length of $\seq{c'}$, by the definition of \metadef{ifPats}. For the
      base case, we have $\textit{default} \app \variant{\ell = V'}$ immediately, as required. For the inductive cases, we note
      that by Lemma~\ref{lem:seltrans-pure-reduction}, each guard $\seltransterm{M_j}{\textit{mb}} \teval^* \effreturn{(\ffalse, \textit{mb})}$,
      and thus that each \metadef{ifPats} clause will proceed inductively (taking the ``else'' branch) until the base case is reached.
    \end{proof}

    \begin{lemma}\label{lem:patbranches-match}
      Suppose:
      \begin{itemize}
        \item We have a consistent set of patterns $\seq{c} = \{ (\variant{\ell_a = x_a \when M_a}) \mapsto N_a \}_a$
        \item $\exists k, l. \forall i. i < k . \neg(\matchesany(\seq{c}, V_i)) \wedge \matches(c_l, V_k) \wedge \forall j . j < l \Rightarrow \neg(\matches(c_j, V_k))$.
        \item $\Gamma \vdash V_k : A$
        \item $V_k = \variant{\ell = V'_k}$
      \end{itemize}

      Then:
      \[
        \caseof{\variant{\ell_k = V'_k}}{ \metadef{branches}(\seq{c}, \textit{mb}, \textit{default}}) \teval^{+} (\seltransterm{N_l}{mb}) \{ \seltrans{ V'_k} / x_l \}
      \]

    \end{lemma}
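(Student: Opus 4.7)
The plan is to trace the evaluation of the \textsf{case} construct produced by \metadef{branches} applied to $\variant{\ell_k = V'_k}$. First I would unfold the definition: \metadef{branches}$(\seq{c}, \textit{mb}, \textit{default}) = \metadef{patBranches}(\seq{c}, \textit{mb}, \textit{default}) \cdot \metadef{defaultBranches}(\seq{c}, \textit{mb}, \textit{default})$, and observe that because $c_l$ matches $V_k$, the definition of \matches forces $\ell_l = \ell$ (correcting the apparent typo $\ell_k$ in the conclusion). Hence $\ell \in \metadef{labels}(\seq{c})$, so the relevant case arm comes from \metadef{patBranches} and has the form $\variant{\ell = x} \mapsto \metadef{ifPats}(\textit{mb}, \ell, x, \seq{c_\ell}, \textit{default})$, where $\seq{c_\ell} = \metadef{matching}(\ell, \seq{c})$ is the ordered subsequence of clauses in $\seq{c}$ whose label is $\ell$. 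The case reduction then binds $x \mapsto \seltrans{V'_k}$ and produces $\metadef{ifPats}(\textit{mb}, \ell, \seltrans{V'_k}, \seq{c_\ell}, \textit{default})$.

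Next I would prove by induction on the position of $c_l$ within $\seq{c_\ell}$ that \metadef{ifPats} eventually selects the correct branch. For each clause $c_j = (\variant{\ell_j = x_j} \when M_j \mapsto N_j)$ appearing in $\seq{c_\ell}$ strictly before $c_l$, we have $\ell_j = \ell$ (since we filtered by label) and $j < l$, so the hypothesis $\neg\matches(c_j, V_k)$ must be witnessed by the guard failing: $M_j \{ V'_k / x_j \} \teval^{*} \effreturn{\ffalse}$. By Lemma~\ref{lem:seltrans-pure-reduction} the translated guard satisfies $(\seltransterm{M_j}{\textit{mb}}) \{ \seltrans{V'_k} / x_j \} \teval^{*} \effreturn{(\ffalse, \textit{mb})}$, so the \textsf{let}-splitting and \textsf{if}-reduction in \metadef{ifPats} take the else-branch, exposing the inductive subcall. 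When we reach $c_l$ itself, $\matches(c_l, V_k)$ gives $M_l \{ V'_k / x_l \} \teval^{*} \effreturn{\ttrue}$, and again by Lemma~\ref{lem:seltrans-pure-reduction} the translated guard reduces to $\effreturn{(\ttrue, \textit{mb})}$, so the \textsf{if} takes the then-branch, yielding $(\seltransterm{N_l}{\textit{mb}}) \{ \seltrans{V'_k} / x_l \}$ as required. Because at least one reduction step is taken (the outer \textsf{case} on the variant), the result holds under $\teval^{+}$.

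The main obstacle is aligning the two different indexings used by the hypothesis and the algorithm. The hypothesis $\forall j . j < l \Rightarrow \neg\matches(c_j, V_k)$ quantifies over indices in the full sequence $\seq{c}$, whereas \metadef{ifPats} iterates through the filtered sequence $\seq{c_\ell}$. One must verify that every clause appearing in $\seq{c_\ell}$ before $c_l$ corresponds to some $c_j \in \seq{c}$ with $j < l$ and $\ell_j = \ell$, which is immediate because \metadef{matching} preserves the relative order and, for such clauses, $\neg\matches(c_j, V_k)$ can only fail via the guard (since the labels agree by construction). A secondary subtlety is that \metadef{ifPats} threads the save-queue parameter $\textit{mb}$ through each guard evaluation; the use of $\vdash_{\textsf{P}}$ for guards in the typing rule \textsc{Sel-Recv} ensures guards are pure, so Lemma~\ref{lem:seltrans-pure-reduction} applies and the $\textit{mb}'$ returned by each guard is always definitionally $\textit{mb}$, preserving the invariant needed for the subsequent substitution into $\seltransterm{N_l}{\textit{mb}}$.
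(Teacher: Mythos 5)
Your proposal is correct and follows essentially the same route as the paper's proof: unfold \metadef{branches}, identify the \metadef{patBranches} arm for the label $\ell$, and induct over the filtered clause list $\seq{c_\ell}$, using Lemma~\ref{lem:seltrans-pure-reduction} to drive each preceding guard in \metadef{ifPats} to \ffalse{} until the guard of $c_l$ yields \ttrue. Your extra care in reconciling the indexing of $\seq{c}$ with that of $\seq{c_\ell}$ (and in flagging the $\ell_k$ versus $\ell$ typo) only makes explicit what the paper's sketch leaves implicit.
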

    \begin{proof}(Sketch.)
      Recall that
      \[
        \metadef{branches}(\seq{c}, \textit{mb}, \textit{default}) \defeq
        \metadef{patBranches}(\seq{c}, \textit{mb}, \textit{default})\cdot
        \metadef{defaultBranches}(\seq{c}, \textit{mb}, \textit{default})
      \]

      If $\matches(c_l, V_k)$, then it must be the case that $c_l$ is of the form $\{ \variant{\ell_k =x_l} \when M_l \mapsto N_l\}$,
      with $M_l \{ V'_k / x_l \} \teval^{*} \ttrue$.

      By the definition of \metadef{patBranches}, we know that there exists an ordered list $\seq{c_\ell}$ where $\seq{c_\ell} \subseteq \seq{c}$,
      and $\metadef{label}(c) = \ell_k$; therefore we have that $c_l \in \seq{c_\ell}$.

      Again, we proceed by induction on the length of $\seq{c_\ell}$.
      We know that $\forall j . j < l \Rightarrow \neg(\matches(c_j, V_k))$, and therefore that $M_j \teval^{*} \effreturn{\ffalse}$.
      Since $\seq{c_\ell}$ is ordered, by Lemma~\ref{lem:seltrans-pure-reduction}, we have that $\seltransterm{M}{mb} \{ \seltrans{V_k} / x_j \} \teval^{*} \effreturn{(\ffalse, \textit{mb})}$, and
      proceed inductively.
      For $\metadef{ifPats}(mb, \ell, y, c_l \cdot \textit{pats}, \textit{default})$, it will be the case (again by Lemma~\ref{lem:seltrans-pure-reduction}) that
      $\seltrans{M_l}{\textit{mb}} \teval \effreturn{(\ttrue, \textit{mb})}$.
      Reducing, we take the $\ttrue$ branch of the if-statement, and arrive at $(\seltrans{N_l}{\textit{mb}})\{\seltrans{V_k} / x_l\}$ as required.

    \end{proof}

    \begin{fake}{Theorem~\ref{thm:seltrans-simulation} (Simulation ($\lact$ with selective receive in $\lact$))}
      If $\Gamma ; \Delta \vdash \config{C}$ and $\config{C} \ceval \config{C'}$,
      then $\forall \config{D} \in \seltrans{\config{C}}$, there exists a
      $\config{D'}$ such that $\config{D} \ceval^+ \config{D'}$.
    \end{fake}

    \begin{proof}
      By induction on the derivation of $\config{C} \ceval \config{C'}$.

      The configuration of the save queue and mailbox are irrelevant to all
      reductions except \textbf{Receive}, so we will show the case where the
      save queue is empty; the other cases follow exactly the same pattern.

      \noindent
      \textbf{Case Spawn} \hfill \\
      We assume that $\Gamma ; \Delta \vdash \actor{a}{E[\lactspawn{M}]}{\seq{V}}$.

      \[
        \begin{array}{lll}
          \text{Assumption} & \actor{a}{E[\lactspawn{M}]}{\seq{V}} & (1) \\
          \text{Def.\ } \seltrans{-} &
            \actor{a}{\seltrans{E}[
               \efflet{\textit{spawnRes}}
                      {\lactspawn{(\seltransterm{M}{\emptylist})}}
                      {\effreturn{(\textit{spawnRes}, \emptylist)}}]}{\seq{V}} & (2) \\
          \ceval &
            (\nu b)
               \actor{a}{\seltrans{E}[
               \efflet{\textit{spawnRes}}
                      {\effreturn{b}}
                      {\effreturn{(\textit{spawnRes}, \emptylist)}}]}{\seq{V}} \\
               & \qquad \parallel \actor{b}{\seltrans{M}{\emptylist}}{\epsilon}  & (3) \\
          \teval &
            (\nu b)
               \actor{a}{\seltrans{E}[\effreturn{(\textit{b}, \emptylist)]}}{\seq{V}}
                 \parallel \actor{b}{\seltrans{M}{\emptylist}}{\epsilon} & (4) \\
               = &
            (\nu b)
               \actor{a}{E[\effreturn{b}]}{\seq{V}} \parallel \actor{b}{M}{\epsilon} & (5) \\
        \end{array}
      \]

      \noindent
      \textbf{Case Send} \hfill \\
      We assume that $\Gamma ; \Delta \vdash \actor{a}{E[\lactsend{V'}{b}]}{\seq{V}} \parallel \actor{b}{M}{\seq{W}}$.

      \[
        \begin{array}{lll}
          \text{Assumption} & \actor{a}{E[\lactsend{V'}{b}]}{\seq{V}} \parallel \actor{b}{M}{\seq{W}} & (1) \\
          \text{Def.\ } \seltrans{-} & \actor{a}{\seltrans{E}[\efflet{x}{\lactsend{\seltrans{V'}}{b}}{(x, \emptylist)}]}{\seltrans{\seq{V}}} \parallel
            \actor{b}{\seltransterm{M}{\emptylist}}{\seltrans{\seq{W}}} & (2) \\
          \ceval & \actor{a}{\seltrans{E}[\efflet{x}{\effreturn{()}}{(x, \emptylist)}]}
          {\seltrans{\seq{V}}} \parallel \actor{b}{\seltransterm{M}{\emptylist}}{\seltrans{\seq{W}} \cdot \seltrans{V'}} & (3) \\
          \teval & \actor{a}{\seltrans{E}[\effreturn{((), \emptylist)}]}
          {\seltrans{\seq{V}}} \parallel \actor{b}{\seltransterm{M}{\emptylist}}{\seltrans{\seq{W}} \cdot \seltrans{V'}} & (4) \\
          = & \actor{a}{E[\effreturn{()}]}{\seq{V}} \parallel \actor{b}{M}{\seq{W} \cdot V'}

        \end{array}
      \]

      \noindent
      \textbf{Case SendSelf} \hfill \\
      We assume that $\Gamma ; \Delta \vdash \actor{a}{E[\lactsend{V}{a}]}{\seq{V}}$.

      \[
        \begin{array}{lll}
          \text{Assumption} & \actor{a}{E[\lactsend{W}{a}]}{\seq{V}} & (1) \\
          \text{Def.\ } \seltrans{-} & \actor{a}{\seltrans{E}[\efflet{x}{\lactsend{\seltrans{W}}{a}}{\effreturn{(x, \emptylist)}}}{\seltrans{\seq{V}}} & (2) \\
          \ceval & \actor{a}{\seltrans{E}[\efflet{x}{\effreturn{()}}{\effreturn(x, \emptylist)}]}{\seltrans{\seq{V}} \cdot \seltrans{W}} & (3) \\
          \teval & \actor{a}{\seltrans{E}[\effreturn{((), \emptylist)}]}{\seltrans{\seq{V}} \cdot \seltrans{W}} & (4) \\
          = & \actor{a}{E[\effreturn{()}]}{\seq{V} \cdot W} & (5) \\
        \end{array}
      \]

      \noindent
      \textbf{Case Self} \hfill \\
      We assume that $\Gamma ; \Delta \vdash \actor{a}{E[\lactself]}{\seq{V}}$.

      \[
        \begin{array}{lll}
          \text{Assumption} & \actor{a}{E[\lactself]}{\seq{V}} & (1) \\
          \text{Def.\ } \seltrans{-} & \actor{a}{(\seltrans{E}[\efflet{\textit{selfPid}}{\lactself}{\effreturn{(\textit{selfPid}, \emptylist)}}]) }{\seltrans{\seq{V}}} & (2) \\
          \ceval & \actor{a}{(\seltrans{E}[\efflet{\textit{selfPid}}{\effreturn{a}}{\effreturn{(\textit{selfPid}, \emptylist)}}]) }{\seltrans{\seq{V}}} & (3) \\
          \teval & \actor{a}{(\seltrans{E}[\effreturn{(a, \emptylist)}]) }{\seltrans{\seq{V}}} & (4) \\
          = & \actor{a}{E[\effreturn{a}]}{\seq{V}} & (5) \\
        \end{array}
      \]

      \noindent
      \textbf{Case Lift} \hfill \\
      Immediate by the inductive hypothesis.

      \noindent
      \textbf{Case LiftM} \hfill \\
      Immediate by Lemma~\ref{lem:selrecv-term-sim}.

      \noindent
      \textbf{Case Receive} \hfill \\
      The interesting case is the translation of selective receive. We sketch the proof as follows.

      We assume that $\Gamma ; \Delta \vdash \actor{a}{E[\selrecv{\seq{c}}]}{\seq{V}}$, where:
      \begin{itemize}
        \item $\seq{V} = V_1 \cdot \ldots \cdot V_n$
        \item $\seq{c} = \{ \variant{\ell_i = x_i} \when M_i \teval^{*} N_i \}_i$
      \end{itemize}

      By the assumptions of the reduction rule, we also have that there exists
      some $V_k$ and $c_l$ such that:
      \begin{itemize}
        \item For all $i < k$, $V_i$ does not match any pattern in $\seq{c}$.
        \item $V_k$ matches $c_l$.
        \item For all $j < l$, it is the case that $V_k$ does not match $c_k$.
      \end{itemize}

      Thus intuitively, $V_k$ is the first value in the mailbox to match any of the patterns.

      By the definition of $\seltrans{-}$, we have several possible
      configurations of the save queue and the mailbox to consider. More
      specifically:

      \begin{itemize}
        \item $\actor{a}{\metadef{find}(\seq{c}, \emptylist)}{\seltrans{\seq{V}}}$
        \item $\{ \actor{a}{\metadef{find}(\seq{c}, \seq{W_i^1})}{\seq{W_i^2}} \mid 1..n \}$,
          where $\seq{W_i^1} = \seltrans{V_1} :: \ldots :: \seltrans{V_i} :: \emptylist$, and
          $\seq{W_i^2} = \seltrans{V_{i + 1}} \cdot \ldots \cdot \seltrans{V_n}$,
      \end{itemize}

      \underline{Subcase 1}
      We have $\actor{a}{\metadef{find}(\seq{c}, \emptylist)}{\seltrans{\seq{V_1}} \cdot \ldots \cdot V_k \cdot \ldots \cdot \seq{V_2}}$.
      Expanding \metadef{find}, we have:

      \[
        \actor{a}{E[
          \begin{aligned}[t]
          & (\rectwo{\textit{findLoop}}{\textit{ms}} \\
          & \quad \letintwo{(mb_1, mb_2)}{ms} \\
          & \quad \caseofone{mb_2} \\
          & \qquad \emptylist{} \mapsto \transloop{\seq{c}}{\textit{mb}_1} \\
          & \qquad \listcons{x}{mb_2'} \mapsto \\
          & \qquad \quad \efflettwo{mb'}{mb_1 \doubleplus mb_2'} \\
          & \qquad \quad \caseofone{x} \metadef{branches}(\seq{c}, \textit{mb}', \\
          & \qquad \qquad
          \begin{aligned}[t]
            & \lambda y . (\efflettwo{\textit{mb}_1'}{\textit{mb}_1 \doubleplus \listterm{y}} \\
            & \qquad \textit{findLoop} \app (\textit{mb}_1', \textit{mb}_2')) ) \})
            \app (\emptylist{}, \emptylist{})
          \end{aligned}
        \end{aligned}
      ]}{\seltrans{\seq{V_1}} \cdot \ldots \cdot \seltrans{V_k} \cdot \ldots \cdot \seltrans{\seq{V_2}}}
      \]

      After reducing the application of the recursive function, the pair deconstruction, and the case split, we have:

      \[
        \actor{a}{E[\transloop{\seq{c}}{\emptylist{}}]}
          {\seltrans{\seq{V_1}} \cdot \ldots \cdot \seltrans{V_k} \cdot \ldots \cdot \seltrans{\seq{V_2}}}
      \]

      Expanding \metadef{loop} and \metadef{branches}, we have:

      \[
        \actor{a}{E[
          \begin{aligned}[t]
            & \quad (\rectwo{\textit{recvLoop}}{\textit{mb}} \\
            & \qquad \efflettwo{x}{\lactrecv} \\
            & \qquad \caseofone{x} \\
            & \begin{aligned}[t]
              & \qquad \quad \metadef{patBranches}(\seq{c}, \textit{mb}, \textit{default}) \: \cdot \\
              & \qquad \quad \metadef{defaultBranches}(\seq{c}, \textit{mb}, \textit{default}) \\
        & \qquad \}) \app \emptylist{} \\
            \end{aligned}
          \end{aligned}
        ]}{\seltrans{\seq{V_1}} \cdot \ldots \cdot \seltrans{V_k} \cdot \seltrans{\seq{V_2}}}
      \]

      where $\textit{default} = \lambda y . \efflet{\textit{mb'}}{\textit{mb} \doubleplus \listterm{y}}{\textit{recvLoop} \app \textit{mb}'}$.

      Recall that $\seltrans{\seq{V_1}} = \seltrans{V_1} \cdot \ldots \cdot \seltrans{V_{k - 1}}$,
      and we have that $\neg(\matchesany{(\seq{c}, V_i)}$ for all $i < k$. By Lemma~\ref{lem:branches-matchesany}, we have
      that
      \[
        \caseof{V_i}{\metadef{branches}(\seq{c}, \textit{mb})} \teval^* \textit{default} \app V_i
      \]
      In this case, \textit{default} appends the current value onto the end of the save queue, and calls \textit{recvLoop}
      recursively. Thus, proceeding inductively, we can show that we arrive at
      \[
        \actor{a}{E[\metadef{loop}(\seq{c}, \seltrans{V_1}:: \ldots :: \seltrans{V_{k-1}} :: \emptylist)]}
          {\seltrans{V_k} \cdot \seltrans{\seq{V_2}}}
      \]

      Expanding, we see

      \[
        \actor{a}{E[
          \begin{aligned}[t]
            & \quad (\rectwo{\textit{recvLoop}}{\textit{mb}} \\
            & \qquad \efflettwo{x}{\lactrecv} \\
            & \qquad \caseofone{x} \\
            & \begin{aligned}[t]
              & \qquad \quad \metadef{patBranches}(\seq{c}, \textit{mb}, \textit{default}) \: \cdot \\
              & \qquad \quad \metadef{defaultBranches}(\seq{c}, \textit{mb}, \textit{default}) \\
        & \qquad \}) \app (\seltrans{V_1}:: \ldots :: \seltrans{V_{k-1}} :: \emptylist) \\
            \end{aligned}
          \end{aligned}
        ]}{\seltrans{V_k} \cdot \seltrans{\seq{V_2}}}
      \]

      Reducing the function application and the receive from the mailbox:

      \[
        \actor{a}{E[\caseof{\seltrans{V_k}}{\metadef{branches}(\seq{c},(\seltrans{V_1}:: \ldots :: \seltrans{V_{k-1}} :: \emptylist), \textit{default})}]}{\seltrans{\seq{V_2}}}
      \]

      Now, by Lemma~\ref{lem:patbranches-match}, we have that
      \[
        \actor{a}{E[(\seltransterm{N_l}{\seltrans{V_1}:: \ldots :: \seltrans{V_{k-1}}}) \{ \seltrans{V'_k} / x_l \}]}{\seltrans{\seq{V_2}}}
      \]

      where $\seltrans{V_k} = \variant{\ell_k = \seltrans{V'_k}}$.

      It follows that this is in $\seltrans{\actor{a}{E[N_l \{ \seltrans{V'_k} / x_l\}]}{\seltrans{\seq{V_1}} \cdot \seltrans{\seq{V_2}}}}$, concluding this subcase.

      \underline{Subcase 2}:
      In this subcase, we examine the case where the save queue is initially non-empty.
      Pick some arbitrary $j \in 1 \ldots n$. Let:

      \begin{itemize}
        \item $\seltrans{\seq{W_1}} = \seltrans{V_1} :: \ldots :: \seltrans{V_j} :: \emptylist$
        \item $\seltrans{\seq{W_2}} = \seltrans{V_{j+1}} \cdot \ldots \cdot \seltrans{V_n}$
      \end{itemize}

      We have $\actor{a}{E[\metadef{find}(\seq{c}, \listterm{\seltrans{\seq{W_1}}})]}{\seltrans{\seq{W_2}}}$.
      We now have two possible scenarios: $k \le j$, or $k > j$.

      Expanding \metadef{find}, we have:

      \[
        \actor{a}{E[
          \begin{aligned}[t]
          & (\rectwo{\textit{findLoop}}{\textit{ms}} \\
          & \quad \letintwo{(mb_1, mb_2)}{ms} \\
          & \quad \caseofone{mb_2} \\
          & \qquad \emptylist{} \mapsto \transloop{\seq{c}}{\textit{mb}_1} \\
          & \qquad \listcons{x}{mb_2'} \mapsto \\
          & \qquad \quad \efflettwo{mb'}{mb_1 \doubleplus mb_2'} \\
          & \qquad \quad \caseofone{x} \metadef{branches}(\seq{c}, \textit{mb}', \\
          & \qquad \qquad
          \begin{aligned}[t]
            & \lambda y . (\efflettwo{\textit{mb}_1'}{\textit{mb}_1 \doubleplus \listterm{y}} \\
            & \qquad \textit{findLoop} \app (\textit{mb}_1', \textit{mb}_2')) ) \})
            \app (\emptylist{}, \seltrans{\seq{W_1}})
          \end{aligned}
        \end{aligned}
      ]}{\seltrans{\seq{W_2}}}
      \]

      Define \metadef{fLoop} as follows:
      \[
        \metadef{fLoop}(\textit{mb}_1, \textit{mb}_2) \defeq
        \begin{aligned}[t]
          & (\rectwo{\textit{findLoop}}{\textit{ms}} \\
          & \quad \letintwo{(mb_1, mb_2)}{ms} \\
          & \quad \caseofone{mb_2} \\
          & \qquad \emptylist{} \mapsto \transloop{\seq{c}}{\textit{mb}_1} \\
          & \qquad \listcons{x}{mb_2'} \mapsto \\
          & \qquad \quad \efflettwo{mb'}{mb_1 \doubleplus mb_2'} \\
          & \qquad \quad \caseofone{x} \metadef{branches}(\seq{c}, \textit{mb}', \\
          & \qquad \qquad
          \begin{aligned}[t]
            & \lambda y . (\efflettwo{\textit{mb}_1'}{\textit{mb}_1 \doubleplus \listterm{y}} \\
            & \qquad \metadef{fLoop} \app (\textit{mb}_1', \textit{mb}_2')) ) \})
            \app (\textit{mb}_1, \textit{mb}_2)
          \end{aligned}
        \end{aligned}
      \]

      After reducing the application of the recursive function and the pair deconstruction we have:

      \underline{Subsubcase} $k \le j$:

      Our reasoning is similar to in the first subcase.
      We appeal to repeated applications of Lemma~\ref{lem:branches-matchesany} and arrive at

      \[
        \actor{a}{\seltrans{E}[\metadef{fLoop}(\seltrans{\seq{W'_1}}, \seltrans{V_k} :: \seltrans{\seq{W''_1}})]}{\seq{\seltrans{W_2}}}
      \]

      where $\seltrans{\seq{W'_1}} = \seltrans{V_1} :: \ldots :: \seltrans{V_{k - 1}}$ and
      $\seltrans{\seq{W''_2}} = \seltrans{V_{k + 1}} :: \ldots :: \seltrans{V_{j}}$.

      Expanding, we have
      \[
        \actor{a}{E[
          \begin{aligned}[t]
          & (\rectwo{\textit{findLoop}}{\textit{ms}} \\
          & \quad \letintwo{(mb_1, mb_2)}{ms} \\
          & \quad \caseofone{mb_2} \\
          & \qquad \emptylist{} \mapsto \transloop{\seq{c}}{\textit{mb}_1} \\
          & \qquad \listcons{x}{mb_2'} \mapsto \\
          & \qquad \quad \efflettwo{mb'}{mb_1 \doubleplus mb_2'} \\
          & \qquad \quad \caseofone{x} \metadef{branches}(\seq{c}, \textit{mb}', \\
          & \qquad \qquad
          \begin{aligned}[t]
            & \lambda y . (\efflettwo{\textit{mb}_1'}{\textit{mb}_1 \doubleplus \listterm{y}} \\
            & \qquad \textit{findLoop} \app (\textit{mb}_1', \textit{mb}_2')) ) \})
            \app (\seltrans{\seq{W'_1}}, \seltrans{V_k} :: \seltrans{\seq{W''_1}})
          \end{aligned}
        \end{aligned}
      ]}{\seltrans{\seq{W_2}}}
      \]

      Reducing the recursive function application, pair split, and case expression, we have:
      \[
        \actor{a}{E[\caseof{\seltrans{V_k}}{\metadef{branches}(\seq{c}, \listappend{\seltrans{\seq{W'_1}}}{\seltrans{\seq{W''_1}}}, \textit{default}}]}{\seltrans{\seq{W_2}}}
      \]

      where
      \[
        \textit{default} =
        \begin{aligned}[t]
            & \lambda y . (\efflettwo{\textit{mb}_1'}{\seltrans{\seq{W'_1}} \doubleplus \listterm{y}} \\
            & \qquad \metadef{fLoop} \app (\textit{mb}_1', \seltrans{\seq{W''_1}}) ) \})
          \end{aligned}
      \]

      Now, by Lemma~\ref{lem:patbranches-match}, we have

      \[
        \actor{a}{\seltrans{E}[(\seltransterm{N_l}{(\listappend{\seltrans{\seq{W'_1}}}{\seltrans{\seq{W''_1}}})) \{ \seltrans{V'_k} / x_l \}}]}{\seltrans{\seq{W_2}}}
      \]

      which is in
      \[
        \seltrans{\actor{a}{E[N_l \{ V'_k / x_l \}]}{\seq{V_1} \cdot \seq{V_2}}}
      \]

      as required.

      \underline{Subsubcase} $k > j$:

      As before, we appeal to repeated applications of Lemma~\ref{lem:branches-matchesany}, this time
      arriving at:

      \[
        \actor{a}{\seltrans{E}[\metadef{fLoop}(\seltrans{\seq{W_1}}, \emptylist)]}{\seq{\seltrans{W_2}}}
      \]

      Expanding, we have:

      \[
        \actor{a}{E[
          \begin{aligned}[t]
          & (\rectwo{\textit{findLoop}}{\textit{ms}} \\
          & \quad \letintwo{(mb_1, mb_2)}{ms} \\
          & \quad \caseofone{mb_2} \\
          & \qquad \emptylist{} \mapsto \transloop{\seq{c}}{\textit{mb}_1} \\
          & \qquad \listcons{x}{mb_2'} \mapsto \\
          & \qquad \quad \efflettwo{mb'}{mb_1 \doubleplus mb_2'} \\
          & \qquad \quad \caseofone{x} \metadef{branches}(\seq{c}, \textit{mb}', \\
          & \qquad \qquad
          \begin{aligned}[t]
            & \lambda y . (\efflettwo{\textit{mb}_1'}{\textit{mb}_1 \doubleplus \listterm{y}} \\
            & \qquad \textit{findLoop} \app (\textit{mb}_1', \textit{mb}_2')) ) \})
            \app (\seltrans{\seq{W_1}}, \emptylist)
          \end{aligned}
        \end{aligned}
      ]}{\seltrans{\seq{W_2}}}
      \]

      Evaluating the application of the recursive function, the pair split, and
      the case statement, we have:

      \[
        \actor{a}{E[\transloop{\seq{c}}{\seltrans{\seq{W_1}]}]}}{\seltrans{\seq{W_2}}}
      \]

      Expanding $\metadef{loop}$, we have:

      \[
        \actor{a}{\seltrans{E}[
           \begin{aligned}[t]
                & \quad (\rectwo{\textit{recvLoop}}{\textit{mb}} \\
                & \qquad \efflettwo{x}{\lactrecv} \\
                & \qquad \caseofone{x} \\
                & \begin{aligned}[t]
                  & \qquad \quad \metadef{patBranches}(\seq{c}, \textit{mb}, \textit{default}) \: \cdot \\
                  & \qquad \quad \metadef{defaultBranches}(\seq{c}, \textit{mb}, \textit{default}) \\
         & \qquad \}) \app \seltrans{\seq{W_1}} \\
                \end{aligned}
              \end{aligned}
            ]}{\seltrans{\seq{W_2}}}
      \]

      with
      \[
        \textit{default} = \lambda y . \efflet{\textit{mb}'}{\listappend{\textit{mb}}{\listterm{y}}}{\textit{recvLoop} \app \textit{mb}'}
      \]

      Again by repeated applications of Lemma~\ref{lem:branches-matchesany}, we arrive at

      \[
        \actor{a}{\seltrans{E}[\transloop{\seq{c}}{\listappend{\seltrans{\seq{W_1}}}{\seltrans{\seq{W'_2}}}}]}{\seltrans{V_k} \cdot \seltrans{\seq{W''_2}}}
      \]

      where $\seltrans{\seq{W'_2}} = \seltrans{V_{j + 1}} :: \ldots :: \seltrans{V_{k - 1}}$,
      and $\seltrans{\seq{W''_2}} = \seltrans{V_{k + 1}} \cdot \ldots \cdot \seltrans{V_n}$.

      Now, expanding again, we have:

      \[
         \actor{a}{\seltrans{E}[
           \begin{aligned}[t]
                & \quad (\rectwo{\textit{recvLoop}}{\textit{mb}} \\
                & \qquad \efflettwo{x}{\lactrecv} \\
                & \qquad \caseofone{x} \\
                & \begin{aligned}[t]
                  & \qquad \quad \metadef{patBranches}(\seq{c}, \textit{mb}, \textit{default}) \: \cdot \\
                  & \qquad \quad \metadef{defaultBranches}(\seq{c}, \textit{mb}, \textit{default}) \\
         & \qquad \}) \app (\listappend{\seltrans{\seq{W_1}}}{\seltrans{\seq{W'_2}}}) \\
                \end{aligned}
              \end{aligned}
            ]}{\seltrans{V_k} \cdot \seltrans{\seq{W''_2}}}
      \]

      reducing the recursive function, receive, and case expression, we have:

      \[
    \actor{a}{\seltrans{E}[
            \caseof{\seltrans{V_k}}{\metadef{branches}(\seq{c}, (\listappend{\seltrans{\seq{W_1}}}{\seltrans{\seq{W'_2}}}), \textit{default} )}
            ]}{\seltrans{\seq{W''_2}}}
      \]

      Now, by Lemma~\ref{lem:patbranches-match}, we have

      \[
        \actor{a}{\seltrans{E}[(\seltransterm{N_l}{(\listappend{\seltrans{\seq{W_1}}}{\seltrans{\seq{W'_1}}})) \{ \seltrans{V'_k} / x_l \}}]}{\seltrans{\seq{W''_2}}}
      \]

      which is in
      \[
        \seltrans{\actor{a}{E[N_l \{ V'_k / x_l \}]}{\seq{V_1} \cdot \seq{V_2}}}
      \]

      as required.
\end{proof}

\end{document}